\theoremstyle{definition}
\newtheorem{theorem}{Theorem}[section]
\newtheorem{lemma}[theorem]{Lemma}
\theoremstyle{remark}
\title{Optimizing Generalized PageRank Methods for Seed-Expansion Community Detection}
\author{
  Pan Li \\
  UIUC \\
  \texttt{panli2@illinois.edu} \\
  \And
  Eli Chien \\
  UIUC \\
   \texttt{ichien3@illinois.edu} \\
  \And
    Olgica Milenkovic \\
  UIUC \\
  \texttt{milenkov@illinois.edu} \\
}
\begin{document}
\maketitle
\begin{abstract}
Landing probabilities (LP) of random walks (RW) over graphs encode rich information regarding graph topology. Generalized PageRanks (GPR), which represent weighted sums of LPs of RWs, utilize the discriminative power of LP features to enable many graph-based learning studies. Previous work in the area has mostly focused on evaluating suitable weights for GPRs, and only a few studies so far have attempted to derive the optimal weights of GPRs for a given application. We take a fundamental step forward in this direction by using random graph models to better our understanding of the behavior of GPRs. In this context, we provide a rigorous non-asymptotic analysis for the convergence of LPs and GPRs to their mean-field values on edge-independent random graphs. Although our theoretical results apply to many problem settings, we focus on the task of seed-expansion community detection over stochastic block models. There, we find that the predictive power of LPs decreases significantly slower than previously reported based on asymptotic findings. Given this result, we propose a new GPR, termed Inverse PR (IPR), with LP weights that increase for the initial few steps of the walks. Extensive experiments on both synthetic and real, large-scale networks illustrate the superiority of IPR compared to other GPRs for seeded community detection. \footnote{Pan Li and Eli Chien contribute equally to this work.}
\end{abstract}
\vspace{-0.4cm}
\section{Introduction}
PageRank (PR), an algorithm originally proposed by Page et al. for ranking web-pages~\cite{page1999pagerank} has found many successful applications, including community detection~\cite{fortunato2010community,whang2013overlapping}, link prediction~\cite{liben2007link} and recommender system design~\cite{massa2007trust,abbassi2007recommender}. The PR algorithm involves computing the stationary distribution of a Markov process by starting from a seed vertex and then performing either a one-step of random walk (RW) to the neighbors of the current seed or jumping to another vertex according to a predetermined probability distribution. The RW aids in capturing topological information about the graph, while the jump probabilities incorporate modeling preferences~\cite{gleich2015pagerank}. A proper selection of the RW probabilities ensures that the stationary distribution induces an ordering of the vertices that may be used to determine the ``relevance'' of vertices or the structure of their neighborhoods.

Despite the wide utility of PR~\cite{gleich2015pagerank, baeza2006generalizing}, recent work in the field has shifted towards investigating various generalizations of PR. Generalized PR (GPR) values enable more accurate characterizations of vertex distances and similarities, and hence lead to improved performance of various graph learning techniques~\cite{kleinberg2018link}. GPR methods make use of arbitrarily weighted linear combinations of landing probabilities (LP) of RWs of different length, defined as follows. Given a seed vertex and another arbitrary vertex $v$ in the graph, the $k$-step LP of $v$, $x_v^{(k)}$, equals the probability that a RW starting from the seed lands at $v$ after $k$ steps; the GPR value for vertex $v$ is defined as $\sum_{k=0}^{\infty} \gamma_k x_v^{(k)},$ for some weight sequence $\{\gamma_k\}_{k\geq 0 }$. Certain GPR representations, such as personalized PR (PPR)\cite{jeh2003scaling} or heat-kernel PR (HPR)\cite{chung2007heat}, are associated with weight sequences chosen in a heuristic manner: PPR uses traditional PR weights, $\gamma_k = (1-\alpha)\alpha^k,$ for some $\alpha\in(0,1),$ and a seed set that captures locality constraints. On the other hand, HPR uses weights of the form $\gamma_k = \frac{h^k}{k!} e^{-h},$ for some $h>0$. A question that naturally arises is what are the provably near-optimal or optimal weights for a particular graph-based learning task.

Clearly, there is no universal approach for addressing this issue, and prior work has mostly reported comparative analytic or empirical studies for selected GPRs. As an example, for community detection based on seed-expansion (SE) where the goal is to identify a densely linked component of the graph that contains a set of a priori defined seed vertices, Chung~\cite{chung2009local} proved that the HPR method  produces communities with better conductance values than PPR~\cite{chung1997spectral}. Kloster and Gleich~\cite{kloster2014heat} confirmed this finding via extensive experiments over real world networks. Avron and Horesh~\cite{avron2015community} leveraged time-dependent PRs, a convolutional form of HPR and PPR~\cite{gleich2014dynamical}, and showed that this new PR can outperform HPR on a number of real network datasets. Another line of work considered adaptively learning the GPR weights given access to sufficiently many both within-community and out-of-community vertex labels~\cite{jiang2017aptrank,berberidis2018adaptive}.
Related studies were also conducted in other application domains such as web-ranking~\cite{baeza2006generalizing} and recommender system design~\cite{yin2010unified}.

Recently, Kloumann et al.~\cite{kloumann2017block} took a fresh look at the GPR-based seed-expansion community detection problem. They viewed LPs of different steps as features relevant to membership in the community of interest, and the GPRs as scores produced by a linear classifier that digests these features. A key observation in this setting is that the GPR weights have to be chosen with respect to the informativeness of these features. 
Based on the characterization of the mean-field values of the LPs over a modified stochastic block model (SBM)~\cite{holland1983stochastic}, Kloumann et al.~\cite{kloumann2017block} determined that PPR with a proper choice of the parameter $\alpha$ corresponds to the optimal classifier if only the first-order moments are available. Unfortunately, as the variance of the LPs was ignored, the performance of the PPR was shown to be sub-optimal even for synthetic graphs obeying the generative modeling assumptions used in~\cite{kloumann2017block}.

We report substantial improvements of the described line of work by characterizing the non-asymptotic behavior of the LPs over random graphs. More precisely, we derive non-asymptotic conditions for the LPs to converge to their mean-field values. Our findings indicate that in the non-asymptotic setting, the discriminative power of $k$-step LPs does not necessarily deteriorate as $k$ increases; this follows since our bounds on the variance decay even faster than the distance between the means of LPs within the same and across two different communities. 
We leverage this finding and propose new weights that suitably increase with the length of RWs for small values of $k$. This choice differs significantly from the geometrically decaying weights used in PPR, as suggested by~\cite{kloumann2017block}.

The reported results may also provide useful means for improving graph neural networks (GNN)~\cite{bruna2013spectral, defferrard2016convolutional,kipf2016semi} and their variants~\cite{hamilton2017inductive,velivckovic2017graph} for vertex classification tasks. Currently, the typical numbers of layers in graph neural networks is $2-3$, as such a choice offers the best empirical performance~\cite{kipf2016semi,hamilton2017inductive}. More layers may over-smooth vertex features and thus provide worse results. However, in this setting, long paths in the graphs may not be properly utilized, as our work demonstrates that these paths may have strong discriminative power for community detection. Hence a natural research direction of research regarding GNNs is to investigate how to leverage long paths over graphs without over-smoothing the vertex features. Concurrent to this work, several empirical studies were performed to address the same problem. The work in \cite{klicpera2018predict,bojchevski2019pagerank} used a decoupling non-linear transformation of features and PR propagation over graphs, while \cite{klicpera2019diffusion} used GNNs over graphs that are transformed based on GPRs. 

Our contributions are multifold. We derive the first non-asymptotic bound of the distance between LP vectors to their mean-field values over random graphs. This bound allows us to better our understanding of a class of GPR-based community detection approaches. For example, it explains why PPR with a parameter $\alpha \simeq 1$ often achieves good community detection performance~\cite{leskovec2009community} and why HPR statistically outperforms PPR for community detection, which matches the combinatorial demonstration proposed previously~\cite{chung2009local}. Second, we describe the first non-asymptotic characterization of GPRs with respect to their mean-field values over edge-independent random graphs. The obtained results improve the previous analysis of standard PR methods~\cite{avrachenkov2015pagerank,avrachenkov2018mean} as one needs fewer modeling assumptions and arrives at more general conclusions. Third, we introduce a new PR-type classifier for SE community detection, termed inverse PR (IPR). IPR carefully selects the weights for the first several steps of the RW by taking into account the variance of the LPs, and offers significantly improved SE community detection performance compared to canonical PR diffusions (such as HPR and PPR) over SBMs. Fourth, we present extensive experiments for detecting seeded communities in \emph{real large-scale networks} using IPR. Although real world networks do not share the properties of SBMs used in our analysis, IPR still significantly outperforms both HPR and PPR for networks with non-overlapping communities and offers performance improvement over two examined networks with overlapping community structures.
\section{Preliminaries}
We start by formally introducing LPs, GPR methods, random graphs and other relevant notions.  

\textbf{Generalized PageRank.} Consider an undirected graph $G = (V, E)$ with $|V| = n$. Let $A$ be the adjacency matrix, and let $D$ be the diagonal degree matrix of $G$. The RW matrix of $G$ equals $W = AD^{-1}$. Let $\{\lambda_i\}_{i\in[n]}$ be the eigenvalues of $W$ ordered as $1= \lambda_1 \geq \lambda_2 \geq ... \geq \lambda_n \geq -1$. Furthermore, let $d_{\min}$ and $d_{\max}$ stand for the minimum and maximum degree of vertices in $V$, respectively. A distribution over the vertex set $V$ is a mapping $x: V\rightarrow \mathbb{R}_{[0,1]}$ such that $\sum_{v\in V} x_v = 1,$ with $x_v$ denoting the probability of vertex $v$. Given an initial distribution $x^{(0)}$, the \textbf{$k$-step LPs} equal 
$x^{(k)} = W^k x^{(0)}.$
The \textbf{GPRs} are parameterized by a sequence of nonnegative weights $\gamma=\{\gamma_k\}_{k\geq 0}$ and an initial potential $x^{(0)}$, $pr(\gamma, x^{(0)}) = \sum_{k=0}^{\infty} \gamma_k x^{(k)} =  \sum_{k=0}^{\infty} \gamma_k W^kx^{(0)}.$
For an in-depth discussion of PageRank methods, the interested reader is referred to the review~\cite{gleich2015pagerank}. In some practical GPR settings, the bias caused by varying degrees is compensated for through degree normalization~\cite{andersen2006local}. The \textbf{$k$-step degree-normalized LPs (DNLP)} are defined as $z^{(k)} = \left(\sum_{v\in V} d_v\right) D^{-1}x^{(k)}.$

\textbf{Random graphs.} Throughout the paper, we assume that the graph $G$ is sampled according to a probability distribution $P$. The mean-field of $G$ with respect to $P$ is an undirected graph $\bar{G}$ with adjacency matrix $\bar{A} = \mathbb{E}[A]$, where the expectation is taken with respect to $P$. Similarly, the mean-field degree matrix is defined as $\bar{D} = \mathbb{E}[D]$ and mean-field random walk matrix as $\bar{W} = \bar{A}\bar{D}^{-1}$. The mean-field GPR reads as 
$\bar{pr}(\gamma, x^{(0)}) = \sum_{k=0}^{\infty} \gamma_k \bar{x}^{(k)} =  \sum_{k=0}^{\infty} \gamma_k \bar{W}^kx^{(0)}.$
We also use the notation $\bar{z}^{(k)}, \bar{d}_{\min},$ and $\bar{d}_{\max}$ for the mean-field counterparts of $z^{(k)}$, $d_{\min}$, and $d_{\max}$, respectively. 

For the convergence analysis, we consider a sequence of random graphs $\{G^{(n)}\}_{n\geq 0}$ with increasing size $n$, sampled using a corresponding sequence of distributions $\{P^{(n)}\}_{n\geq 0}$. For a given initial distribution $\{x^{(0,n)}\}_{n\geq 0}$ and weights $\{\gamma^{(n)}\}_{n\geq 0}$, we aim to analyze the conditions under which the LPs $x^{(k,n)}$ and GRPs $pr(\gamma^{(n)}, x^{(0,n)})$ converge to their corresponding mean-field counterparts $\bar{x}^{(k,n)}$ and $\bar{pr}(\gamma^{(n)}, x^{(0,n)})$, respectively. 
We say that an event occurs \emph{with high probability} if it has  probability at least $1- n^{-c},$ for some constant $c$. 
If no confusion arises, we omit $n$ from the subscript. We also use $\|x\|_p = \left(\sum_{v\in V} |x_v|^p\right)^{\frac{1}{p}}$ to measure the distance between LPs. 

\textbf{Edge-independent random graphs and SBMs.} Edge-independent models include a wide range of random graphs, such as Erd\H{o}s-R\'{e}nyi graphs~\cite{erdds1959random}, Chung-Lu models~\cite{aiello2001random}, stochastic block models (SBM)~\cite{holland1983stochastic} and degree corrected SBMs~\cite{karrer2011stochastic}. In an edge-independent model, for each pair of vertices $u, v\in V$, an edge $uv$ is drawn according to the Bernoulli distribution with parameter $p_{uv}\in[0,1]$ and the draws for different edges are performed independently. Hence, $\mathbb{E}[A_{uv}] = p_{uv},$ and $A_{uv}, A_{u'v'}$ are independent if $uv, u'v'$ are different unordered pairs.

Some of our subsequent discussion focuses on two-block SBMs. In this setting, we let $C_1$, $C_0\subset V$ denote the two blocks, such that $|C_1|=n_1$ and $|C_0|=n_0$. For any pair of vertices from the same block $u,v \in C_i$, we set $p_{uv} = p_i,$ for some $p_i \in (0,1)$, $i\in \{0,1\}$. Note that we allow self loops, i.e. we allow $u = v$, which makes for simpler notation without changing our conclusions. For pairs $uv$ such that $u\in C_1$ and $v\in C_0$, we set $p_{uv} = q,$ for some $q \in (0,1)$. A two-block SBM in this setting is parameterized by $(n_1, p_1, n_0, p_0, q)$.
\section{Mean-field Convergence Analysis of LPs and GPRs}\label{sec:convanalysis}
In what follows, we characterize the conditions under which $x^{(k)}$ and $pr(\gamma, x^{(0)})$ converge to their mean-field counterparts $\bar{x}^{(k)}$ and $\bar{pr}(\gamma, x^{(0)}),$ respectively. The derived results enable a subsequent analysis of the variance of LPs over SBM, as outlined in the sections to follow (all proofs are postponed to Section~\ref{app:proof} of the Supplement). Note that since GPRs are linear combinations of LPs, the convergence properties of $x^{(k)}$ may be used to analyze  the convergence properties of $pr(\gamma, x^{(0)})$. More specifically, given a sequence of graphs of increasing sizes, and $G^{(n)}\sim P^{(n)}$, the first question of interest is to derive non-asymptotic bounds for $\|x^{(k)} - \bar{x}^{(k)}\|_1,$ as both $x^{(k)}, \bar{x}^{(k)}$ have unit $\ell_1$-norms\footnote{In some cases, both $\|x^{(k)}\|_2$ and $\|\bar{x}^{(k)}\|_2$ naturally equal to $o(1)$, which leads to the obvious, yet loose bound $\|x^{(k)} - \bar{x}^{(k)}\|_2 \leq \|x^{(k)}\|_2 + \|\bar{x}^{(k)}\|_2 = o(1)$.}. The following lemma shows that under certain conditions, one cannot expect convergence in the $\ell_1$ norm for arbitrary values of $k$.

\begin{lemma}\label{lem:negeg}
If there exists a vertex $v$ that may depend on $n$ such that $\bar{d}_{v} = \omega(1)$ and $\bar{d}_v \leq (1- \epsilon) n,$ for some $\epsilon > 0,$ setting $x^{(0)} = 1_{v}$ gives $\lim_{n\rightarrow \infty} \mathbb{P}\left[\|x^{(1)} - \bar{x}^{(1)}\|_1 \geq \epsilon\right]  = 1.$
\end{lemma}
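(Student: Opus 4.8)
The plan is to reduce the claim to a one-step computation plus an elementary concentration bound. With $x^{(0)} = 1_v$ and $W = AD^{-1}$ we have $x^{(1)} = W 1_v$, whose $u$-th entry is $A_{uv}/d_v$; that is, $x^{(1)}$ is the uniform distribution over the random neighborhood $N(v) := \{u : A_{uv} = 1\}$ of $v$, while $\bar{x}^{(1)} = \bar{W}1_v$ has entries $\bar{x}^{(1)}_u = p_{uv}/\bar{d}_v$. Both are probability vectors, so $\|x^{(1)} - \bar{x}^{(1)}\|_1 = 2\max_{S \subseteq V}\bigl(\bar{x}^{(1)}(S) - x^{(1)}(S)\bigr)$; choosing $S = V\setminus N(v)$, on which $x^{(1)}$ vanishes, gives
\[
\|x^{(1)} - \bar{x}^{(1)}\|_1 \;\ge\; 2\,\bar{x}^{(1)}(V\setminus N(v)) \;=\; \frac{2}{\bar{d}_v}\sum_{u \in V}(1 - A_{uv})\,p_{uv}.
\]

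Next I would analyze $Y := \sum_{u}(1-A_{uv})p_{uv}$. By edge independence its summands are independent and lie in $[0,1]$, with mean $\mathbb{E}[Y] = \sum_u p_{uv}(1-p_{uv}) =: \mu_v$ and variance at most $\mu_v$. The deterministic fact to establish is $\mu_v \ge c\,\bar{d}_v$ for some constant $c = c(\epsilon)>0$: from $\mu_v = \bar{d}_v - \sum_u p_{uv}^2 \ge (1-\max_u p_{uv})\bar{d}_v$ and the edge probabilities being bounded away from $1$ in the regimes considered (which, together with $\bar d_v = \omega(1)$, is how the hypothesis $\bar d_v \le (1-\epsilon)n$ is used — it rules out $v$ having essentially only forced neighbors), one gets $\mu_v \ge c\,\bar{d}_v = \omega(1)$. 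Chebyshev's inequality then yields $Y \ge \mu_v/2$ with probability $1 - O(1/\mu_v) = 1-o(1)$ (Bernstein's inequality would give more, but only probability tending to $1$ is required). Substituting into the display, with probability $1-o(1)$ we obtain $\|x^{(1)} - \bar{x}^{(1)}\|_1 \ge \mu_v/\bar{d}_v \ge c$, which is the claim (with $c$ playing the role of the stated $\epsilon$).

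The crux is the deterministic estimate $\mu_v = \Omega(\bar{d}_v)$. Some non-degeneracy of the $p_{uv}$'s is unavoidable: if $v$ had $(1-\epsilon)n$ forced neighbors and no other potential neighbors, then $x^{(1)} = \bar{x}^{(1)}$ almost surely. Within the standard edge-independent regime used throughout the paper — where $\bar d_v = \omega(1)$ and $\bar d_v \le (1-\epsilon)n$ come with edge probabilities not pathologically close to $1$ (e.g.\ the SBM parameters $p_0,p_1,q\in(0,1)$) — the estimate goes through, and the remaining steps (the total-variation lower bound and the Chebyshev concentration of $Y$) are routine.
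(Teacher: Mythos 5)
Your argument is correct under the extra non-degeneracy assumption you state, and it takes a genuinely different route from the paper's. The paper expands $\|x^{(1)}-\bar{x}^{(1)}\|_1=\sum_u |A_{uv}/d_v-A_{uv}/\bar{d}_v+A_{uv}/\bar{d}_v-p_{uv}/\bar{d}_v|$, lower-bounds it by $\bar{d}_v^{-1}\sum_u|A_{uv}-p_{uv}|-\bar{d}_v^{-1}|d_v-\bar{d}_v|$, and applies Bernstein's inequality to both terms, using $\mathbb{E}|A_{uv}-p_{uv}|=2p_{uv}(1-p_{uv})$. Your total-variation bound on the set $V\setminus N(v)$, followed by Chebyshev applied to $Y=\sum_u(1-A_{uv})p_{uv}$, reaches essentially the same quantity $\sum_u p_{uv}(1-p_{uv})$ more directly and dispenses with the degree-fluctuation term entirely; since only probability tending to one is needed, Chebyshev suffices. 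Both arguments then hinge on the same deterministic estimate $\sum_u p_{uv}(1-p_{uv})=\Omega(\bar{d}_v)$, and here your treatment is the more careful one. You derive it from $\max_u p_{uv}\le 1-\delta$, which is not among the stated hypotheses, and you correctly observe that some such assumption is unavoidable: your example with $p_{uv}\in\{0,1\}$ satisfies $\bar{d}_v=\omega(1)$ and $\bar{d}_v\le(1-\epsilon)n$ yet gives $x^{(1)}=\bar{x}^{(1)}$ almost surely. The paper instead invokes ``$\sum_u p_{uv}^2\le(\sum_u p_{uv})^2/n=\bar{d}_v^2/n$'' to conclude $\sum_u p_{uv}(1-p_{uv})\ge\bar{d}_v(1-\bar{d}_v/n)\ge\epsilon\bar{d}_v$ from the hypothesis alone; but that inequality is the reverse of Cauchy--Schwarz and holds only when all $p_{uv}$ are equal (it already fails for a two-block SBM with $p_1\ne q$). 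Your counterexample thus pinpoints a genuine gap in the paper's own proof: as literally stated the lemma requires either your boundedness assumption on the $p_{uv}$ or a restriction to constant edge probabilities.

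Two minor quantitative points. First, your chain of inequalities yields $\|x^{(1)}-\bar{x}^{(1)}\|_1\ge 1-\max_u p_{uv}$ with probability $1-o(1)$, which need not dominate the specific $\epsilon$ appearing in the statement; this affects only the constant, not the substance. Second, for $x^{(1)}$ to be a well-defined probability vector you need $d_v\ge 1$, which holds with probability $1-o(1)$ since $\bar{d}_v=\omega(1)$ (the paper's proof has the same implicit requirement), so this should be noted but is harmless.
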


Consequently, we start with an upper bound for $\|x^{(k)} - \bar{x}^{(k)}\|_2$. Then, we provide conditions that ensure that $\|x^{(k)} - \bar{x}^{(k)}\|_2= o(\sqrt{\frac{1}{n}})$. As $\|x^{(k)} - \bar{x}^{(k)}\|_1\leq \sqrt{n}\|x^{(k)} - \bar{x}^{(k)}\|_2,$ we subsequently arrive at necessary conditions for convergence in the $\ell_1$-norm. 
The novelty of our proof technique is to use mixing results for RWs to characterize the upper bound for the convergence of landing probabilities for each $k$. The results establish uniform convergence of GPRs as long as $\sum_k \gamma_k <\infty$. This finding improves the results in~\cite{avrachenkov2015pagerank, avrachenkov2018mean,chen2017generalized} for GPRs with weights $\gamma_k$ that scale as $O(c^k),$ where $c\in(0,1)$ denotes the damping factor. 

Our first relevant results are non-asymptotic bounds for the $\ell_2$-distance between LPs and their mean-field values. 
The obtained bounds lead to non-asymptotic bounds for the $\ell_2$-distance between GPRs and their mean-field values, described in Lemma~\ref{lem:ub}. Lemma~\ref{lem:ub} is then used to derive conditions for convergence of LPs and GPRs in the $\ell_1$-distance, summarized in Theorems~\ref{thm:rwconv} and~\ref{thm:prub}, respectively.  
\begin{lemma}\label{lem:ub}
Let $\bar{\lambda} =  \max\{|\bar{\lambda}_2|, |\bar{\lambda}_n|\}$. Suppose that $\bar{d}_{\min} = \omega(\log n)$. Then, with high probability, and for some constants $C_1,\, C_2,\, C_3$ that do not depend on $n$ or $k$, one has \vspace{-0.1cm}
\begin{align} 
\frac{\|x^{(k)} - \bar{x}^{(k)}\|_2}{\|x^{(0)}\|_2} \leq  C_1 \sqrt{\frac{\log n}{n \bar{d}_{\min}}} \frac{1}{\|x^{(0)}\|_2} + C_2 k\left(\bar{\lambda} + C_3\sqrt{\frac{\log n}{\bar{d}_{\min}}}   \right)^{k-1}\sqrt{\frac{\bar{d}_{\max}\log n}{\bar{d}^2_{\min}}}. \label{eq:ub1} 
\end{align}
Moreover, let $g(\gamma, \bar{\lambda}, \bar{d}_{\min}) = \sum_{k\geq 1}\gamma_k k\left(\bar{\lambda} + C_3\sqrt{\frac{\log n}{\bar{d}_{\min}}}   \right)^{k-1} $. Then,  \vspace{-0.1cm}
\begin{align} 
\frac{\|pr(\gamma, x^{(0)}) - \bar{pr}(\gamma, x^{(0)})\|_2}{\|x^{(0)}\|_2} \leq & C_1 \sqrt{\frac{\log n}{n \bar{d}_{\min}}} \frac{1}{\|x^{(0)}\|_2}   +  C_2 g(\gamma, \bar{\lambda}, \bar{d}_{\min})\sqrt{\frac{\bar{d}_{\max}\log n}{\bar{d}^2_{\min}}}.  \label{eq:ub3} 
\end{align}
\end{lemma}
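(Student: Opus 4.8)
\emph{Proof strategy.} The plan is to prove the $\ell_2$ bound \eqref{eq:ub1} for each fixed $k$ on a single high-probability event that does \emph{not} depend on $k$, and then obtain the GPR bound \eqref{eq:ub3} simply by summing the per-step estimates against $\gamma_k$. The starting point is the symmetrically normalized walk matrix $\mathcal{W} = D^{-1/2}AD^{-1/2}$, which satisfies $W = D^{1/2}\mathcal{W}D^{-1/2}$ and is symmetric with eigenvalues $1 = \lambda_1 \ge \cdots \ge \lambda_n$ and top unit eigenvector $\sqrt{d}/\sqrt{\mathrm{vol}}$, where $\mathrm{vol} = \sum_v d_v$. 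Writing $\mathcal{W}^k = \sqrt{d}\sqrt{d}^{\top}/\mathrm{vol} + \mathcal{W}_\perp^k$ for $k \ge 1$, with $\|\mathcal{W}_\perp^k\|_2 = \lambda^k$ and $\lambda := \max\{|\lambda_2|,|\lambda_n|\}$, then conjugating by $D^{\pm 1/2}$ and using that $x^{(0)}$ is a probability vector gives the exact identity $x^{(k)} = \pi + D^{1/2}\mathcal{W}_\perp^k D^{-1/2}x^{(0)}$, where $\pi = d/\mathrm{vol}$ is the stationary distribution of $W$; the same holds for the mean-field quantities (the case $k = 0$ is trivial since $x^{(0)} = \bar{x}^{(0)}$). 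Subtracting,
\begin{equation*}
x^{(k)} - \bar{x}^{(k)} = (\pi - \bar{\pi}) + \bigl(D^{1/2}\mathcal{W}_\perp^k D^{-1/2} - \bar{D}^{1/2}\bar{\mathcal{W}}_\perp^k \bar{D}^{-1/2}\bigr)x^{(0)},
\end{equation*}
and the two terms of \eqref{eq:ub1} will come from bounding the two summands.

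For the first summand I would use degree concentration: since $\bar{d}_{\min} = \omega(\log n)$, a Chernoff/union bound gives $|d_v - \bar{d}_v| \le C\sqrt{\bar{d}_v\log n}$ for all $v$ at once w.h.p. (also ensuring all degrees are positive and the graph is connected), and Bernstein's inequality gives $|\mathrm{vol} - \overline{\mathrm{vol}}| \le C\sqrt{\overline{\mathrm{vol}}\log n}$. Writing $\pi - \bar{\pi} = (d - \bar{d})/\mathrm{vol} - \bar{\pi}(\mathrm{vol} - \overline{\mathrm{vol}})/\mathrm{vol}$, bounding $\|d-\bar{d}\|_2^2 \le C^2\log n\sum_v\bar{d}_v$ and using $\overline{\mathrm{vol}} \ge n\bar{d}_{\min}$ yields $\|\pi - \bar{\pi}\|_2 \le C_1\sqrt{\log n/(n\bar{d}_{\min})}$ (the second summand of $\pi-\bar\pi$ being lower order), which after dividing by $\|x^{(0)}\|_2$ is exactly the first term of \eqref{eq:ub1}; note it does not depend on $x^{(0)}$.

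The heart of the argument is the second summand. Let $\Lambda = D^{1/2}\bar{D}^{-1/2}$, a diagonal matrix with $\|\Lambda - I\|_2 \le C\sqrt{\log n/\bar{d}_{\min}}$ (hence $\|\Lambda^{-1}\|_2 = O(1)$ and $\|\Lambda^{-1} - I\|_2 = O(\sqrt{\log n/\bar{d}_{\min}})$) by the same degree concentration. Since $D^{1/2}\mathcal{W}_\perp^k D^{-1/2} - \bar{D}^{1/2}\bar{\mathcal{W}}_\perp^k \bar{D}^{-1/2} = \bar{D}^{1/2}\bigl(\Lambda\mathcal{W}_\perp^k\Lambda^{-1} - \bar{\mathcal{W}}_\perp^k\bigr)\bar{D}^{-1/2}$, with $\|\bar{D}^{1/2}\|_2 = \sqrt{\bar{d}_{\max}}$ and $\|\bar{D}^{-1/2}x^{(0)}\|_2 \le \|x^{(0)}\|_2/\sqrt{\bar{d}_{\min}}$, it suffices to show $\|\Lambda\mathcal{W}_\perp^k\Lambda^{-1} - \bar{\mathcal{W}}_\perp^k\|_2 \le Ck\bigl(\bar{\lambda} + C_3\sqrt{\log n/\bar{d}_{\min}}\bigr)^{k-1}\sqrt{\log n/\bar{d}_{\min}}$. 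I would split this as $\bigl(\Lambda\mathcal{W}_\perp^k\Lambda^{-1} - \mathcal{W}_\perp^k\bigr) + \bigl(\mathcal{W}_\perp^k - \bar{\mathcal{W}}_\perp^k\bigr)$; the first equals $(\Lambda - I)\mathcal{W}_\perp^k\Lambda^{-1} + \mathcal{W}_\perp^k(\Lambda^{-1} - I)$ and so is $O\bigl(\sqrt{\log n/\bar{d}_{\min}}\bigr)\lambda^k$ in operator norm. For the second, the telescoping identity $\mathcal{W}_\perp^k - \bar{\mathcal{W}}_\perp^k = \sum_{j=0}^{k-1}\mathcal{W}_\perp^{j}(\mathcal{W}_\perp - \bar{\mathcal{W}}_\perp)\bar{\mathcal{W}}_\perp^{k-1-j}$ gives $\|\mathcal{W}_\perp^k - \bar{\mathcal{W}}_\perp^k\|_2 \le \sum_{j=0}^{k-1}\lambda^j\|\mathcal{W}_\perp - \bar{\mathcal{W}}_\perp\|_2\,\bar{\lambda}^{k-1-j}$. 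Two ingredients close this: (i) a spectral concentration bound $\|\mathcal{W} - \bar{\mathcal{W}}\|_2 \le C_3\sqrt{\log n/\bar{d}_{\min}}$, proved from a bound on $\|A - \bar{A}\|_2$ (matrix Bernstein, or an $\varepsilon$-net/combinatorial argument, valid once $\bar{d}_{\min} = \omega(\log n)$) together with degree concentration to absorb the normalization; by Weyl's inequality this forces $\lambda \le \bar{\lambda} + C_3\sqrt{\log n/\bar{d}_{\min}} =: \mu$; and (ii) the same-order bound $\|\mathcal{W}_\perp - \bar{\mathcal{W}}_\perp\|_2 = O(\sqrt{\log n/\bar{d}_{\min}})$, which follows from (i) after checking $\|\sqrt{d}\sqrt{d}^{\top}/\mathrm{vol} - \sqrt{\bar{d}}\sqrt{\bar{d}}^{\top}/\overline{\mathrm{vol}}\|_2 = O(\sqrt{\log n/\bar{d}_{\min}})$, again by degree concentration. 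Since $\lambda \le \mu$ and $\bar{\lambda} \le \mu$, we get $\sum_{j=0}^{k-1}\lambda^j\bar{\lambda}^{k-1-j} \le k\mu^{k-1}$, hence $\|\Lambda\mathcal{W}_\perp^k\Lambda^{-1} - \bar{\mathcal{W}}_\perp^k\|_2 = O\bigl(k\mu^{k-1}\sqrt{\log n/\bar{d}_{\min}}\bigr)$; multiplying by the $\sqrt{\bar{d}_{\max}/\bar{d}_{\min}}$ factor gives the second term of \eqref{eq:ub1}.

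Finally, \eqref{eq:ub3} follows from $\|pr(\gamma,x^{(0)}) - \bar{pr}(\gamma,x^{(0)})\|_2 \le \sum_{k\ge0}\gamma_k\|x^{(k)} - \bar{x}^{(k)}\|_2$: the concentration events above form a fixed finite collection independent of $k$, so on their intersection (still w.h.p.) the estimate \eqref{eq:ub1} holds for every $k$ simultaneously, which is what allows the sum. The $(\pi - \bar{\pi})$-part contributes $(\sum_k\gamma_k)\|\pi - \bar{\pi}\|_2$, which under the usual GPR normalization $\sum_k\gamma_k \le 1$ is the first term of \eqref{eq:ub3}, while the $k$-dependent part contributes $C_2\bigl(\sum_{k\ge1}\gamma_k k\mu^{k-1}\bigr)\sqrt{\bar{d}_{\max}\log n/\bar{d}_{\min}^2}\,\|x^{(0)}\|_2 = C_2\,g(\gamma,\bar{\lambda},\bar{d}_{\min})\sqrt{\bar{d}_{\max}\log n/\bar{d}_{\min}^2}\,\|x^{(0)}\|_2$, as claimed. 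The step I expect to be the real obstacle is the spectral concentration input (i) and its perpendicular-space variant (ii): controlling the extreme eigenvalues of a sparse random matrix after random degree normalization is delicate — naive matrix concentration loses logarithmic factors and the regime $\bar{d}_{\max} \gg \bar{d}_{\min}$ requires extra care — and the exponent base $\bar{\lambda} + C_3\sqrt{\log n/\bar{d}_{\min}}$ appearing in both \eqref{eq:ub1} and \eqref{eq:ub3} is entirely governed by it.
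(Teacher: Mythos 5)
Your proposal is correct and follows essentially the same route as the paper's proof: the same decomposition of $x^{(k)}-\bar{x}^{(k)}$ into a stationary-distribution part (your $\pi-\bar\pi$, the paper's $d_N-\bar d_N$) plus a degree-conjugated power of the Randi\'c matrix with its top eigencomponent removed (your $\mathcal{W}_\perp$ is the paper's $R_N$), the same telescoping of $\mathcal{W}_\perp^k-\bar{\mathcal{W}}_\perp^k$, and the same three concentration inputs (entrywise degree concentration, the rank-one correction $\|d_N^{1/2}(d_N^{1/2})^T-\bar d_N^{1/2}(\bar d_N^{1/2})^T\|_2$, and the Chung--Radcliffe-type bound $\|R-\bar R\|_2\lesssim\sqrt{\log n/\bar d_{\min}}$ combined with Weyl's inequality). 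The only cosmetic difference is that you conjugate by $\Lambda=D^{1/2}\bar D^{-1/2}$ where the paper applies a direct triangle-inequality split of $D^{1/2}R_N^kD^{-1/2}-\bar D^{1/2}\bar R_N^k\bar D^{-1/2}$; the spectral concentration step you flag as the obstacle is exactly the one the paper outsources to Theorem 2 of Chung--Radcliffe.
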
 
Lemma~\ref{lem:ub} allows us to establish the following conditions for $\ell_1-$convergence of the LPs. 
\begin{theorem}\label{thm:rwconv}
1) If $\|x^{(0)}\|_2 = O(\frac{1}{\sqrt{n}})$ and $\frac{\bar{d}_{\max} \log n}{\bar{d}_{\min}^2} =o(1)$, then for any sequence $\{k^{(n)}\}_{n\geq 0}$,  $\|x^{(k^{(n)})} - \bar{x}^{(k^{(n)})}\|_1  = o(1),\; w.h.p.$; 2) If $\bar{d}_{\min} = \omega(\log n)$ and $\bar{\lambda} < 1 - c,$  for some $c > 0$ and $n\geq n_0$ such that $\frac{c}{3} > C_4\sqrt{\frac{\log n }{\bar{d}_{\min}}}$ where $n_0, C_4$ are constants, then for any $x^{(0)}$ and sequence $\{k^{(n)}\}_{n\geq n_0}$ that satisfies $k^{(n)}\geq (\log n + \log \frac{\bar{d}_{\max}}{\bar{d}_{\min}})/c$, we have 
$\|x^{(k^{(n)})} - \bar{x}^{(k^{(n)})}\|_1  = o(1), \; w.h.p.$
\end{theorem}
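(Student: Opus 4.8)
The plan is to deduce both claims from the $\ell_2$ estimate~\eqref{eq:ub1} of Lemma~\ref{lem:ub} via $\|v\|_1\le\sqrt{n}\,\|v\|_2$. Multiplying~\eqref{eq:ub1} by $\sqrt{n}$ gives, w.h.p.,
\[
\|x^{(k)}-\bar{x}^{(k)}\|_1 \;\le\; C_1\sqrt{\frac{\log n}{\bar{d}_{\min}}} \;+\; C_2\bigl(\sqrt{n}\,\|x^{(0)}\|_2\bigr)\,k\Bigl(\bar{\lambda}+C_3\sqrt{\tfrac{\log n}{\bar{d}_{\min}}}\Bigr)^{k-1}\sqrt{\frac{\bar{d}_{\max}\log n}{\bar{d}_{\min}^2}}.
\]
In both parts $\bar{d}_{\min}=\omega(\log n)$: this is assumed in (2), and in (1) it is forced because $\bar{d}_{\max}\ge\bar{d}_{\min}$ turns $\bar{d}_{\max}\log n/\bar{d}_{\min}^2=o(1)$ into $\log n/\bar{d}_{\min}=o(1)$. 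Hence the first term is $o(1)$, and everything reduces to controlling the geometric second term.

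For (1), here $\sqrt{n}\,\|x^{(0)}\|_2=O(1)$ and $\sqrt{\bar{d}_{\max}\log n/\bar{d}_{\min}^2}=o(1)$, so it is enough to bound $k\,\mu^{k-1}$ uniformly over the sequence $\{k^{(n)}\}$, writing $\mu:=\bar{\lambda}+C_3\sqrt{\log n/\bar{d}_{\min}}$. Whenever $\bar{\lambda}$ stays bounded away from $1$ -- more generally, whenever $1-\bar{\lambda}\gg\sqrt{\bar{d}_{\max}\log n/\bar{d}_{\min}^2}$ -- one has $\mu<1$ and $\sup_{m\ge1}m\,\mu^{m-1}=O(1)$, closing the case. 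In the remaining near-critical regime I would split the range of $k$ at the mixing scale $\asymp 1/(1-\bar{\lambda})$. For $k$ below it, re-run the argument behind~\eqref{eq:ub1} while keeping a finer comparison of the subdominant eigenvalues: since $\lambda_1=\bar{\lambda}_1=1$ with Perron eigenvectors $\pi\propto d$ and $\bar{\pi}\propto\bar{d}$ preserved exactly, one can bound $|\lambda_i-\bar{\lambda}_i|=o(1/k)$ on the relevant range instead of the crude $O(\sqrt{\log n/\bar{d}_{\min}})$ appearing in~\eqref{eq:ub1}. For $k$ above it, bound $\|x^{(k)}-\bar{x}^{(k)}\|_1\le\|x^{(k)}-\pi\|_1+\|\pi-\bar{\pi}\|_1+\|\bar{\pi}-\bar{x}^{(k)}\|_1$, where the first and third summands are $o(1)$ once $k$ exceeds the mixing time; and $\|\pi-\bar{\pi}\|_1=o(1)$ by degree concentration, which is precisely where $\bar{d}_{\max}\log n/\bar{d}_{\min}^2=o(1)$ is used, controlling $\sum_v|d_v-\bar{d}_v|/\sum_u d_u$ and $|\sum_u(d_u-\bar{d}_u)|/\sum_u d_u$ via Bernstein.

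For (2), the spectral-gap hypothesis together with $\tfrac{c}{3}>C_4\sqrt{\log n/\bar{d}_{\min}}$ (taking the constant $C_4\ge C_3$) gives, for $n\ge n_0$, $\mu\le 1-c+\tfrac{c}{3}=1-\tfrac{2c}{3}=:r$; since $\bar{\lambda}\ge0$ forces $c<1$, $r\in(\tfrac13,1)$ is bounded away from $0$. Using $\|x^{(0)}\|_2\le\|x^{(0)}\|_1=1$, the geometric term is at most $C_2\sqrt{n}\,k\,r^{k-1}\sqrt{\bar{d}_{\max}\log n/\bar{d}_{\min}^2}$. At $k_0:=(\log n+\log(\bar{d}_{\max}/\bar{d}_{\min}))/c$ we have $r^{k_0-1}\le r^{-1}e^{-(2c/3)k_0}=r^{-1}(n\,\bar{d}_{\max}/\bar{d}_{\min})^{-2/3}$, so the term is $\lesssim k_0\,n^{-1/6}\sqrt{\log n}\,\bar{d}_{\min}^{-1/3}\bar{d}_{\max}^{-1/6}\le O(1)\,n^{-1/6}(\log n)^{3/2}=o(1)$, using $k_0=O(\log n)$ (as $\bar{d}_{\max}\le n$) and $\bar{d}_{\min},\bar{d}_{\max}\ge1$. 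For any admissible $k^{(n)}\ge k_0$, since $\mu^{k-1}\le r^{k-1}$ and $k\mapsto k\,r^{k-1}$ is decreasing once $k\ge 1/\log(1/r)=O(1/c)$, while $k_0\to\infty$, the bound only improves, so the claim holds for every such sequence. The main obstacle is thus part (1) in the near-critical regime $1-\bar{\lambda}\asymp\sqrt{\log n/\bar{d}_{\min}}$ with $k$ near the mixing time: there the geometric factor in~\eqref{eq:ub1} is no longer uniformly bounded, so obtaining $o(1)$ uniformly over all sequences $\{k^{(n)}\}$ requires supplementing it -- either by the sharper eigenvalue-perturbation estimate above (which does hold, thanks to the exact preservation of the Perron eigenvector under passing to the mean field) or by a dedicated long-walk argument -- and checking that the short-$k$ and long-$k$ regimes overlap without a gap is the real work.
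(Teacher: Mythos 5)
Your proposal follows exactly the paper's route: multiply the $\ell_2$ bound \eqref{eq:ub1} of Lemma~\ref{lem:ub} by $\sqrt{n}$, note that the first term becomes $C_1\sqrt{\log n/\bar{d}_{\min}}=o(1)$, and control the geometric second term. Part (2) is correct and essentially identical to the paper's computation --- the paper bounds $k^{(n)}(1-\tfrac{2c}{3})^{k^{(n)}-1}$ by $\tfrac{1}{c}(n\bar{d}_{\max}/\bar{d}_{\min})^{-1/2}$ where you use the exponent $-2/3$; both suffice --- and your explicit remark that $k\mapsto k\,r^{k-1}$ is eventually decreasing, so the bound persists for all $k^{(n)}\ge k_0$, makes precise a point the paper leaves implicit.

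On part (1), the obstruction you flag is genuine and is not resolved by the paper either: its entire proof of part (1) is the phrase ``it is easy to see,'' which tacitly operates in the regime you do handle. Indeed, with no hypothesis on $\bar{\lambda}$, the factor $k\mu^{k-1}$ with $\mu=\bar{\lambda}+C_3\sqrt{\log n/\bar{d}_{\min}}$ need not be controllable: since $\sqrt{\log n/\bar{d}_{\min}}\le\sqrt{\bar{d}_{\max}\log n/\bar{d}_{\min}^2}=:\delta_n$, one only knows $\mu\le 1+C_3\delta_n$, and for $k\gg\delta_n^{-1}$ the product $k\mu^{k-1}\delta_n$ can diverge, so Lemma~\ref{lem:ub} alone does not give $o(1)$ uniformly over all sequences $\{k^{(n)}\}$ when $1-\bar{\lambda}=O(\delta_n)$. (One small correction: in the intermediate case $1-\bar{\lambda}\gg\delta_n$ with $1-\bar{\lambda}\to 0$, your claim $\sup_{m}m\mu^{m-1}=O(1)$ is false --- the supremum is of order $1/(1-\mu)$ --- but the product with $\delta_n$ still vanishes, so your conclusion stands.) Your proposed repairs for the near-critical regime (a sharper eigenvalue-perturbation estimate below the mixing scale, and the triangle inequality through the stationary distributions $\pi,\bar{\pi}$ above it) are reasonable sketches but are not carried out; the paper does not carry out anything comparable either. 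In effect you have identified that part (1), as stated and as proved in the paper, appears to need an additional spectral condition such as $\bar{\lambda}<1-c$ --- precisely the condition that \emph{is} included in the analogous statement 1) of Theorem~\ref{thm:prub} --- for the Lemma~\ref{lem:ub}-based argument to close.
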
 
Theorem~\ref{thm:rwconv} asserts that either broadly spreading the seeds, i.e., $\|x^{(0)}\|_2 = O(\frac{1}{\sqrt{n}}),$ or allowing for the RW to progress until the mixing time, i.e., $k^{(n)}\geq (\log n + \log \frac{\bar{d}_{\max}}{\bar{d}_{\min}})/c$, ensures that the LPs converge in $\ell_1$-distance. One also has the following corresponding convergence result for GPRs.
\begin{theorem}\label{thm:prub}
1) If $\|x^{(0)}\|_2 = O(\frac{1}{\sqrt{n}}),$ $\frac{\bar{d}_{\max} \log n}{\bar{d}_{\min}^2} =o(1),$ and $\bar{\lambda} < 1 - c$ for some $c>0,$ then for any weight sequence $\{\gamma^{(n)} \}_{n\geq 0}$ such that $\sum_k \gamma_k^{(n)} < \infty$, one has $\|pr(\gamma^{(n)}, x^{(0)}) - \bar{pr}(\gamma^{(n)}, x^{(0)})\|_1  = o(1), \; w.h.p.$
2) If $\gamma_0^{(n)}/\sum_{k}\gamma_k^{(n)}  \geq C_5>0$ for some constant $C_5,$ $\bar{\lambda} < 1 - c$ for some 
$c>0,$ and $\frac{\bar{d}_{\max} \log n}{\bar{d}_{\min}^2} =o(1)$, then for any $x^{(0)}$ one has $\frac{\|pr(\gamma^{(n)}, x^{(0)}) - \bar{pr}(\gamma^{(n)}, x^{(0)})\|_2}{\|\bar{pr}(\gamma^{(n)}, x^{(0)})\|_2}  = o(1) \; w.h.p.$
3) If $\bar{d}_{\min} = \omega(\log n)$ and $g(\gamma^{(n)}, \bar{\lambda}^{(n)}) = \sum_{k\geq 1} \gamma_k^{(n)} k (\bar{\lambda}^{(n)}+C_6)^{k-1} = O(\sqrt{\frac{\bar{d}_{\min}}{n\bar{d}_{\max}}})$ for some constant $C_6>0$, then for any $x^{(0)}$ one has $\|pr(\gamma^{(n)}, x^{(0)}) - \bar{pr}(\gamma^{(n)}, x^{(0)})\|_1  = o(1) \; w.h.p.$
\end{theorem}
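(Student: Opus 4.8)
The plan is to derive all three statements from the non-asymptotic $\ell_2$-bound~\eqref{eq:ub3} for GPRs in Lemma~\ref{lem:ub}, combined with the elementary facts $\|y\|_1\le\sqrt n\,\|y\|_2$ and $1/\sqrt n\le\|x^{(0)}\|_2\le\|x^{(0)}\|_1=1$. First I would record two preliminaries. (a) In parts~1 and~2, $\bar d_{\max}\log n/\bar d_{\min}^2=o(1)$ together with $\bar d_{\max}\ge\bar d_{\min}$ forces $\log n/\bar d_{\min}=o(1)$, i.e.\ $\bar d_{\min}=\omega(\log n)$; in part~3 this is assumed. Hence Lemma~\ref{lem:ub} applies and all conclusions hold w.h.p. (b) When $\bar\lambda<1-c$, since $C_3\sqrt{\log n/\bar d_{\min}}\to0$ we have $\bar\lambda+C_3\sqrt{\log n/\bar d_{\min}}\le1-c/2$ for large $n$, so $k(\bar\lambda+C_3\sqrt{\log n/\bar d_{\min}})^{k-1}\le M_c:=\sup_{k\ge1}k(1-c/2)^{k-1}<\infty$, giving $g(\gamma,\bar\lambda,\bar d_{\min})\le M_c\sum_{k\ge1}\gamma_k$. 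As is the case for the GPRs of interest (e.g.\ $\sum_k\gamma_k=1$ for PPR and HPR), I take $\sum_k\gamma_k^{(n)}=O(1)$ in part~1 and, without loss of generality since the ratio in part~2 is unchanged by rescaling $\gamma$, $\sum_k\gamma_k^{(n)}=1$ in part~2; thus $g=O(1)$ in both.

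For part~1, substitute~\eqref{eq:ub3} into $\|pr-\bar pr\|_1\le\sqrt n\,\|pr-\bar pr\|_2=\sqrt n\,\|x^{(0)}\|_2\cdot(\text{RHS of~\eqref{eq:ub3}})$ to obtain $\|pr-\bar pr\|_1\le C_1\sqrt{\log n/\bar d_{\min}}+C_2\,(\sqrt n\,\|x^{(0)}\|_2)\,g(\gamma,\bar\lambda,\bar d_{\min})\sqrt{\bar d_{\max}\log n/\bar d_{\min}^2}$. The first term is $o(1)$ by (a); in the second, $\sqrt n\,\|x^{(0)}\|_2=O(1)$ by hypothesis, $g=O(1)$ by (b), and $\sqrt{\bar d_{\max}\log n/\bar d_{\min}^2}=o(1)$, so the bound is $o(1)$ w.h.p.

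For part~2, the extra ingredient is a matching lower bound on $\|\bar pr\|_2$, and the key observation is that $\bar W$ is entrywise nonnegative and $x^{(0)}\ge0$, so $\bar x^{(k)}=\bar W^kx^{(0)}\ge0$ for every $k$ and hence $\bar pr=\gamma_0x^{(0)}+\sum_{k\ge1}\gamma_k\bar x^{(k)}\ge\gamma_0x^{(0)}\ge0$ entrywise; this gives $\|\bar pr\|_2\ge\gamma_0\|x^{(0)}\|_2\ge C_5\|x^{(0)}\|_2$ (using $\sum_k\gamma_k=1$). Dividing~\eqref{eq:ub3} by this bound yields $\|pr-\bar pr\|_2/\|\bar pr\|_2\le\frac{C_1}{C_5\|x^{(0)}\|_2}\sqrt{\frac{\log n}{n\bar d_{\min}}}+\frac{M_cC_2}{C_5}\sqrt{\frac{\bar d_{\max}\log n}{\bar d_{\min}^2}}$; since $\|x^{(0)}\|_2\ge1/\sqrt n$ the first term is at most $\frac{C_1}{C_5}\sqrt{\log n/\bar d_{\min}}=o(1)$ by (a), and the second is $o(1)$ by hypothesis.

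For part~3, I would first note that $C_3\sqrt{\log n/\bar d_{\min}}\le C_6$ for large $n$ and that $t\mapsto kt^{k-1}$ is nondecreasing on $[0,\infty)$ with $\bar\lambda\ge0$, so $g(\gamma,\bar\lambda,\bar d_{\min})\le g(\gamma^{(n)},\bar\lambda^{(n)})=O(\sqrt{\bar d_{\min}/(n\bar d_{\max})})$; then, proceeding as in part~1 but bounding $\|x^{(0)}\|_2\le1$ in the second term, $\|pr-\bar pr\|_1\le C_1\sqrt{\log n/\bar d_{\min}}+C_2\,\sqrt n\cdot O(\sqrt{\bar d_{\min}/(n\bar d_{\max})})\cdot\sqrt{\bar d_{\max}\log n/\bar d_{\min}^2}=O(\sqrt{\log n/\bar d_{\min}})=o(1)$ w.h.p. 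The main obstacle is part~2: items~1 and~3 are bookkeeping once~\eqref{eq:ub3} is in hand, but item~2 requires a lower bound on $\|\bar pr\|_2$ robust to adversarial $x^{(0)}$ (for a single-vertex seed $\|x^{(0)}\|_2=1$, and the trivial $\|\bar pr\|_2\ge\|\bar pr\|_1/\sqrt n=1/\sqrt n$ is far too weak); the entrywise-nonnegativity argument, which crucially uses that $\gamma_0$ is a constant fraction of $\sum_k\gamma_k$, closes this gap. A secondary point, common to all parts, is translating each hypothesis correctly into either $g=O(1)$ (parts~1--2) or $g=O(\sqrt{\bar d_{\min}/(n\bar d_{\max})})$ (part~3) and tracking where $\bar d_{\min}=\omega(\log n)$ and $\bar\lambda<1-c$ are used.
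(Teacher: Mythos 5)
Your proof is correct and follows essentially the same route as the paper: all three parts are read off from the GPR bound~\eqref{eq:ub3} in Lemma~\ref{lem:ub} together with $\|\cdot\|_1\le\sqrt{n}\|\cdot\|_2$, the geometric bound $\sup_{k\ge1}k(1-c/2)^{k-1}<\infty$ on $g$, and, for part~2, the lower bound $\|\bar{pr}(\gamma^{(n)},x^{(0)})\|_2\ge C_5\|x^{(0)}\|_2$. You are in fact slightly more careful than the paper in justifying that lower bound via entrywise nonnegativity of $\bar{W}^k x^{(0)}$ and in making the normalization $\sum_k\gamma_k^{(n)}=O(1)$ explicit, but the argument is the same.
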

\textbf{Remarks pertaining to Theorem~\ref{thm:prub}}: The result in 1) requires weaker conditions than Proposition 1 in~\cite{avrachenkov2015pagerank} for the standard PR: we disposed of the constraint $\bar{\lambda} = o(1)$ and bounded $\bar{d}_{\max}/\bar{d}_{\min}$. As a result, GPR converges in $\ell_1$-norm as long as the initial seeds are sufficiently spread and $\frac{\bar{d}_{\max} \log n}{\bar{d}_{\min}^2} =o(1)$. The result in 2) implies that for fixed weights that do not depend on $n$, both the standard PR and HPR have guaranteed convergence in the relative $\ell_2$-distance. This generalizes Theorem 1 in~\cite{avrachenkov2018mean} stated for the standard PR on SBMs. The result in 3) implies that as long as the weights $\gamma_k^{(n)}$ appropriately depend on $n,$ convergence in the $\ell_1$-norm is guaranteed (e.g., for HPR with $h > (\ln n + \ln \frac{\bar{d}_{\max}}{\bar{d}_{\min}})/(2-2\bar{\lambda})$). 

The following lemma uses the same proof techniques as Lemma~\ref{lem:ub} to provide an upper bound on the distance between the DNLPs $z^{(k)}$ and $\bar{z}^{(k)}$, which we find useful in what follows. The result essentially removes the dependence on the degrees in the first term of the right hand side of~\eqref{eq:ub1}. 
\begin{lemma}\label{lem:var}
Suppose that the conditions of Lemma~\ref{lem:ub} are satisfied. Then, one has 
$$\frac{\|z^{(k)} - \bar{z}^{(k)}\|_2}{\|z^{(0)}\|_2} \leq C_1 k\left(\bar{\lambda} + C_2\sqrt{\frac{\log n}{\bar{d}_{\min}}}   \right)^{k-1} \sqrt{\frac{\bar{d}_{\max}\log n}{\bar{d}_{\min}^2}} \; w.h.p.$$
\end{lemma}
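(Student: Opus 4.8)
\emph{Proof idea.} The plan is to re-run the argument behind Lemma~\ref{lem:ub}, but to first isolate the stationary component of the walk; removing it is exactly what eliminates the leading term of~\eqref{eq:ub1}.

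\textbf{Step 1 (symmetrization and peeling off the stationary part).} First I would write the DNLP in symmetric form. Since $W = AD^{-1} = D^{1/2}\mathcal{A}D^{-1/2}$ for the symmetric matrix $\mathcal{A} := D^{-1/2}AD^{-1/2}$, we have $D^{-1}W^k = D^{-1/2}\mathcal{A}^k D^{-1/2}$, hence, with $s := \sum_{v\in V} d_v$ and $\bar s := \sum_{v\in V}\bar d_v$,
\[
z^{(k)} = s\, D^{-1/2}\mathcal{A}^k D^{-1/2}x^{(0)}, \qquad \bar z^{(k)} = \bar s\, \bar D^{-1/2}\bar{\mathcal{A}}^k\bar D^{-1/2}x^{(0)}.
\]
The leading eigenpair of $\mathcal{A}$ is $(1,\phi_1)$ with $\phi_1 = D^{1/2}\mathbf 1/\sqrt s$, so writing $\mathcal{A}^k = \phi_1\phi_1^{\top} + R^k$ with $R := \mathcal{A} - \phi_1\phi_1^{\top}$ (so that $\|R\|_{\mathrm{op}} = \max\{|\lambda_2|,|\lambda_n|\}$), the $\phi_1\phi_1^{\top}$ contribution to $z^{(k)}$ equals $s\, D^{-1/2}\phi_1\phi_1^{\top}D^{-1/2}x^{(0)} = \mathbf 1\,(\mathbf 1^{\top}x^{(0)}) = \mathbf 1$, and likewise for $\bar z^{(k)}$ with $\bar\phi_1 = \bar D^{1/2}\mathbf 1/\sqrt{\bar s}$ and $\bar R := \bar{\mathcal{A}} - \bar\phi_1\bar\phi_1^{\top}$, $\|\bar R\|_{\mathrm{op}} = \bar\lambda$. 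The two copies of $\mathbf 1$ cancel, leaving
\[
z^{(k)} - \bar z^{(k)} = s\, D^{-1/2}R^k D^{-1/2}x^{(0)} - \bar s\, \bar D^{-1/2}\bar R^k\bar D^{-1/2}x^{(0)}.
\]

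\textbf{Step 2 (splitting and bounding).} Adding and subtracting $s\,D^{-1/2}\bar R^k D^{-1/2}x^{(0)}$, I would split this into a spectral term $\mathrm{(I)} := s\,D^{-1/2}(R^k-\bar R^k)D^{-1/2}x^{(0)}$ and a normalization term $\mathrm{(II)} := (s\,D^{-1/2}\bar R^k D^{-1/2} - \bar s\,\bar D^{-1/2}\bar R^k\bar D^{-1/2})x^{(0)}$, and bound each with exactly the concentration facts used for Lemma~\ref{lem:ub}: w.h.p.\ (using $\bar d_{\min} = \omega(\log n)$) one has $\|\mathcal{A}-\bar{\mathcal{A}}\|_{\mathrm{op}} = O(\sqrt{\log n/\bar d_{\min}})$, $\max_v|d_v/\bar d_v - 1| = O(\sqrt{\log n/\bar d_{\min}}) = o(1)$, and $|s/\bar s - 1| = o(1)$, whence also $d_{\min} = \Theta(\bar d_{\min})$, $d_{\max} = \Theta(\bar d_{\max})$ and $\|z^{(0)}\|_2 = \Theta(\|\bar z^{(0)}\|_2)$. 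For $\mathrm{(I)}$, the telescoping identity $R^k-\bar R^k = \sum_{j=0}^{k-1}R^j(R-\bar R)\bar R^{k-1-j}$, together with $\|R\|_{\mathrm{op}} \le \bar\lambda + C\sqrt{\log n/\bar d_{\min}}$ (Weyl) and $\|R-\bar R\|_{\mathrm{op}} = O(\sqrt{\log n/\bar d_{\min}})$ (the latter because $\phi_1,\bar\phi_1$ are explicit and $\|\phi_1-\bar\phi_1\|_2 = O(\sqrt{\log n/\bar d_{\min}})$), gives $\|R^k-\bar R^k\|_{\mathrm{op}} \le Ck(\bar\lambda+C\sqrt{\log n/\bar d_{\min}})^{k-1}\sqrt{\log n/\bar d_{\min}}$; combined with $\|D^{-1/2}\|_{\mathrm{op}}=d_{\min}^{-1/2}$ and $s\,\|D^{-1/2}x^{(0)}\|_2 \le d_{\max}^{1/2}\, s\,\|D^{-1}x^{(0)}\|_2 = d_{\max}^{1/2}\|z^{(0)}\|_2$, this yields the claimed bound for $\mathrm{(I)}$. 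For $\mathrm{(II)}$, I would substitute $D^{-1/2} = \bar D^{-1/2}(I+E)$ with $E$ diagonal and $\|E\|_{\mathrm{op}}=o(1)$, and $s=\bar s(1+o(1))$, expand, and use $\|\bar R^k\|_{\mathrm{op}}=\bar\lambda^k$ to see the resulting operator has norm $O(\bar\lambda^k\sqrt{\log n/\bar d_{\min}})$; then $\bar s\,\|\bar D^{-1/2}x^{(0)}\|_2 \le \bar d_{\max}^{1/2}\|\bar z^{(0)}\|_2 = \Theta(\bar d_{\max}^{1/2}\|z^{(0)}\|_2)$ and $\bar\lambda^k \le k(\bar\lambda+C\sqrt{\log n/\bar d_{\min}})^{k-1}$ (for $k\ge1$) give a bound for $\mathrm{(II)}$ of the same order $Ck(\bar\lambda+C\sqrt{\log n/\bar d_{\min}})^{k-1}\sqrt{\bar d_{\max}\log n/\bar d_{\min}^2}\,\|z^{(0)}\|_2$. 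Adding the two bounds finishes the proof.

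\textbf{Main obstacle.} The delicate point is $\mathrm{(II)}$: a careless use of operator-norm bounds on the normalization mismatch produces a worse power of $\bar d_{\max}/\bar d_{\min}$, so one must route the $s$ and $D^{-1/2}$ factors through $\|\bar z^{(0)}\|_2$, choosing $d_{\min}$ versus $d_{\max}$ correctly at each step, to land exactly on $\sqrt{\bar d_{\max}\log n/\bar d_{\min}^2}$. Everything else is a transcription of the estimates behind Lemma~\ref{lem:ub} with the top-eigenvector contribution deleted.
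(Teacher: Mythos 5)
Your proposal is correct and follows essentially the same route as the paper's proof: symmetrize via the Randi\'{c} matrix, deflate the top eigenvector so the stationary components $\mathbf{1}$ cancel (this is exactly how the paper removes the $\|d_N-\bar d_N\|_2$ term), telescope $R^k-\bar R^k$, and invoke the same concentration inputs (degree concentration, $\|R-\bar R\|_2\lesssim\sqrt{\log n/\bar d_{\min}}$, the rank-one projector perturbation, and Weyl's inequality). The only differences are notational (your deflated matrix $R$ is the paper's $R_N$) and in how the normalization mismatch term is organized, which you handle with the same care the paper does.
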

\section{GPR-Based SE Community Detection}
One important application of PRs is in SE community detection: For each vertex $v$, the LPs $\{x_v^{(k)}\}_{k\geq 0}$ may be viewed as features and the GPR as a score used to predict the community membership of $v$ by comparing it with some threshold~\cite{kloumann2017block}. Kloumann et al.~\cite{kloumann2017block} investigated mean-field LPs, i.e., $\{\bar{x}_v^{(k)}\}_{k\geq 0}$, and showed that under certain symmetry conditions, PPR with $\alpha = \bar{\lambda}_2$ corresponds to an optimal classifier for one block in an SBM, given only the first-order moment information. However, accompanying simulations revealed that PPR underperforms with respect to classification accuracy. As a result, Fisher's linear discriminant~\cite{fisher1936use} was used instead~\cite{kloumann2017block} by \emph{empirically} leveraging information about the second-order moments of the LPs, and was showed to have a performance almost matching that of belief propagation, a statistically optimal method for SBMs~\cite{mossel2014belief,zhang2014scalable,abbe2015detection}.

In what follows, we rigorously derive an explicit formula for a variant of Fisher's linear discriminant by taking into account the individual variances of the features while neglecting their correlations. This explicit formula provides new insight into the behavior of GPR methods for SE community detection in SBMs and will be later generalized to handle real world networks (see Section~\ref{sec:exp}). 
\subsection{Pseudo Fisher's Linear Discriminant}\label{subsec:PFL}
Suppose that the mean vectors and covariance matrices of the features from two classes $C_0,\, C_1$ are equal to $(\mu_0, \Sigma_0)$ and $(\mu_1, \Sigma_1)$, respectively. For simplicity, assume that the covariance matrices are identical, i.e., $\Sigma_0 = \Sigma_1 = \Sigma$. The Fisher's linear discriminant depends on the first two moments (mean and variance) of the features~\cite{fisher1936use}, and may be written as $F(x) = [\Sigma^{-1}(\mu_1 - \mu_0)]^T\, x$.
The label of a data point $x$ is determined by comparing $F(x)$ with a threshold.

Neglecting the differences in the second order moments by assuming that $\Sigma = \sigma^2 I$, Fisher's linear discriminant reduces to $G(x) = (\mu_1 - \mu_0)^T \, x$, 
which induces a decision boundary that is orthogonal to the difference between the means of the two classes; $G(x)$ is optimal under the assumptions that only the first-order moments $\mu_1$ and $\mu_0$ are available.

The two linear discriminants have different practical advantages and disadvantages in practice. On the one hand, $\Sigma$ can differ significantly from $\sigma^2 I,$ in which case $G(x)$  performs much worse than $F(x)$. On the other hand, estimating the covariance matrix $\Sigma$ is nontrivial, and hence $F(x)$ may not be available in a closed form. One possible choice to mitigate the above drawbacks is to use what we call the \emph{pseudo Fisher's linear discriminant},  \vspace{-0.1cm}
\begin{align}\label{eq:pseudoFisherdef}
SF(x) = [\text{diag}(\Sigma)^{-1}(\mu_1 - \mu_0)]^T \,x,
\end{align}
where $\text{diag}(\Sigma)$ is the diagonal matrix of $\Sigma$; $\text{diag}(\Sigma)$ preserves the information about variances, but neglects the correlations between the terms in $x$. 
This discriminant essentially allows each feature to contribute equally to the final score. More precisely, given a feature of a vertex $v,$ say 
$x_{v}^{(k)},$ its corresponding weight according to $SF(\cdot)$ equals $\frac{\mu_1^{(k)} - \mu_0^{(k)}}{(\sigma^{(k)})^2},$ where $(\sigma^{(k)})^2$ denotes the variance of the feature (i.e., the $k$-th component in the diagonal of $\Sigma$). Note that this weight may be rewritten as  $\frac{\mu_1^{(k)} - \mu_0^{(k)}}{\sigma^{(k)}} \times \frac{1}{\sigma^{(k)}}$; the first term is a frequently-used metric for characterizing the predictiveness of a feature, called the \emph{effect size}~\cite{kelley2012effect}, while the second term is a normalization term  that positions all features on the same scale. 

Next, we derive an expression for $SF(x)$ pertinent to SE community detection, following the setting proposed for Fisher's linear discriminant in~\cite{kloumann2017block}. To model the community to be detected with seeds and the out-of-community portion of a graph respectively, we focus on two-block SBMs with parameters $(n_1, p_1, n_0, p_0, q)$, and characterize both the means $\mu_1$, $\mu_0$ and the variances $\text{diag}(\Sigma)$. Note that for notational simplicity, we first work with DNLPs $\{z_v^{(k)}\}_{k\geq 0}$ as the features of choice, as they can remove degree-induced noise; the results for LPs $\{x_v^{(k)}\}_{k\geq 0}$ are only stated briefly.
\subsection{$SF(\cdot)$ Weights and the Inverse PageRank}\label{sec:sudoF}
\textbf{Characterization of the means.} Consider a two-block SBM with parameters $(n_1, p_1, n_0, p_0, q)$. Without loss of generality, assume that the seed lies in block $C_1$. 
Due to the block-wise symmetry of $\bar{A}$, for a fixed $k \geq 1$, the DNLP $\bar{z}_v^{(k)}$ is a constant for all $v\in C_i$ within the same community $C_i$, $i\in \{0,1\}$. Consequently, the mean of the $k$th DNLP (feature) of block $C_i$ is set to $\mu_i^{(k)} = \bar{z}_v^{(k)}$, $v\in C_i$, $i\in \{0,1\}$. Note that $\bar{z}_v^{(k)}$ does not match the traditional definition of the expectation $\mathbb{E}(z_v^{(k)})$, although the two definitions are consistent when $n_1, n_0 \rightarrow \infty$ due to Lemma~\ref{lem:var}. 

%
%
Choosing the initial seed set to lie within one single community, e.g. $\sum_{v\in C_1} x_v^{(0)} = 1$, and using some algebraic manipulations (see Section~\ref{app:alg} of the Supplement), we obtain   \vspace{-0.2cm}
\begin{align}\label{eq:meandiff}
\mu_1^{(k)}  - \mu_0^{(k)} =  c\bar{\lambda}_2^k, \quad c = \frac{1-\bar{\lambda}_2}{n_1(n_1p_1 + n_0 q)}.
\end{align}
Recall that $\bar{\lambda}_2$ stands for the second largest eigenvalue of the mean-field random walk matrix $\bar{W}$. The result in~\eqref{eq:meandiff} shows that the distance between the means of the DNLPs of the two classes decays with $k$ 
at a rate $\bar{\lambda}_2$. This result is similar to its counterpart in~\cite{kloumann2017block} for LPs $\{x_v^{(k)}\}_{k\geq 0},$ but the results in~\cite{kloumann2017block} additionally requires $\bar{d}_v = \bar{d}_u$ for vertices $u$ and $v$ belonging to different blocks. By only using the difference $\mu_1^{(k)}  - \mu_0^{(k)}$ without the variance, the authors of~\cite{kloumann2017block} proposed to use the discriminant $G(x) = (\mu_1 - \mu_0)^T \, x$, which corresponds to PPR with $\alpha = \bar{\lambda}_2$.

\textbf{Characterization of the variances.} Characterizing the variance of each feature is significantly harder than characterizing the means. Nevertheless, the results reported in Lemma~\ref{lem:var} and Lemma~\ref{lem:ub} allow us to determine both $\mathbb{E}(z_v^{(k)}-\bar{z}_v^{(k)})^2$ and $\mathbb{E}(x_v^{(k)}-\bar{x}_v^{(k)})^2$. 
%
Let us consider $z_v^{(k)}$ first. Lemma~\ref{lem:var} implies that with high probability, $\|z^{(k)}  - \bar{z}^{(k)}\|_2 \leq k\left(\bar{\lambda} + o(1)\right)^{k-1}$ for all $k$. 
Figure~\ref{vardecay} (Left) depicts the empirical value of $\mathbb{E}[\| z^{(k)} - \bar{z}^{(k)}\|_2^2]$ for a given set of parameter choices. As it may be seen, the expectation decays with a rate roughly equal to $\lambda_2^{2k},$ where $\lambda_2$ is the second largest eigenvalue of the RW matrix $W$. With regards to $x_v^{(k)}$, Lemma~\ref{lem:ub} establishes that $\|x^{(k)}  - \bar{x}^{(k)}\|_2$ is upper bounded by $k\left(\bar{\lambda} + o(1)  \right)^{k-1}$; for large $k,$ the norm is dominated by the first term in~\eqref{eq:ub1}, induced by the variance of the degrees. Figure~\ref{vardecay} (Left) plots the empirical values of $\mathbb{E}[\| x^{(k)} - \bar{x}^{(k)}\|_2^2]$ to support this finding. 

\begin{figure*}[t]
\centering
\includegraphics[trim={0cm 0cm 0cm 0cm},clip,width=.23\textwidth]{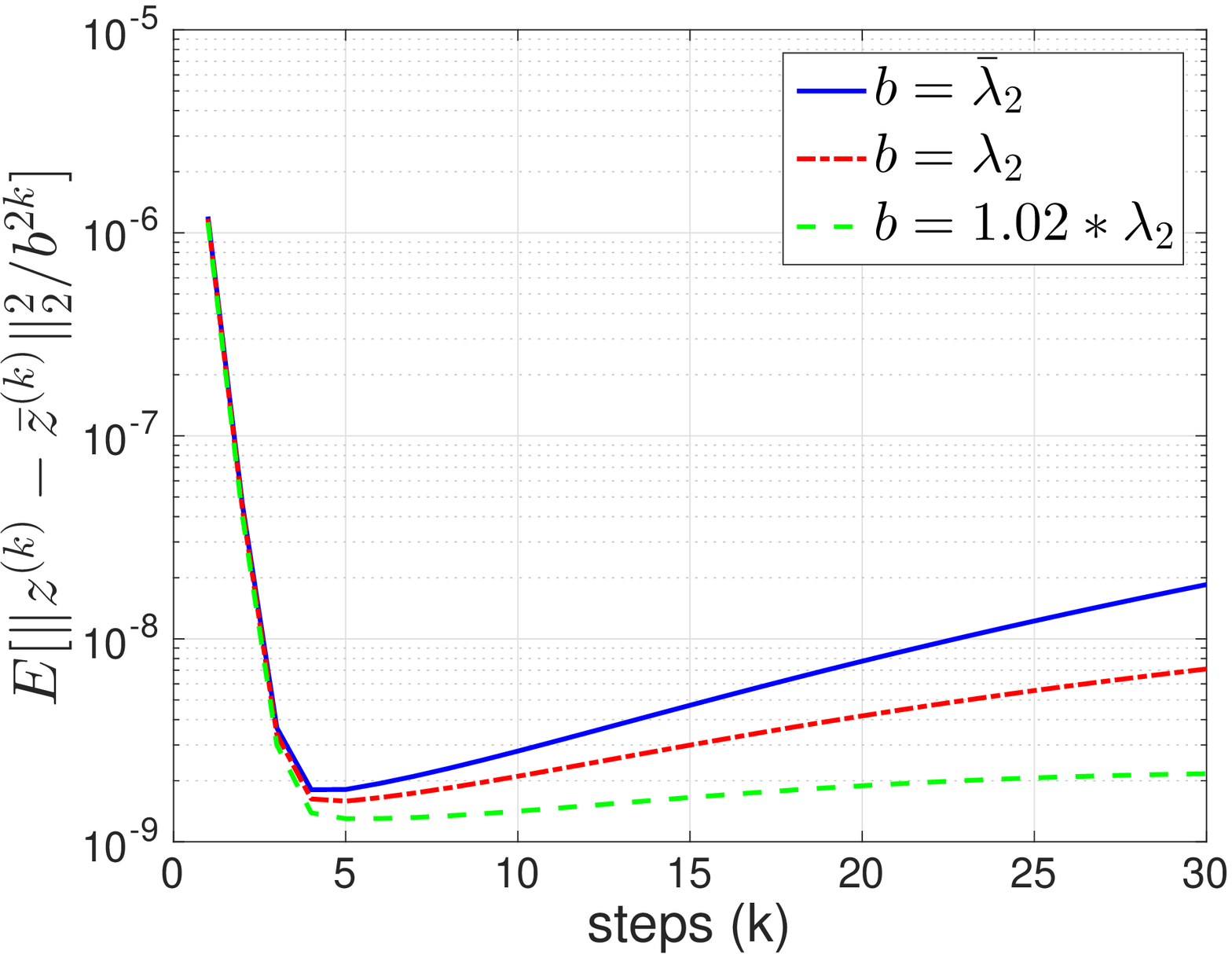}
\includegraphics[trim={0cm 0cm 0cm 0cm},clip,width=.23\textwidth]{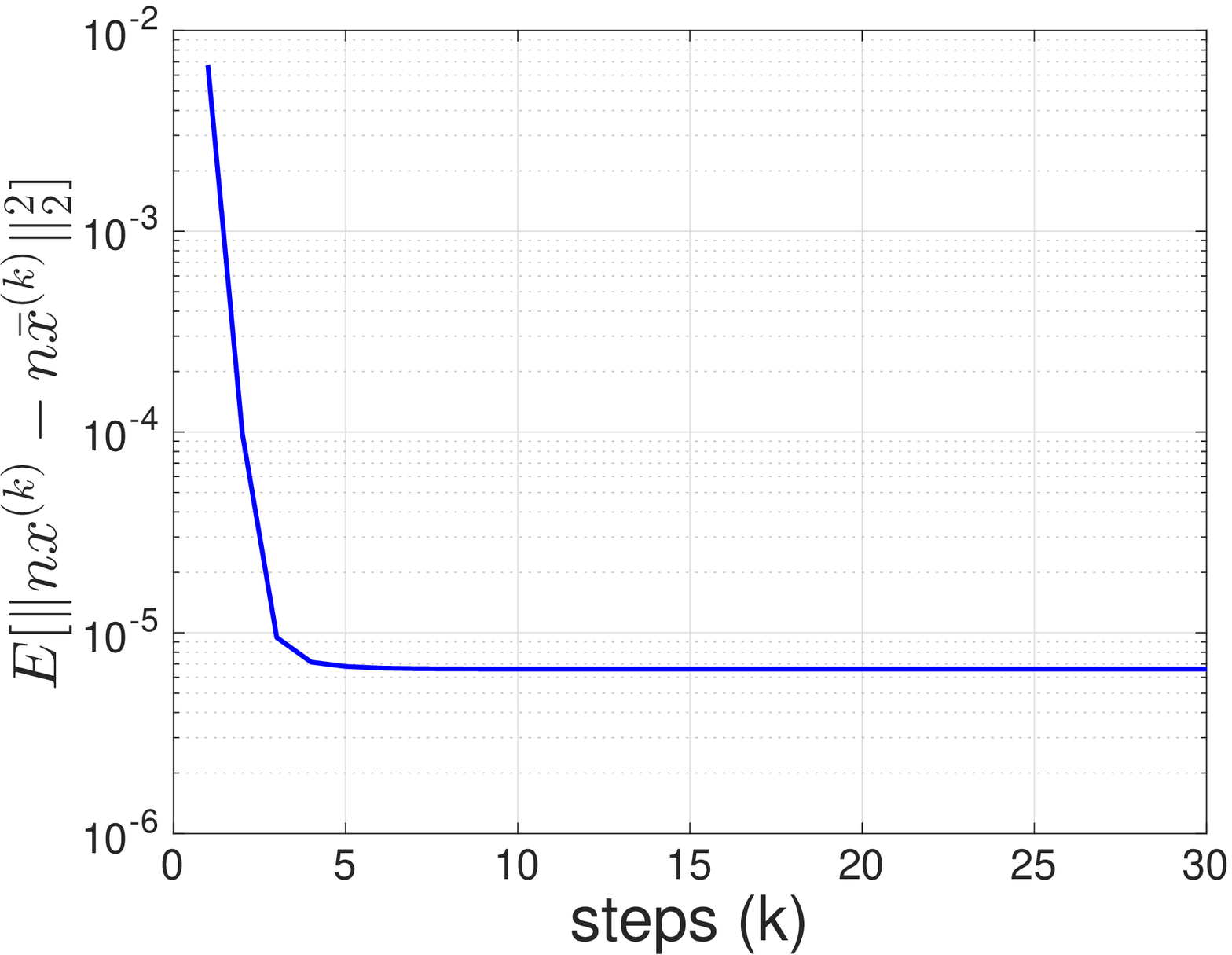} 
\hspace{.005\textwidth}
\rule{0.005cm}{2.6cm}
\hspace{.005\textwidth}
\includegraphics[trim={0cm 0cm 0cm 0cm},clip,width=.23\textwidth]{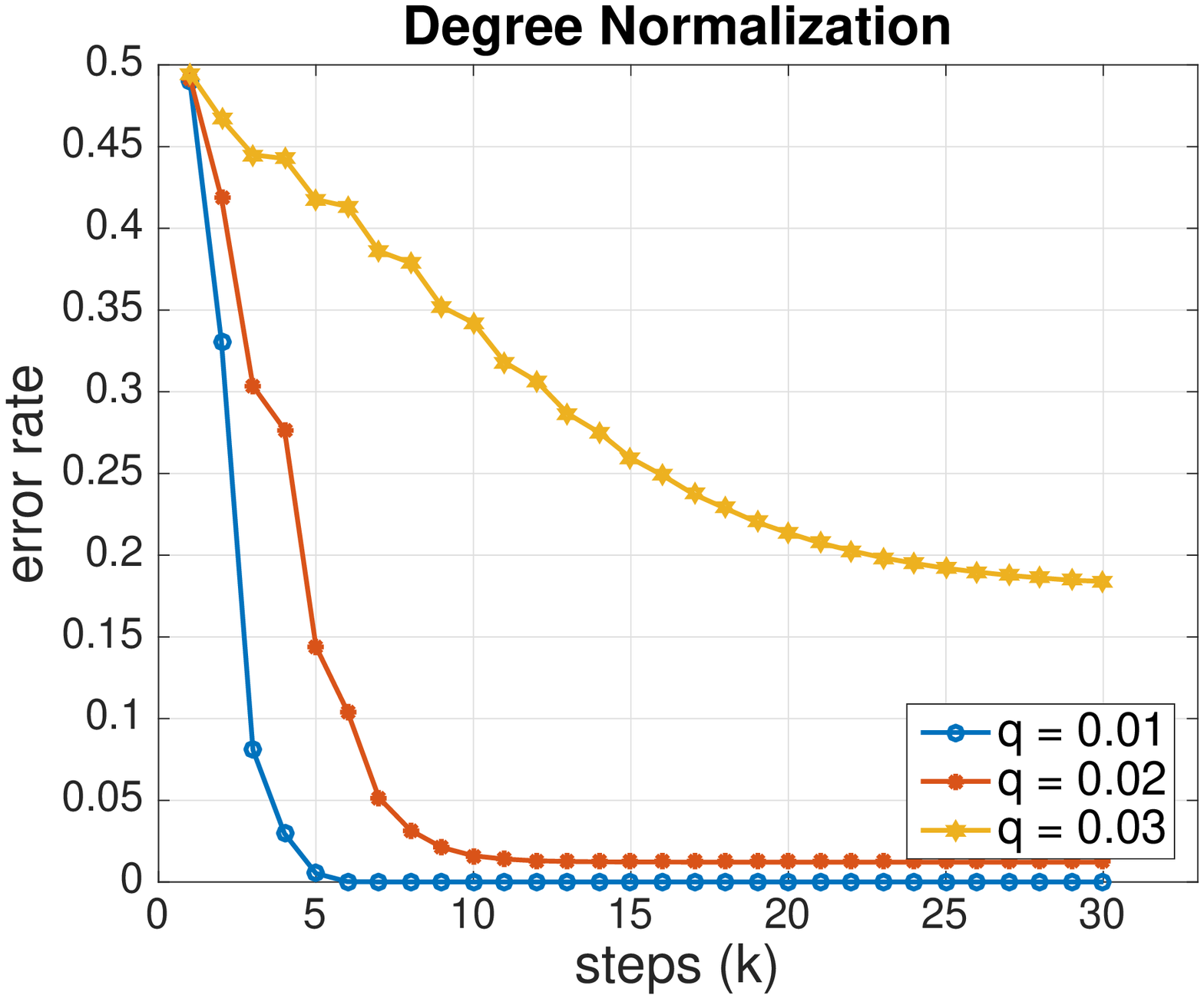}
\includegraphics[trim={0cm 0cm 0cm 0cm},clip,width=.23\textwidth]{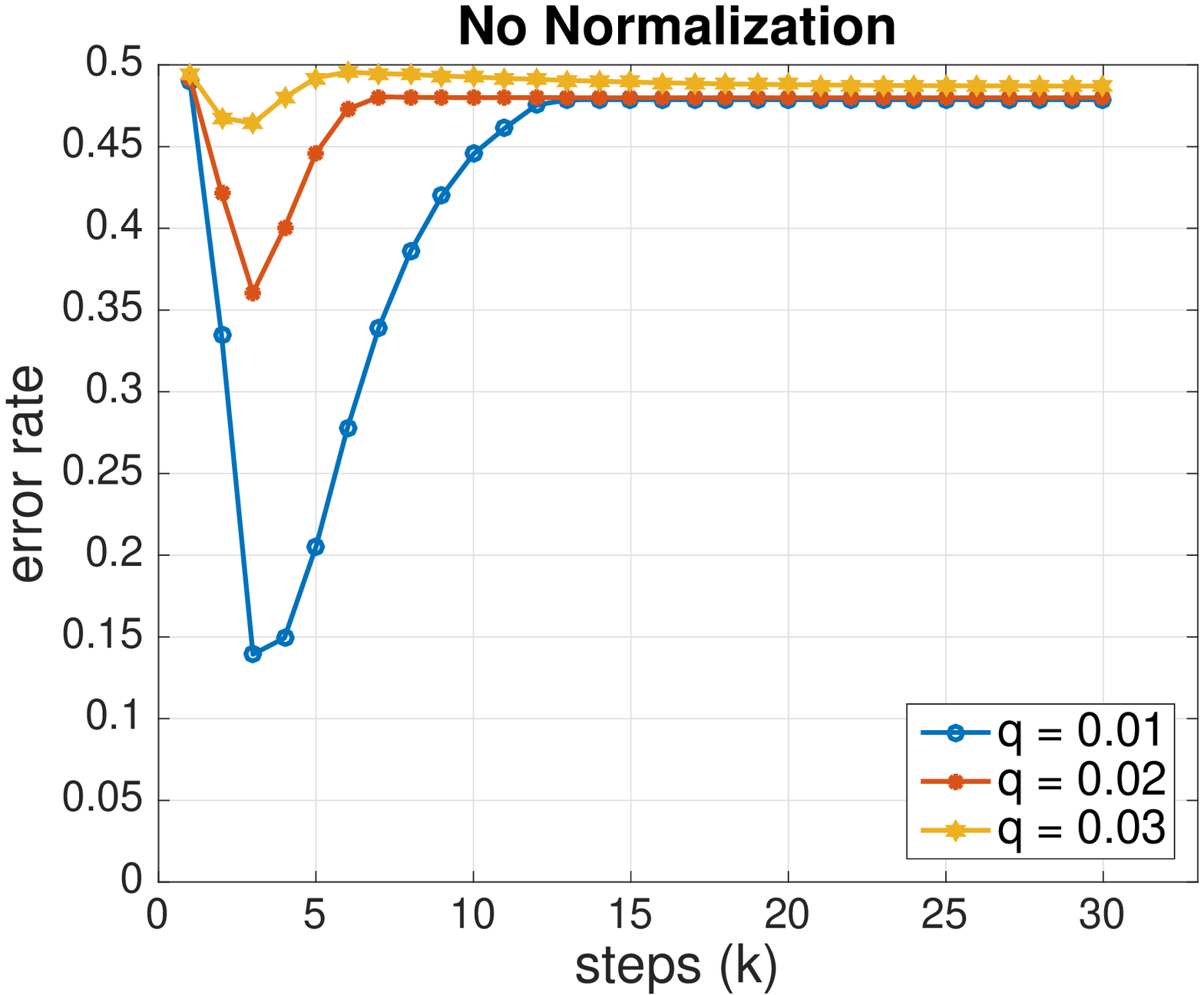}
 \vspace{-0.15cm}
\caption{\footnotesize{Left: Empirical results illustrating the decay rate of the variances $\| z^{(k)} - \bar{z}^{(k)}\|_2^2$ and $\|x^{(k)} - \bar{x}^{(k)}\|_2^2$, for an SBM with parameters $(500, 0.2, 500, 0.2, 0.05)$, averaged over $1000$ tests. With high-probability, $\lambda_2$ slightly exceeds the corresponding mean-field value $\bar{\lambda}_2$~\cite{tao2012topics}; Right: Classification errors based on single-step DNLPs or LPs for SBMs with parameters $(500, 0.05, 500, 0.05, q)$, $q\in\{0.01, 0.02, 0.03\}$. 
}}
\label{vardecay}
\end{figure*}

%

Combining the characterizations of the means and variances, we arrive at the following conclusions. 

\textbf{Normalized degree case.} Although the expression established in~\eqref{eq:meandiff} reveals that the distance between the means of the landing probabilities decays as $\bar{\lambda}_2^k$, the corresponding standard deviation $\sigma^{(k)} \propto \mathbb{E}[\| z^{(k)} - \bar{z}^{(k)}\|_2]$ also roughly decays as $\lambda_2^k$. 
Hence, for the classifier $SF(\cdot)$, the appropriate weights are $\gamma_k = \frac{\mu_1^{(k)} - \mu_0^{(k)}}{(\sigma^{(k)})^2}$ $\sim \bar{\lambda}_2^k/\lambda_2^{2k}$ $=\left(\bar{\lambda}_2/\lambda_2\right)^{k}\lambda_2^{-k} $. The first term $\left(\bar{\lambda}_2/\lambda_2\right)^{k}$ in the product weighs different DNLPs according to their effect sizes~\cite{kelley2012effect}. Since $\lambda_2 \rightarrow \bar{\lambda}_2$ as $n\rightarrow\infty$, the ratio may decay very slowly as $k$ increases. As shown in the Figure~\ref{vardecay} (Right), the classification error rate based on a one-step DNLP remains largely unchanged as $k$ increases to some value exceeding the mixing time. The second term in the product, $\lambda_2^{-k},$ may be viewed as a factor that balances the scale of all DNLPs. Due to the observed variance, DNLPs with large $k$ should be assigned weights much larger than those used in $G(x)$, i.e., $\gamma_k = \mu_1^{(k)} - \mu_0^{(k)} = \bar{\lambda}_2^{k}$ as suggested in~\cite{kloumann2017block}. 

\textbf{The unnormalized degree case.} The standard deviation $\sigma^{(k)} \propto \mathbb{E}[\| x^{(k)} - \bar{x}^{(k)}\|_2]$ roughly  scales as $\phi + \lambda_2^k,$ where $\phi$ captures the noise introduced by the degrees. Typically, for a small number of steps $k$, the noise introduced by degree variation is small compared to the total noise (See Figure~\ref{vardecay} (Left)). Hence, we may assume that $\phi < \lambda_2^0 = 1$. The classifier $SF(\cdot)$ suggests using the weights $\gamma_k = \frac{\mu_1^{(k)} - \mu_0^{(k)}}{(\sigma^{(k)})^2} \sim \bar{\lambda}_2^k/(\phi+ \lambda_2^{k}) \times (\phi+ \lambda_2^{k})^{-1},$ where $\bar{\lambda}_2^k/(\phi+ \lambda_2^{k})$ represents the effect size of the $k$-th LP. This result is confirmed by simulations: In Figure~\ref{vardecay} (Right), the classification error rate based on a one-step LP decreases for small $k$ and increase after $k$ exceeds the mixing time. 
Moreover, by recalling that $x_v^{(k)} \rightarrow d_v/\sum_v d_v$ as $k\rightarrow \infty$, one can confirm that the degree-based noise deteriorates the classification accuracy. 
 
\textbf{Inverse PR.} As already observed, for finite $n$ and with high probability, $\lambda_2$ only slightly exceeds $\bar{\lambda}_2$. Moreover, for SBMs with unknown parameters or for real world networks, $\bar{\lambda}_2$ may not be well-defined, or it may be hard to compute numerically. Hence, in practice, one may need to use the heuristic value $\bar{\lambda}_2 = \lambda_2 = \theta,$ where $\theta$ is a parameter to be tuned. In this case, $SF(\cdot)$ with degree normalization is associated with the weights $\gamma_k = \theta^{-k}$, while $SF(\cdot)$ without degree normalization is associated with the weights $\gamma_k = \theta^k/(\phi+ \theta^{k})^2$. When $k$ is small, $\gamma_k$ roughly increases as $\theta^{-k}$; we term a PR with this choice of weights as the \emph{Inverse PR} (IPR). Note that IPR with degree normalization may not converge in practice, and LP information may be estimated only for a limited number of $k$ steps. Our experiments on real world networks reveal that a good choice for the maximum value of $k$ is $4-5$ times the maximal length of the shortest paths from all unlabeled vertices to the set of seeds.

\textbf{Other insights.} Note that IPR resembles HPR when $k$ is small and $\gamma_k$ increases, as it dampens the contributions of the first several steps of the RW. This result also agrees with the combinatorial analysis in~\cite{chung2009local} that advocates the use of HPR for community detection. Note that IPR with degree normalization has monotonically increasing weights, which reflects the fact that community information is preserved even for large-step LPs. To some extent, this result can be viewed as a theoretical justification for the empirical fact that PPR is often used with $\alpha \simeq 1$ to achieve good community detection performance~\cite{leskovec2009community}. 
\section{Experiments}\label{sec:exp}
We evaluate the performance of the IPR method over synthetic and large-scale real world networks. 

\textbf{Datasets.} The network data used for evaluation may be classified into three categories. The first category contains networks sampled from two-block SBMs that satisfy the assumptions used to derive our theoretical results. The second category includes three real world networks, Citeseer~\cite{giles1998citeseer}, Cora~\cite{mccallum2000automating} and PubMed~\cite{namata2012query}, all frequently used to evaluate community detection algorithms~\cite{yang2009combining,sen2008collective}. These networks comprise several non-overlapping communities, and may be roughly modeled as SBMs. The third category includes the Amazon (product) network and the DBLP (collaboration) network from the Stanford Network Analysis Project~\cite{leskovec2016snap}. These networks contain thousands of overlapping communities, and their topologies differ significantly from SBMs (see Table~\ref{tab:data} in the Supplement for more details). For synthetic graphs, we use single-vertex seed-sets; for real world graphs, we select $20$ seeds uniformly at random from the community of interest. 

\textbf{Comparison of the methods.} We compare the proposed IPRs with PPR and HPR methods, both widely used for SE community detection~\cite{kloster2014heat, kloumann2014community}. Methods that rely on training the weights were not considered as they require outside-community vertex labels. For all three approaches, the default choice is degree-normalization, indicated by the suffix ``-d''. For synthetic networks, the parameter $\theta$ in IPR is set to $\bar{\lambda}_2 = \frac{0.05-q}{0.05+q},$ following the recommendations of Section~\ref{sec:sudoF}. For real world networks, we avoid computing $\lambda_2$ exactly and set $\theta\in\{0.99, 0.95, 0.90\}$. The parameters of the PPR and HPR are chosen to satisfy $\alpha \in\{0.9, 0.95\}$ and $h \in \{5, 10\}$ and to offer the best performance, as suggested in~\cite{kloumann2014community, li2015uncovering, kloster2014heat}. The results for all PRs are obtained by accumulating the values over the first $k$ steps; the choice for $k$ is specified for each network individually.  


\textbf{Evaluation metric.} We adopt a metric similar to the one used in~\cite{kloumann2014community}. There, one is given a graph, a hidden community $\mathcal{C}$ to detect, and a vertex budget $Q$. For a potential ordering of the vertices, obtained via some GPR method, the top-$Q$ set of vertices represents the predicted community $\mathcal{P}$. The evaluation metric used is $|\mathcal{P} \cap \mathcal{C}|/ | \mathcal{C}|$. By default, $Q =|\mathcal{C}|,$ if not specified otherwise. Other metrics, such as the Normalized Mutual Information and the F-score may be used instead, but since they require additional parameters to determine the GPR classification threshold, the results may not allow for simple and fair comparisons. For SBMs, we independently generated $1000$ networks for every set of parameters. For each network, the results are summarized based on $1000$ independently chosen seed sets for each community-network pair and then averaged over over all communities. 
\subsection{Performance Evaluation}
\textbf{Synthetic graphs.} In synthetic networks, all three PRs with degree normalization perform significantly better than their unnormalized degree counterparts. Thus, we only present results for the first class of methods in Figure~\ref{fig:citationnet} (Left). As predicted in Section~\ref{sec:sudoF}, IPR-d offers substantially better detection performance than either PPR-d and HPR-d, and is close in quality to belief propagation (BP). 
Note that the recall of IPR-d keeps increasing with the number of steps. This means that even for large values of $k$, the landing probabilities remain predictive of the community structures, and decreasing the weights with $k$ as in HPR and PPR is not appropriate for these synthetic graphs. The classifier $G(x)$, i.e., a PPR with parameters $\frac{p-q}{p+q}$ suggested by~\cite{kloumann2017block}, has worse performance than the PPR method with parameter $0.95$ and is hence not depicted. 
%
%
\begin{figure}[t]
\centering 
\includegraphics[trim={0cm 0cm 0cm 0cm},clip,width=.23\textwidth]{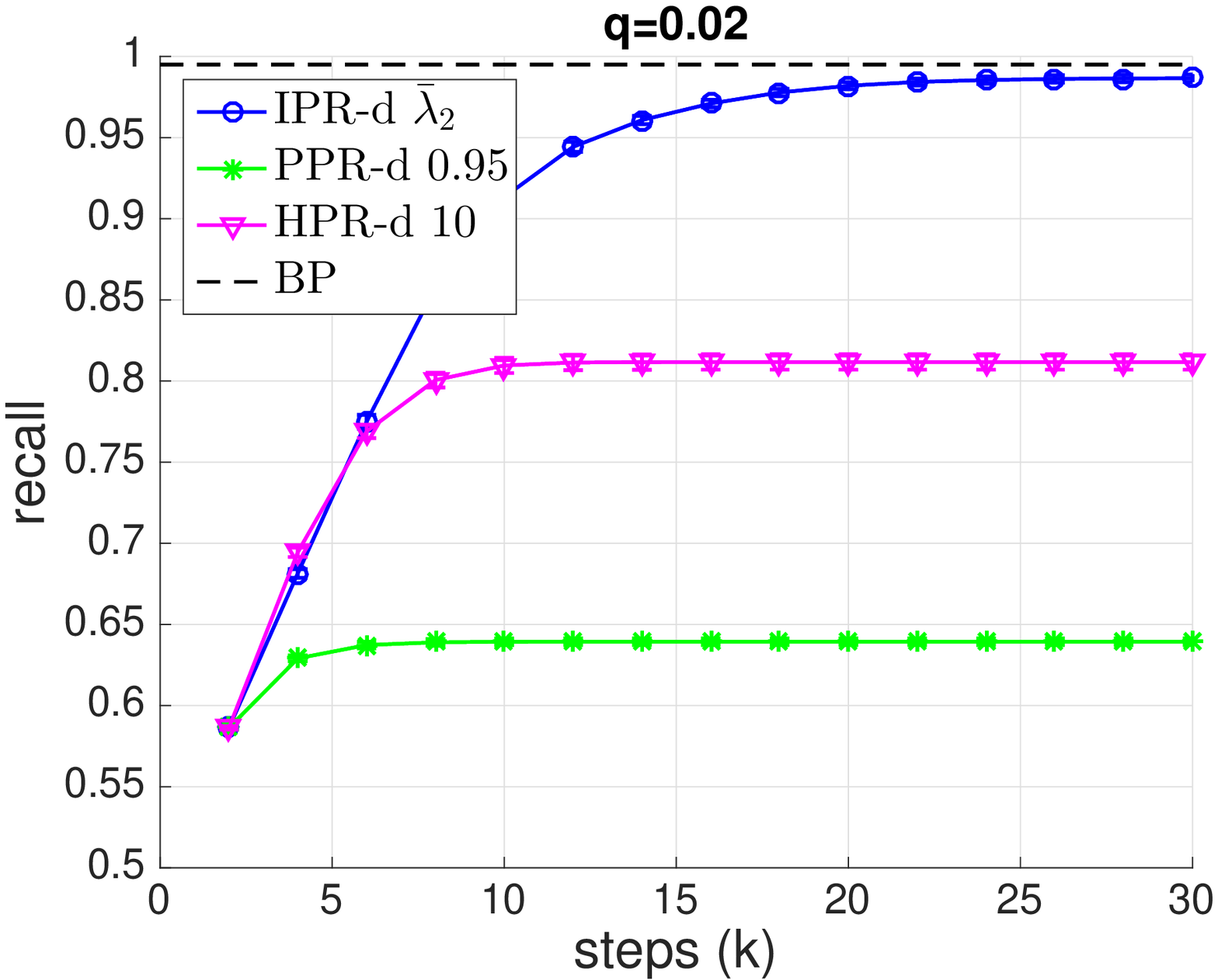}
\hspace{.005\textwidth}
\rule{0.005cm}{2.6cm}
\hspace{.005\textwidth}
\includegraphics[trim={0cm 0cm 0cm 0.7cm},clip,width=.23\textwidth]{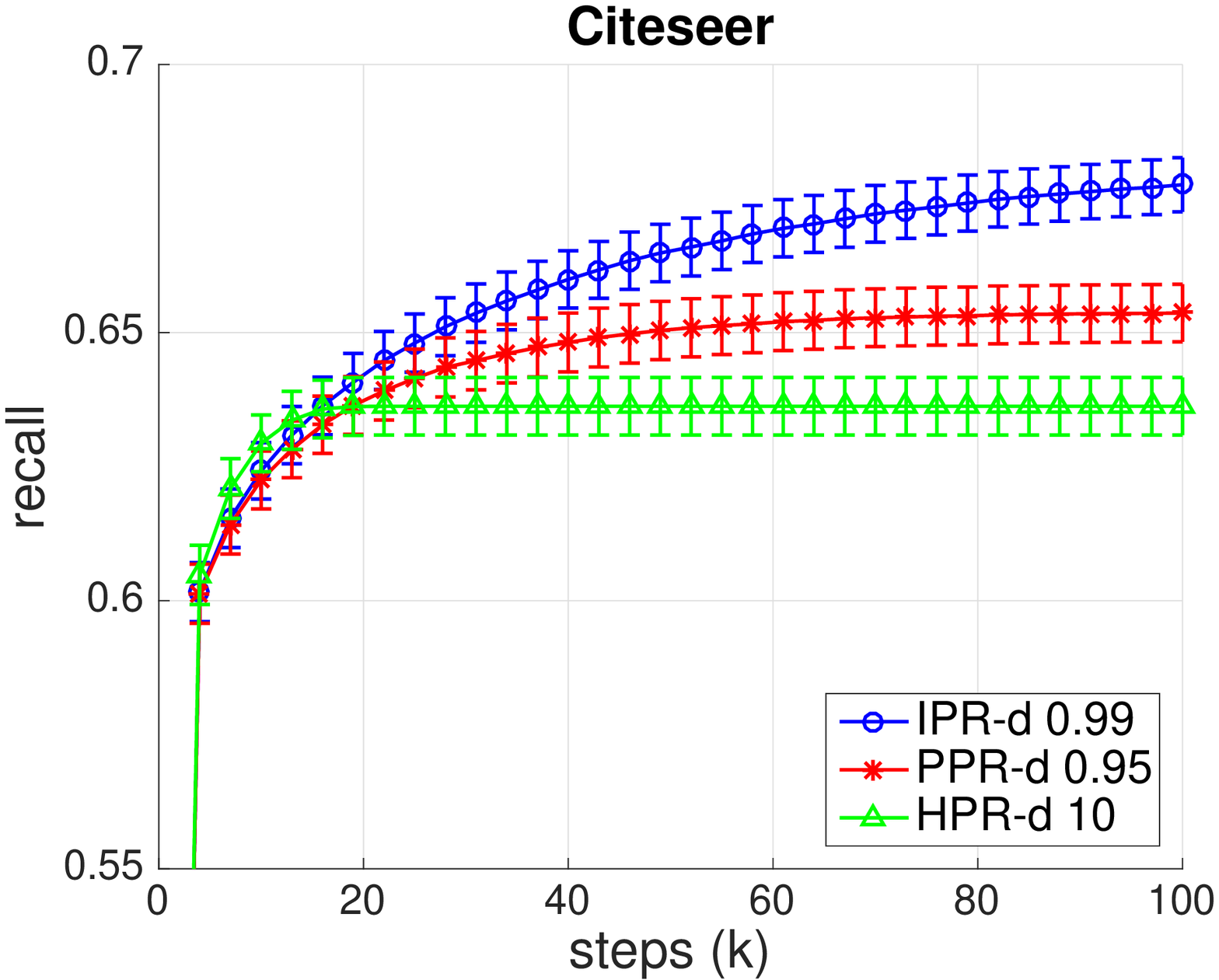}
\includegraphics[trim={0cm 0cm 0cm 0.7cm},clip,width=.23\textwidth]{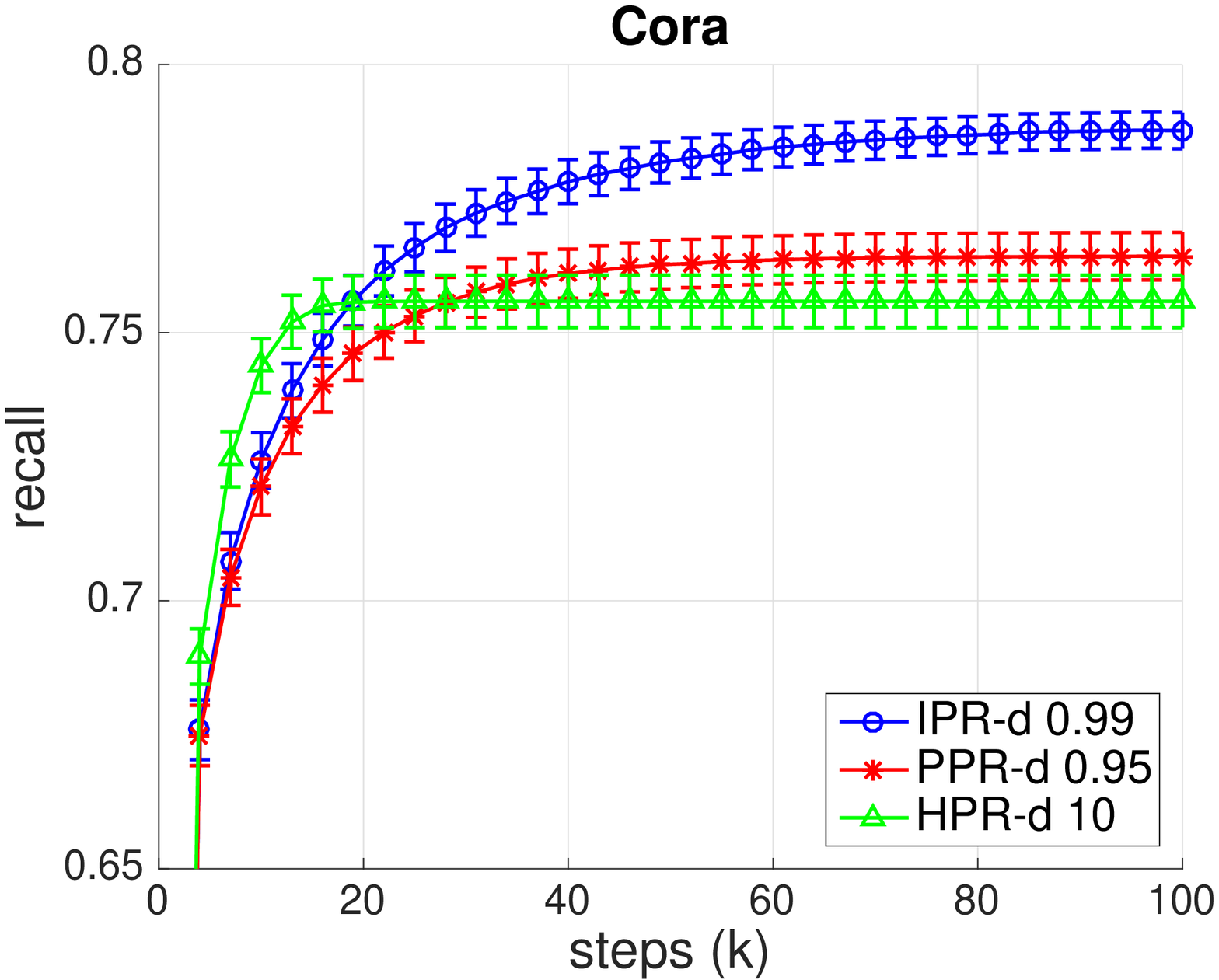}
\includegraphics[trim={0cm 0cm 0cm 0.7cm},clip,width=.23\textwidth]{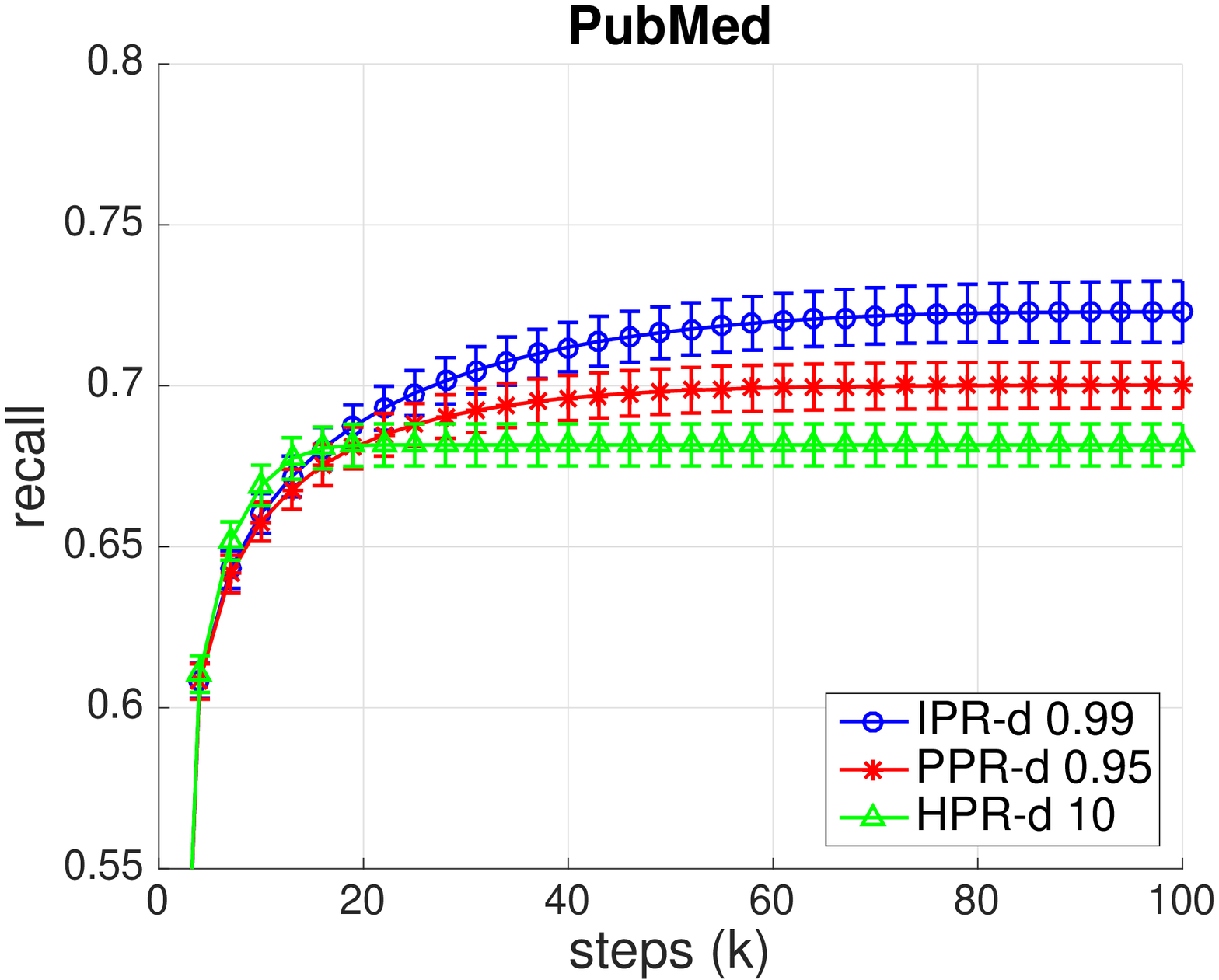} \\
\vspace{-0.17cm}
\includegraphics[trim={0cm 0cm 0cm 0cm},clip,width=.23\textwidth]{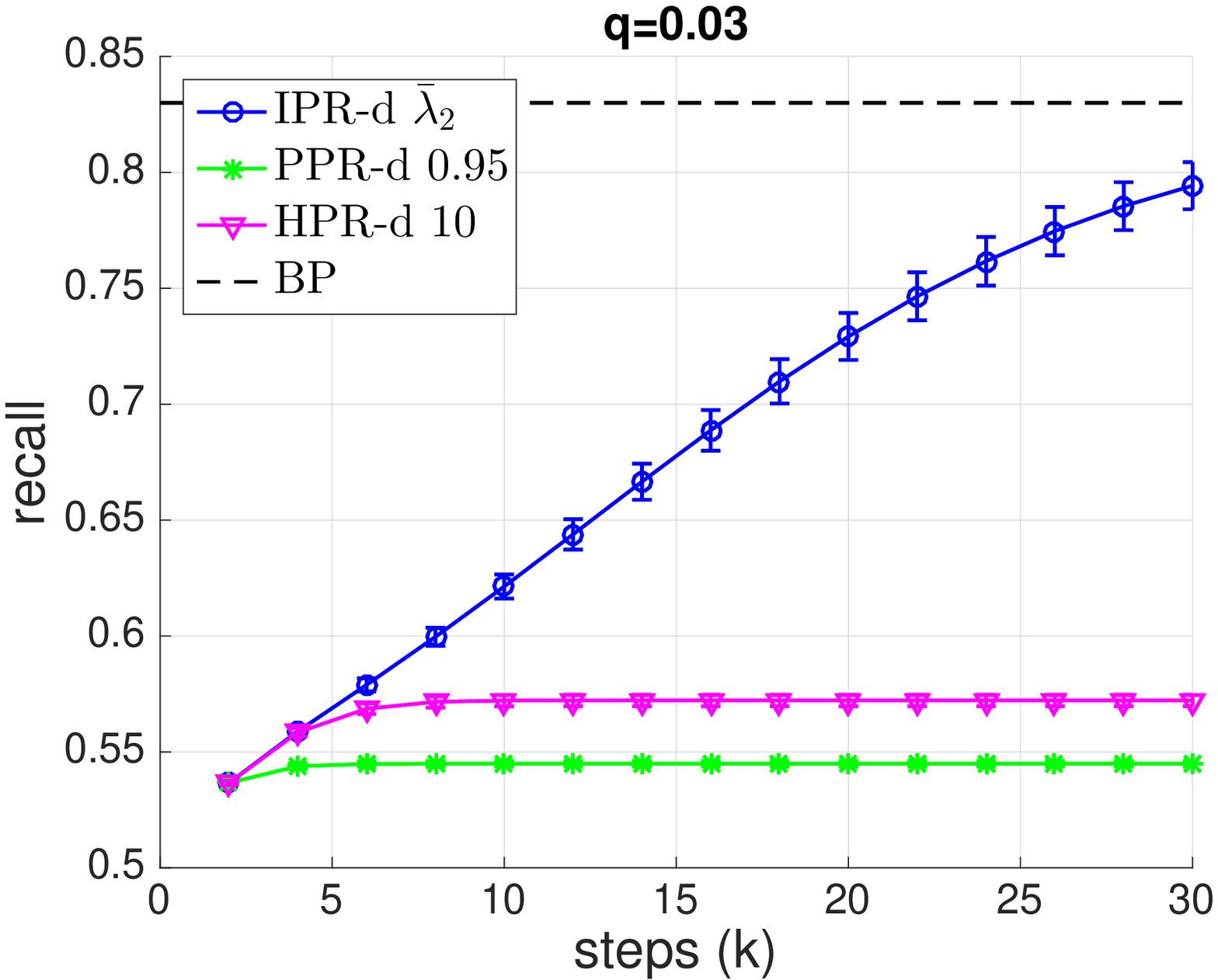}
\hspace{.005\textwidth}
\rule{0.005cm}{2.9cm}
\hspace{.005\textwidth}
\includegraphics[trim={0cm 0cm 0cm 0.7cm},clip,width=.23\textwidth]{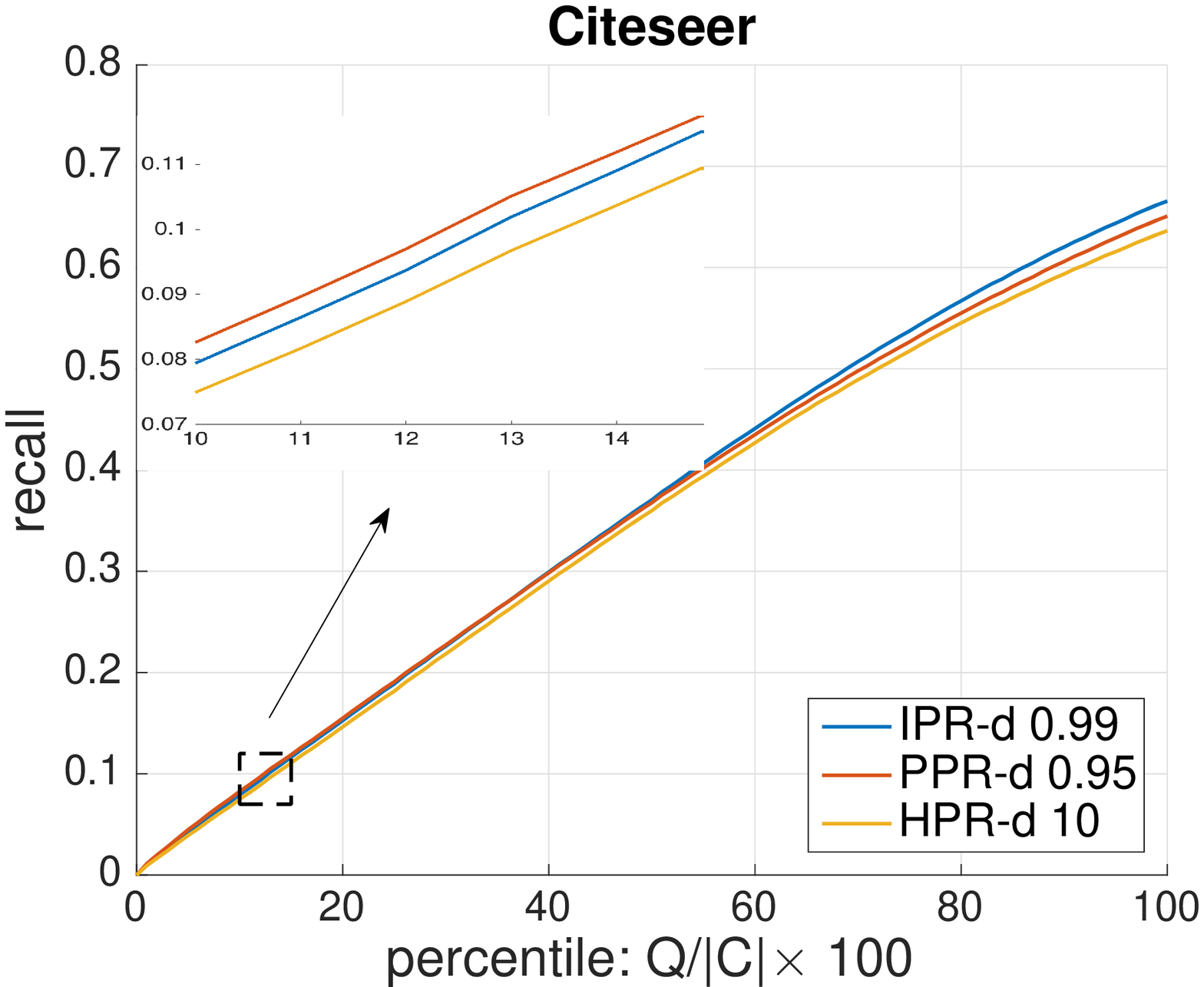}
\includegraphics[trim={0cm 0cm 0cm 0.7cm},clip,width=.23\textwidth]{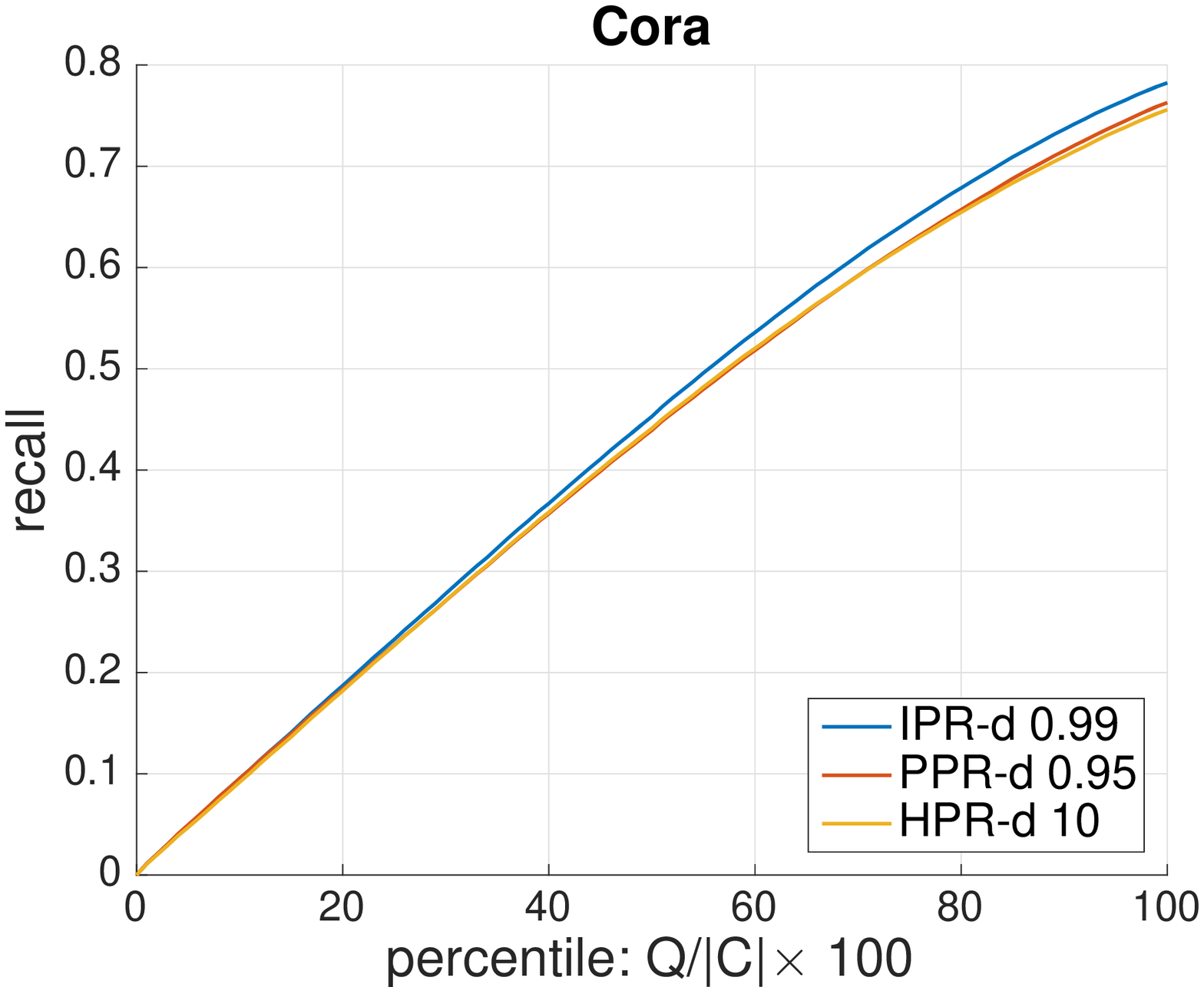}
\includegraphics[trim={0cm 0cm 0cm 0.7cm},clip,width=.23\textwidth]{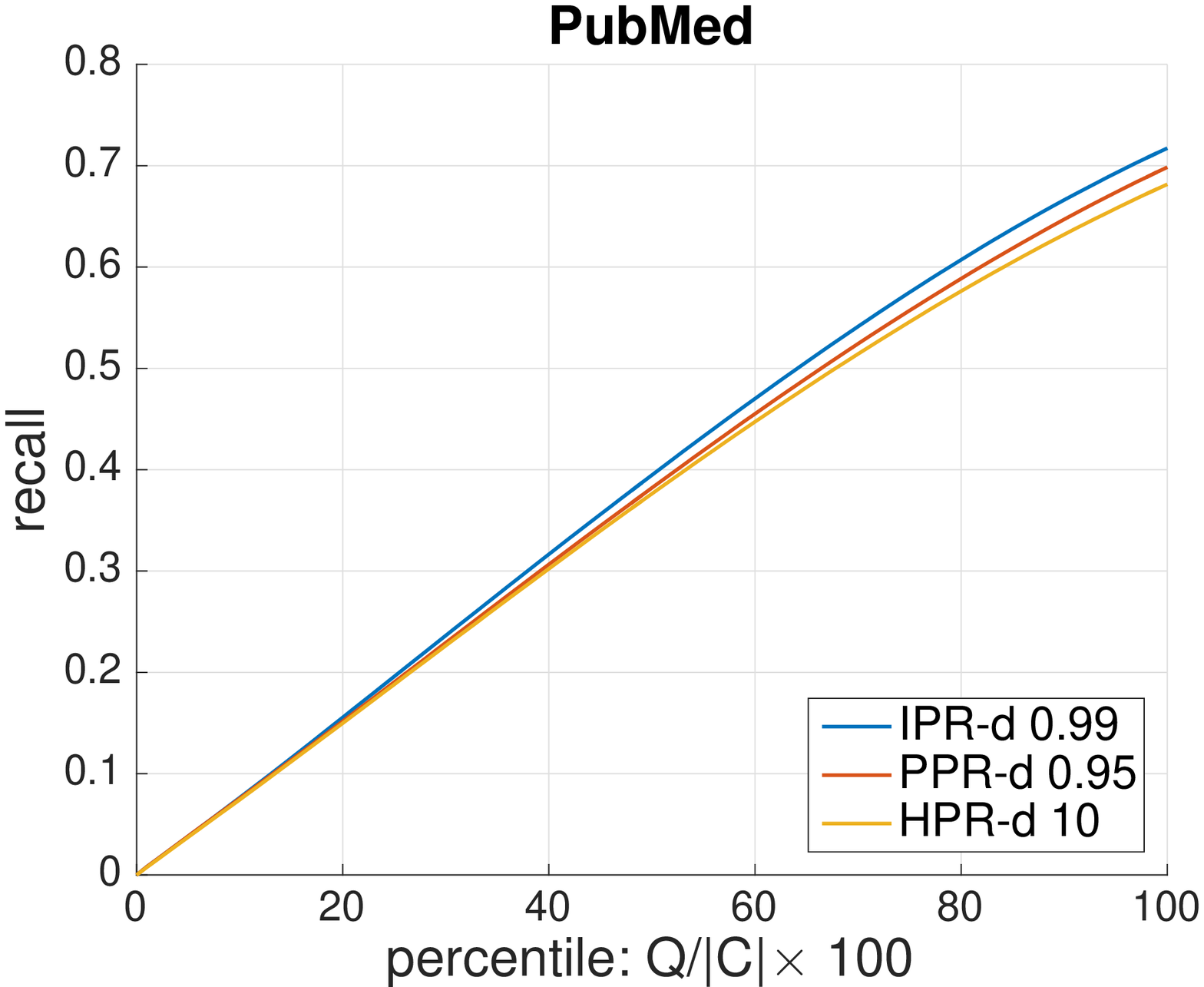}\\
\vspace{-0.17cm}
\caption{\footnotesize{(Left): Recalls (mean $\pm$ std) for different PRs over SBMs with parameters $(500, 0.05, 500, 0.05, q)$, $q\in\{0.02, 0.03\}$; (Right): Results over the Citeseer, Cora and PubMed networks (from left to right). First line: Recalls (mean $\pm$ std) of different PRs vs steps. The second line: Averaged recalls of different PRs for the top-$Q$ vertices, obtained by accumulating the LPs with $k\leq 50$.}}
\label{fig:citationnet}
\vspace{-0.3cm}
\end{figure} 

\textbf{Citeseer, Cora and PubMed.} 
Here as well, PRs with degree normalization perform better than PRs without degree normalization. Hence, we only display the results obtained with degree normalization. The first line of Figure~\ref{fig:citationnet} (Right) shows that IPR-d $0.99$ significantly outperforms both PPR-d and HPR-d for all three networks. Moreover, the performance of IPR-d $0.99$ improves with increasing $k,$ once again establishing that LPs for large $k$ are still predictive. The results for IPR-d $0.90,\, 0.95$ and a related discussion are postponed to Section~\ref{app:exp-cora} in the Supplement. 


The second line of Figure~\ref{fig:citationnet} (Right) illustrates the rankings of vertices within the predicted community given the first $50$ steps of the RW. Note that only for the Citeseer network does PPR provide a better ranking of vertices in the community for small $Q$; for the other two networks, IPR outperforms PPR and HPR on the whole ranking of vertices.

 \begin{figure}[t]
 \begin{minipage}{0.33\textwidth}
\footnotesize
\centering 
\includegraphics[trim={0cm 0cm 0cm 0cm},clip,width=0.85\textwidth]{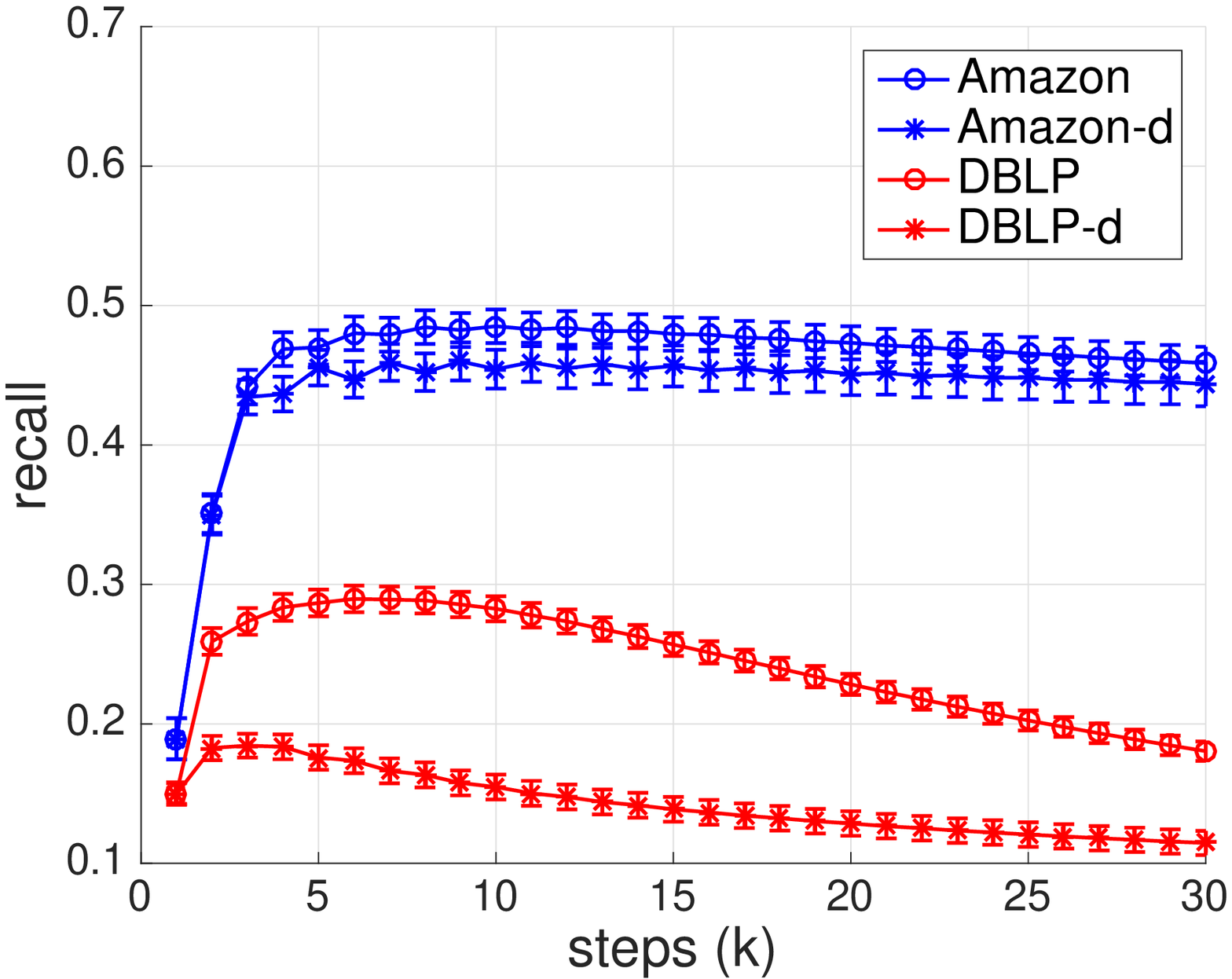} \\
\captionof{figure}{\footnotesize{Recalls based on one-step LPs and one-step DNLPs.}}
\label{fig:largenet}
 \end{minipage}
 \hfill
\begin{minipage}{0.64\textwidth}
\scriptsize
\centering
\begin{tabular}{p{0.8cm}<{\centering}||p{0.5cm}<{\centering}| p{0.5cm}<{\centering} | p{0.5cm}<{\centering} | p{0.5cm}<{\centering}||p{0.5cm}<{\centering}| p{0.5cm}<{\centering} | p{0.5cm}<{\centering} | p{0.5cm}<{\centering} }
Steps $k$ & $5$ & $10$ & $15$ & $20$ & $5$ & $10$ & $15$ & $20$\\
\hline
\hline
 &  \multicolumn{4}{c||}{Amazon (std: $\pm$0.12)} & \multicolumn{4}{c}{DBLP (std: $\pm$0.09)} \\
\hline 
IPR0.99 & 46.63 & 48.03 &  \textbf{48.43} & \textbf{48.53}  &  27.58 & 28.78 &   \textbf{29.18} &  \textbf{29.27}\\
\hline
IPR0.95 & 46.64 &  48.04 &  \textbf{48.44} & \textbf{48.53} &  27.60 & 28.94 &   \textbf{29.20} &  \textbf{29.28} \\
\hline
IPR0.90 & 46.67 & 48.08 &  \textbf{48.45} & \textbf{48.53}  &  27.64 &  \textbf{29.14} &   \textbf{29.26} &  \textbf{29.32} \\
\hline
PPR & 46.57 & 47.92 & 48.30 &  48.43 &  27.46 & 28.49 & 28.90 & 29.06  \\
\hline
HPR & \textbf{47.20} & \textbf{48.36} &  \textbf{48.54} & \textbf{48.55} &   \textbf{28.24} &  28.80 & 28.85 & 28.85   \\
\end{tabular}
\vspace{-0.02cm}
\captionof{table}{\footnotesize{Recalls (mean $\pm$ std) for different PRs over the Amazon and the DBLP networks. The boldfaced values are those within one std away from the optimal values for a given fixed $k$. }} 
\label{tab:largenet}
\end{minipage}
\vspace{-0.5cm}
 \end{figure}

\textbf{Amazon, DBLP.} We first preprocess these networks by following a standard approach described in Section~\ref{app:exp-Ama} of the Supplement. As opposed to the networks in the previous two categories, the information in the vertex degrees is extremely predictive of the community membership for this category. Figure~\ref{fig:largenet} shows the predictiveness based on one-step LPs and DNLPs for these two networks.  As may be seen, degree normalization may actually hurt the predictive performance of LPs for these two networks. This observation coincides with the finding in~\cite{kloumann2014community}. Hence, for this case, we do not perform degree normalization. As recommended in Section~\ref{sec:sudoF}, the weights are chosen as $\gamma_k = \frac{\theta^k}{(\theta^k + \phi)^2},$ where $\theta, \phi$ are parameters to be tuned. The value of $\phi$ typically depends on how informative the degree of a vertex is. Here, we simply set $\phi = \theta^{10}$ which  makes $\gamma_k$ achieve its maximal value for $k=10$. We also find that for both networks, $\alpha=0.95$ is a good choice for PPR while for HPR, $h=10$ and $h=5$ are adequate for the Amazon and the DBLP network, respectively. 
%

Further results are listed in Table~\ref{tab:largenet}, indicating that HPR outperforms other PR methods when $k=5$; HPR is used with parameter $\geq 5$, and the weights for the first $5$ steps in HPR increase. This yet again confirms our findings regarding the predictiveness of large-step LPs. For larger $k$, IPR matches the performance of HPR and even outperforms HPR on the DBLP network. Vertex rankings within the communities are available in Section~\ref{app:exp-Ama} of the Supplement.

\vspace{-0.2cm}
\section{Discussion and Future Directions} \label{app:future}
\vspace{-0.2cm}
There are many directions that may be pursued in future studies, including: \\
(1) Our non-asymptotic analysis works for relatively dense graphs for which the minimum degree equals $\bar{d}_{\min} = \omega(\log n)$. A relevant problem is to investigate the behavior of GPR over sparse graphs. \\
(2) The derived weights ignore the correlations between LPs corresponding to different step-lengths. Characterizing the correlations is a particularly challenging and interesting problem. \\
(3) Recently, research for network analysis has focused on networks with higher-order structures. PPR and HPR-based methods have been generalized to the higher-order setting~\cite{li2018quadratic,ikeda2018finding}.  Analysis has shown that these higher-order GPR methods may be used to detect communities of networks that approximates higher-order network (motif/hypergraph) conductance~\cite{li2019quadratic,ikeda2018finding}. Related works also showed that PR-based approaches are powerful for practical community detection with higher-order structures~\cite{yin2017local}. Hence, generalizing our analysis to higher-order structure clustering is another topic for future consideration. A follow-up work on the mean-field analysis of higher-order GPR methods may be found in~\cite{chien2019landing}. \\
(4) Our work provides new insights regarding SE community detection. Re-deriving the non-asymptotic results for other GPR-based applications, including recommender system design and link prediction, is another class of problems of interest. For example, GRP/RW-based approaches are frequently used on commodities-user bipartite graphs of recommender systems. There, one may model the network as a random graph with independent edges that correspond to one-time purchases governed by preference scores of the users. Similarities of vertices can also be characterized by GPRs and used to predict emerging links in networks~\cite{liben2007link}. In this setting, it is reasonable to assume that the graph is edge-independent but with different edge probabilities. Analyzing how the GPR weights influence the similarity scores to infer edge probabilities may improve the performance of current link prediction methods. 

\vspace{-0.2cm}
\section{Acknowledgement}
\vspace{-0.2cm}

This work was supported by the NSF STC Center for Science of Information at Purdue University. The authors also gratefully acknowledge useful discussions with Prof. David Gleich from Purdue University.

\newpage
\appendix

\begin{appendix}

\begin{center}
{\Large \textbf{Supplement}}
\end{center}
\end{appendix}

We describe the properties of the datasets used in our evaluations as well as more specific algorithmic implementation steps used in our numerical experiments. We then proceed to 
provide detailed proofs for the main lemmas and theorems. 

\section{Supplementary Information for Experiments} \label{app:exp}
Table~\ref{tab:data} describes the properties of the datasets used in our evaluations in more detail. 

For all real world networks, we first extract the largest connected component of each network in the preprocessing step. For Citeseer and Cora, we arrive at a networks with $2,120$ and $2,485$ vertices, respectively. Other networks considered are connected and thus used in their original form. 
\begin{table}[h]
\centering
\begin{tabular}{p{1.5cm}<{\centering} p{1.5cm}<{\centering} p{1.5cm}<{\centering} p{2.3cm}<{\centering} }
\hline
Name & $\#$ Vertices & $\#$ Edges & $\#$ Communities \\
\hline
\hline
Citeseer & 3,233 & 9,464 & 6\\
Cora & 2,708 & 10,858 & 7\\
PubMed & 19,717 & 88,676 & 3 \\
\hline
Amazon & 334,863 & 925,872 & 151,037 \\
DBLP & 317,080 & 1,049,866 & 13,477 \\
\hline
\end{tabular}
\vspace{0.05in}
\caption{Description of the analyzed real world networks.} 
\label{tab:data}
\end{table}

\subsection{Additional Observations for the Citeseer, Cora and PubMed Network Evaluations} \label{app:exp-cora}
Figure~\ref{fig:citationnet2} (Left) demonstrates that for these three networks, PRs with degree normalization perform better than their unnormalized counterparts. Figure~\ref{fig:citationnet2} (Right) compares IPR-d with different parameters and demonstrates that the performance in the first few steps offered by IPR with $\theta$ equal to $0.95$ and $0.90$ is significantly better than that of IPR with parameter $0.99$, but that it afterwards remains the same or even degrades with increasing $k$. To better understand this phenomenon, we computed the $\lambda_2$ value of these three networks, which equal to $0.9985$, $0.995$ and $0.9859$, respectively. As predicted in Section~\ref{sec:sudoF}, setting $\theta=\lambda_2$ is helpful for obtaining a stable IPR, while more steps are required to ``saturate'' the performance. In practice, computing $\lambda_2$ for massive networks is time consuming, so we suggest to conservatively select a large $\theta$ and only focus on the first several steps of the random walk. Note that according to our experiments, the averaged maximal lengths of all shortest paths between unlabeled vertices and the seed sets are as follows: Citeseer, $16.0$; Cora, $11.4$; PubMed: $10.6$. Therefore, the recommended choices for $k$ are $80,57$ and $53$, respectively.

\begin{figure}[h]
\centering 
\includegraphics[trim={0cm 0cm 0cm 0cm},clip,width=0.235\textwidth]{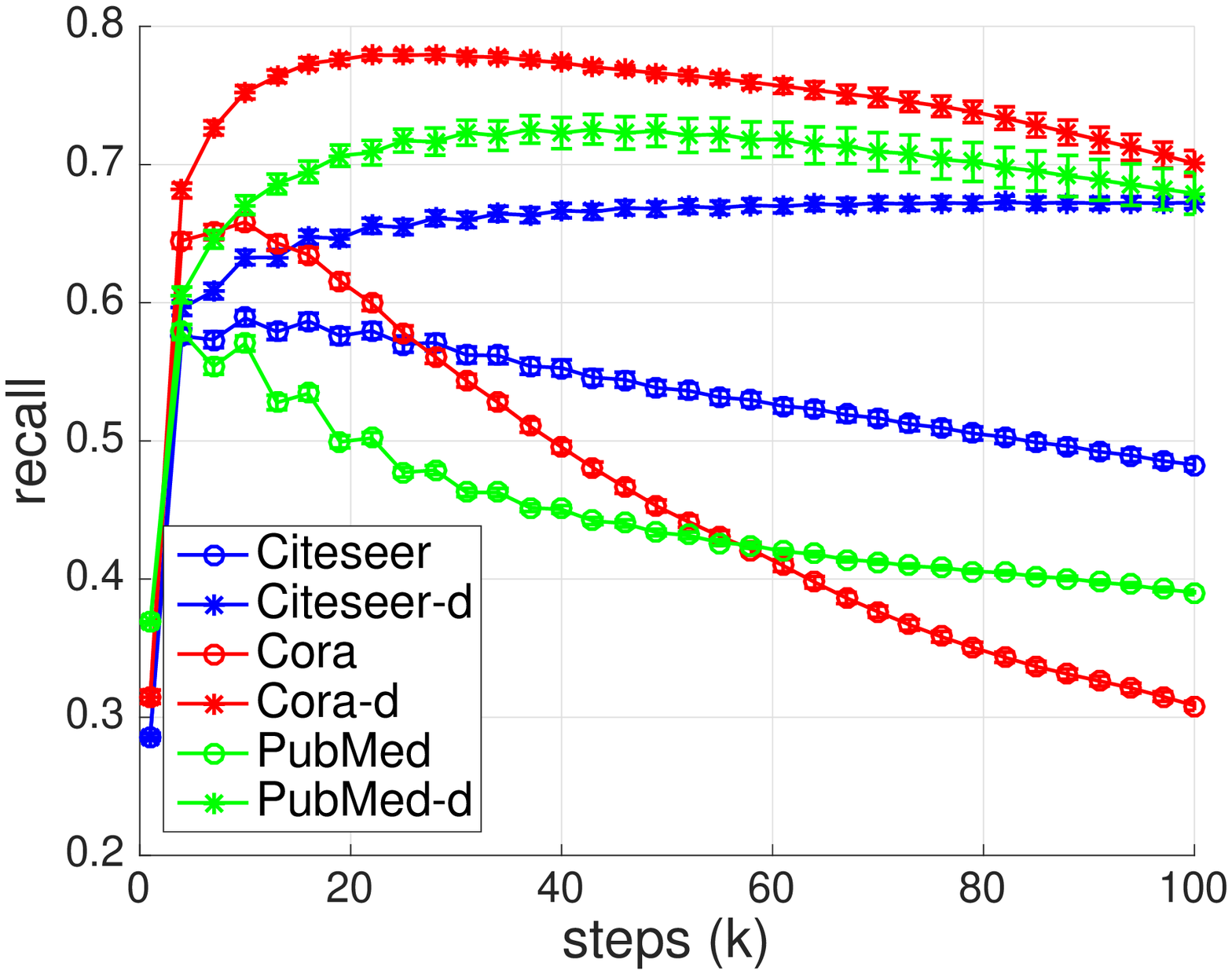}
\hspace{.005\textwidth}
\rule{0.005cm}{2.6cm}
\hspace{.005\textwidth}
\includegraphics[trim={0cm 0cm 0cm 0.7cm},clip,width=.235\textwidth]{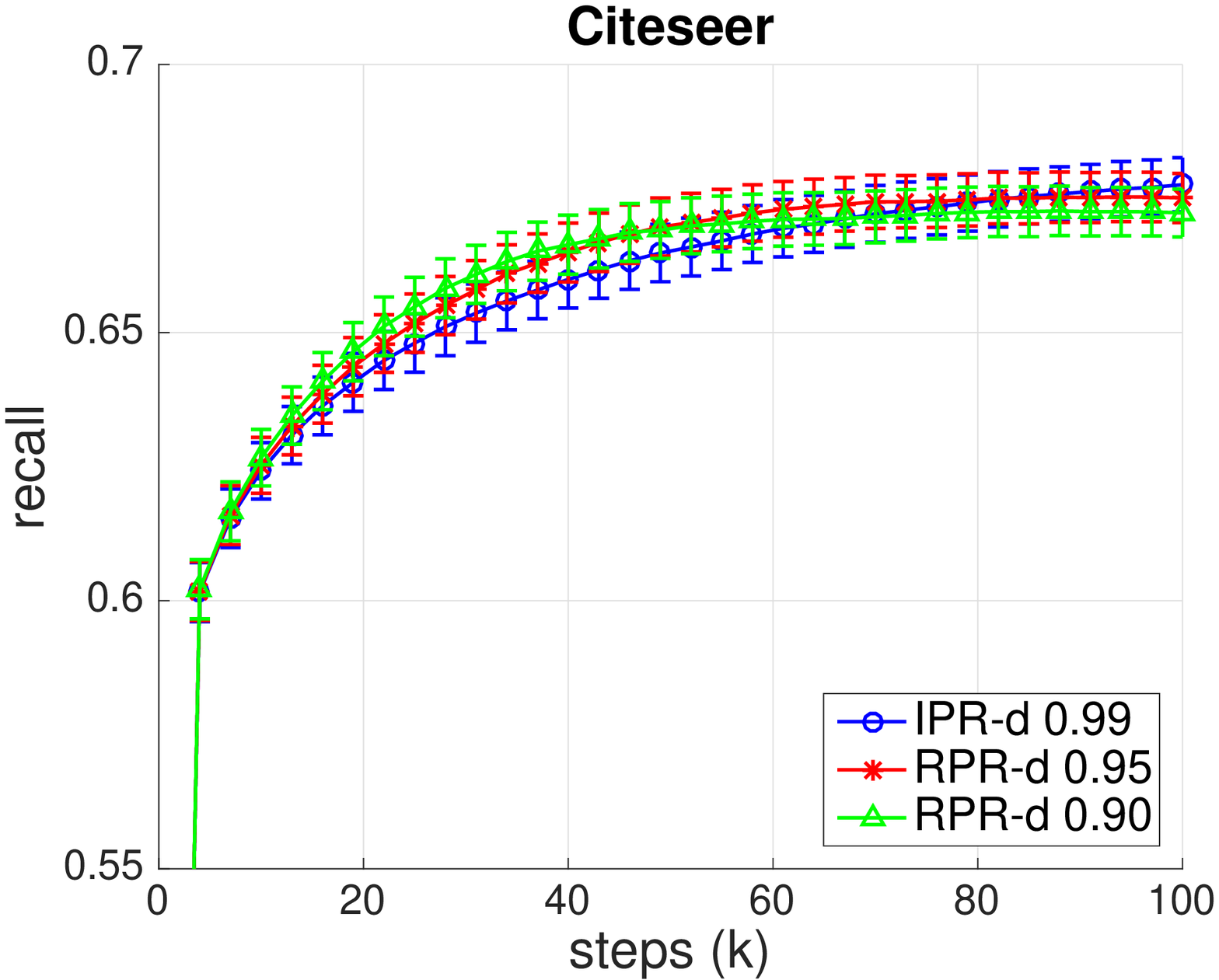}
\includegraphics[trim={0cm 0cm 0cm 0.7cm},clip,width=.235\textwidth]{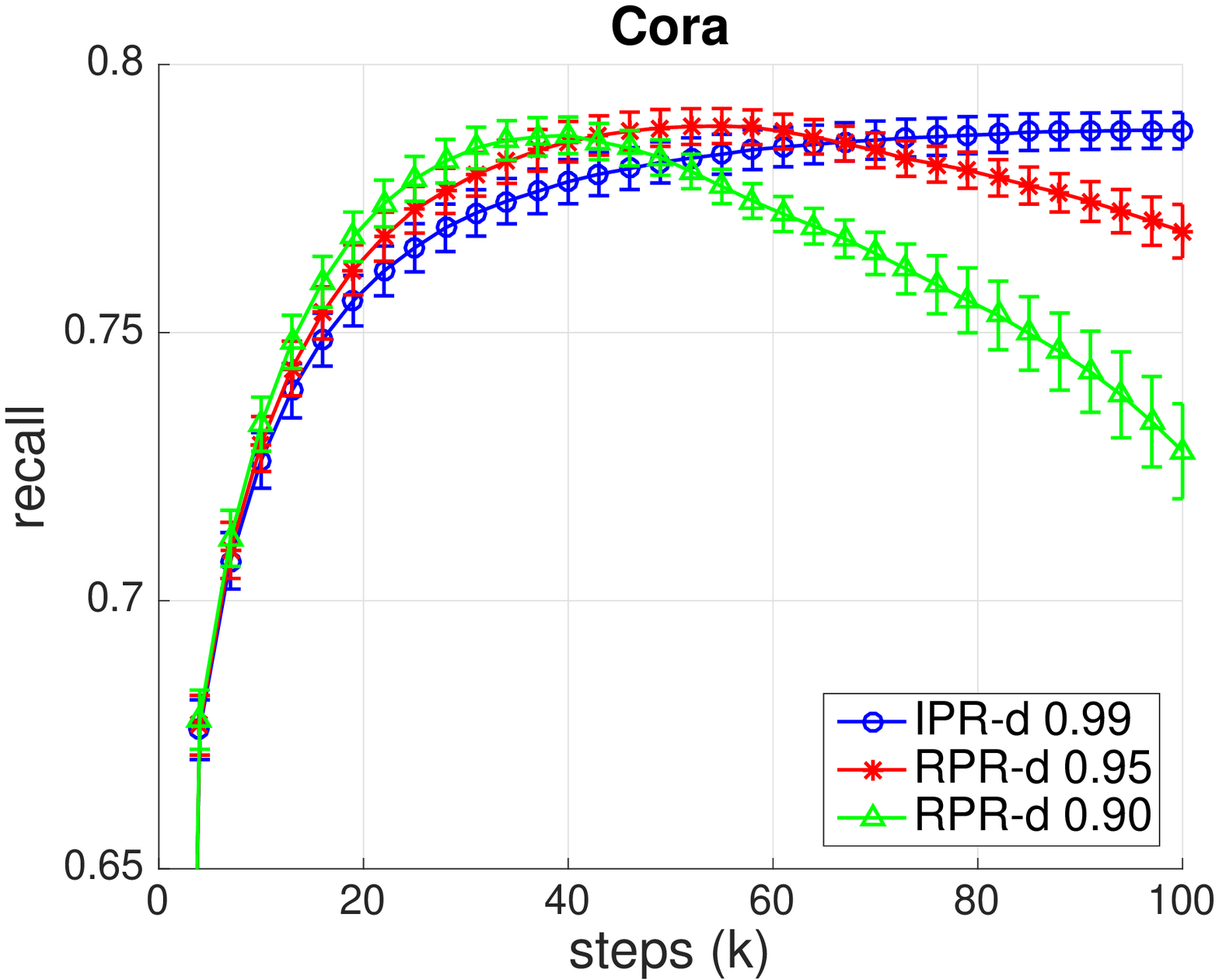}
\includegraphics[trim={0cm 0cm 0cm 0.7cm},clip,width=.235\textwidth]{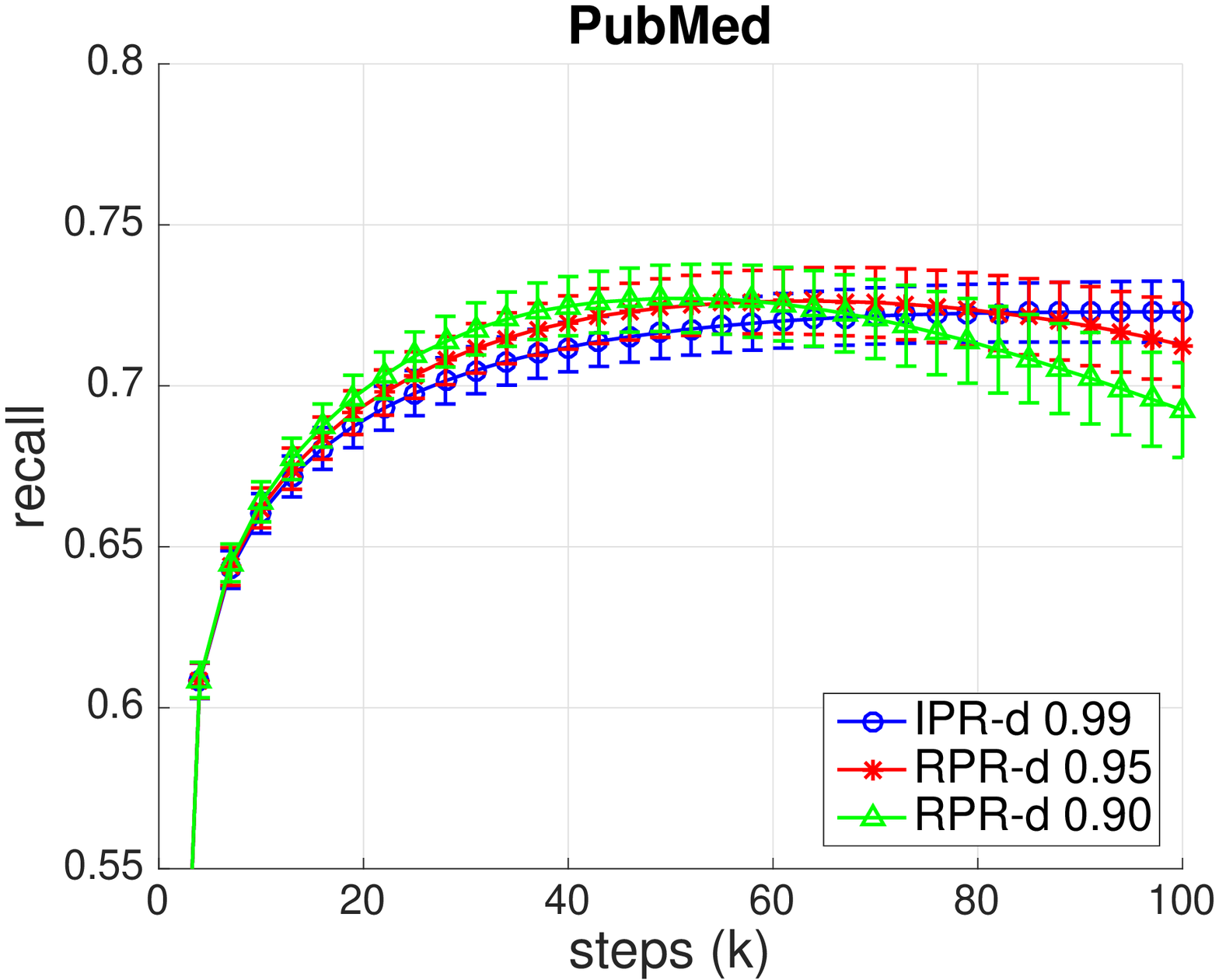} \\
\caption{(Left:) Recalls based on one-step LPs and one-step DNLPs. (Right:) Results over the Citeseer (left), Cora  (mid) and PubMed (right) networks. Recalls (mean $\pm$ std) of different PRs vs steps.}
\label{fig:citationnet2}
\end{figure}

\subsection{Additional Observations for the Amazon and DBLP Network Evaluations} \label{app:exp-Ama}
Note that the evaluation metric we adopt is adequate for communities of similar sizes, which is not the case for the Amazon and DBLP communities. We hence further restrict our choice of community structures to analyze for these two networks. We perform a preprocessing method used in~\cite{kloster2014heat, kloumann2014community}: We select communities closest in size to $m^{3/4}$, where $m$ is the size of the largest community. This leads to $113$ communities with sizes in $[1500, 3000]$ for the Amazon network, and $101$ communities with sizes in $[650, 1000]$ for the DBLP network. For each test, starting from the seed set, we first perform a 4-step (3-step) breadth-first-search to extract a sub-network of the Amazon (DBLP) network. This allows our methods to work locally, and is similar in strategy to the approach adopted in~\cite{he2015detecting}. The obtained sub-networks have $10k$ - $40k$ vertices and cover, on average, more than 70\% of the communities of interest. Note that due to the way the sub-networks are constructed, the averaged maximal lengths of all shortest paths between unlabeled vertices and the seed sets are simple to compute: For Amazon, this number equals $4$; for DBLP, $3$. Therefore, accumulating over the first $15-20$ steps of the RW is appropriate. Using the obtained sub-networks, we evaluated different GPR approaches.

Figure~\ref{fig:largenet2} further illustrates the rankings of vertices within the predicted communities after accumulating the results of the first $20$ LPs. Note that for the Amazon network, all three PRs give similar ranking results while for the DBLP network, IPR with parameter $0.9$ outperforms the other two PRs. Again, according to the results for the DBLP network, PPR performs well when ranking vertices with small budgets $Q$.  
\begin{figure}[t]
\centering 
\includegraphics[trim={0cm 0cm 0cm 0cm},clip,width=.3\textwidth]{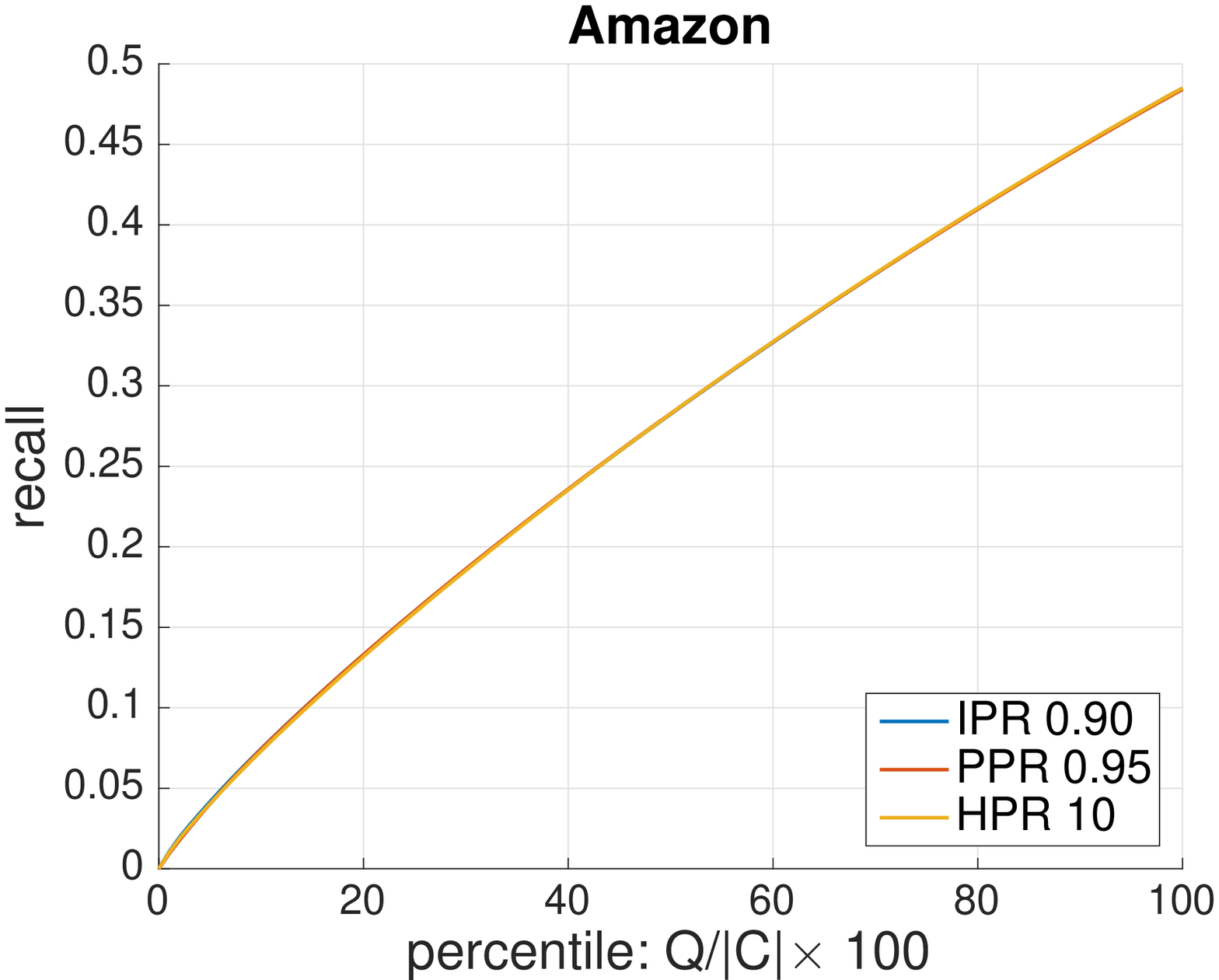}
\includegraphics[trim={0cm 0cm 0cm 0cm},clip,width=.3\textwidth]{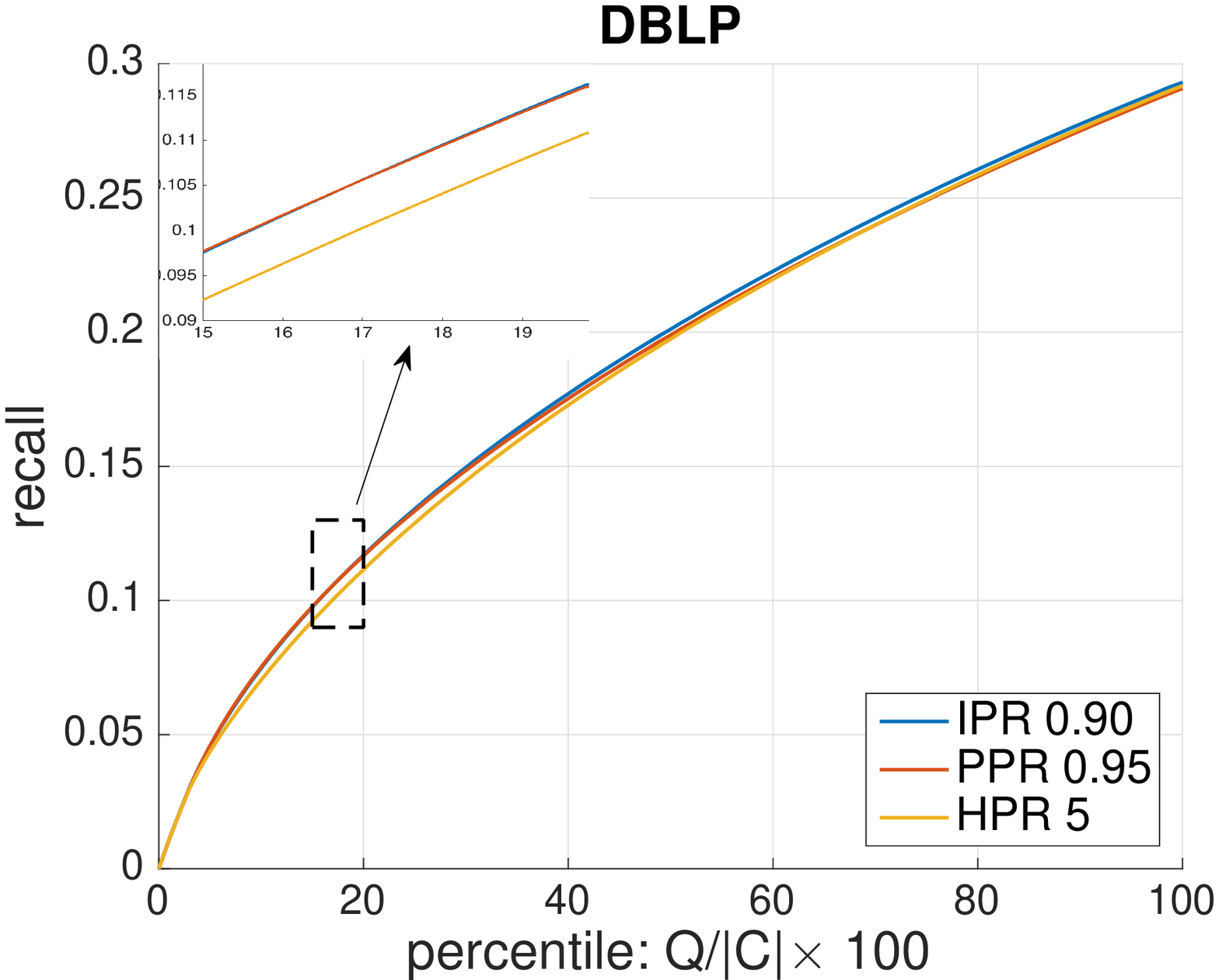}\\
\caption{Averaged recalls for different PRs over the Amazon and the DBLP network with respect to the top-$Q$ vertices, obtained by accumulating the LPs for $k\leq 20$.}
\label{fig:largenet2}
\end{figure}

\section{Proofs of the Results in Section~\ref{sec:convanalysis}}\label{app:proof}
\subsection{Proof of Lemma~\ref{lem:negeg}}
It is straightforward to see that $\mathbb{E}[|A_{uv} - p_{uv}|^2] = p_{uv}(1 - p_{uv})\leq p_{uv}$. Hence, $\mathbb{E}[\sum_{u\in V} |A_{uv} - p_{uv}|^2] \leq \bar{d}_{v}$. Using Bernstein's inequality and $\mathbb{E}[|A_{uv} - p_{uv}|] = 2p_{uv}(1 - p_{uv})$, we obtain
\begin{align}\label{eq:lb2}
\mathbb{P}\left[\sum_{u\in V} |A_{uv} - p_{uv}| < 2 \sum_{u\in V} p_{uv}(1 - p_{uv}) -  \frac{\epsilon}{2} \bar{d}_v \right] \leq e^{-\frac{\epsilon^2}{10}\bar{d}_v}.
\end{align}
Moreover, as $|d_v- \bar{d}_v| = |\sum_{u\in V} (A_{uv} - p_{uv})|$, using Bernstein's inequality once again, we conclude that
\begin{align}\label{eq:lb3}
\mathbb{P}\left[|d_v- \bar{d}_v| >  \frac{\epsilon}{2} \bar{d}_v \right] \leq 2e^{-\frac{\epsilon^2}{10}\bar{d}_v}.
\end{align}
Therefore, with probability at least $1- 3e^{-\frac{\epsilon^2}{10}\bar{d}_v}$, it holds that 
\begin{align*}
&\|x^{(1)} - \bar{x}^{(1)}\|_1=  \sum_{u\in V} |\frac{A_{uv}}{d_v}  - \frac{A_{uv}}{\bar{d}_v}+\frac{A_{uv}}{\bar{d}_v} -  \frac{p_{uv}}{\bar{d}_v}|   \\
&\geq \frac{\sum_{u\in V} |A_{uv} - p_{uv}|}{\bar{d}_v} - \frac{|d_v- \bar{d}_v|}{\bar{d}_v} \\
 & \stackrel{a)}{\geq} \frac{2 \sum_{u\in V} p_{uv}(1 - p_{uv}) -  \frac{\epsilon}{2} \bar{d}_v -  \frac{\epsilon}{2} \bar{d}_v }{\bar{d}_v} \stackrel{b)}{\geq} 2 (1- \frac{\bar{d}_v}{n})  - \epsilon \stackrel{c)}{\geq} \epsilon,
\end{align*}
where $a)$ follows from plugging in~\eqref{eq:lb2} and~\eqref{eq:lb3} into the underlying expression, $b)$ is a consequence of the fact that $\sum_{u\in V} p_{uv}^2 \leq \frac{(\sum_{u\in V} p_{uv})^2}{n} = \frac{\bar{d}_v^2}{n}$ and $c)$ follows from $\bar{d}_v \leq n(1-\epsilon)$. As $\bar{d}_v = \omega(1)$, the above probability converges to $1$ as $n\rightarrow \infty$.

\subsection{Proof of Lemma~\ref{lem:ub} and Lemma~\ref{lem:var}}
Before proving Lemma~\ref{lem:ub} and Lemma~\ref{lem:var} we introduce some useful notation. For a graph with adjacency matrix $A$ and degree matrix $D$, the Randi\'{c} matrix is defined as $R = D^{-\frac{1}{2}}AD^{-\frac{1}{2}}$. Denote  its mean-field value by $\bar{R} = \bar{D}^{-\frac{1}{2}} \bar{A}\bar{D}^{-\frac{1}{2}}$. It is straightforward to see that the eigenvalues of the RW matrix $W$ of an undirected graph are the same as those of $R$. Furthermore, let $d_N$ be the normalized degree vector of $G$, i.e., $d_{N,v} = d_v / \sum_{u\in V} d_u$, and let $D_N$ be the normalized diagonal degree matrix, i.e., $D_{N} =  (\sum_{u\in V} d_u)^{-1} D$. The spectral norm of a matrix $M$ is denoted by $\|M\|_2$. Throughout the Supplement, we also use $\lesssim$ to indicate that the upper bound 
ignores multiplicative constants.
\begin{lemma} \label{lem:degreeconv}
For any edge-independent random graph, if $\bar{d}_{\min} = \omega(\log n),$ there exist a constant $C$ and $b\in\{0.5, 1\}$ such that
\begin{align*}
\max_{v\in V} |\frac{d_v^{b}}{\bar{d}_v^{b}} - 1| \lesssim \sqrt{\frac{\log(n)}{\bar{d}_{\min}}}\, w.h.p.
\end{align*}
\end{lemma}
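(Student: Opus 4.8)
The plan is to reduce both scales $b=1$ and $b=1/2$ to a single Bernstein-type concentration statement for the degrees, in the same spirit as the proof of Lemma~\ref{lem:negeg}. First I would treat $b=1$. Fix a vertex $v$ and write $d_v = \sum_{u\in V} A_{uv}$, a sum of independent Bernoulli variables with mean $\bar{d}_v$ and variance $\sum_{u} p_{uv}(1-p_{uv}) \le \bar{d}_v$. Bernstein's inequality then gives, for every $t>0$,
\[
\mathbb{P}\big[|d_v - \bar{d}_v| \ge t\big] \le 2\exp\!\Big(-\frac{t^2/2}{\bar{d}_v + t/3}\Big).
\]
I would set $t = C_0\sqrt{\bar{d}_v \log n}$ for a constant $C_0$ to be fixed later. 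Because $\bar{d}_v \ge \bar{d}_{\min} = \omega(\log n)$, we have $t = o(\bar{d}_v)$, so for $n$ large the denominator is at most $2\bar{d}_v$ and the tail is bounded by $2\exp\big(-C_0^2(\log n)/4\big) = 2 n^{-C_0^2/4}$.

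Next I would take a union bound over the $n$ vertices: with probability at least $1 - 2 n^{1 - C_0^2/4}$, one has $|d_v - \bar{d}_v| \le C_0\sqrt{\bar{d}_v \log n}$ simultaneously for all $v\in V$; picking $C_0$ large enough turns this into a high-probability event. On this event, dividing by $\bar{d}_v$ yields $\big|\tfrac{d_v}{\bar{d}_v} - 1\big| \le C_0 \sqrt{\tfrac{\log n}{\bar{d}_v}} \le C_0 \sqrt{\tfrac{\log n}{\bar{d}_{\min}}}$, which is the claim for $b=1$. For $b=1/2$ I would avoid repeating the argument and instead invoke the elementary identity $\sqrt{d_v} - \sqrt{\bar{d}_v} = \tfrac{d_v - \bar{d}_v}{\sqrt{d_v}+\sqrt{\bar{d}_v}}$. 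On the same high-probability event $d_v > 0$, so dividing by $\sqrt{\bar{d}_v}$ and using $\sqrt{d_v}+\sqrt{\bar{d}_v} \ge \sqrt{\bar{d}_v}$ gives
\[
\Big|\tfrac{\sqrt{d_v}}{\sqrt{\bar{d}_v}} - 1\Big| = \frac{|d_v - \bar{d}_v|}{\sqrt{\bar{d}_v}\,(\sqrt{d_v}+\sqrt{\bar{d}_v})} \le \frac{|d_v - \bar{d}_v|}{\bar{d}_v} \le C_0\sqrt{\tfrac{\log n}{\bar{d}_{\min}}},
\]
so a single concentration inequality covers both values of $b$.

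There is no genuine obstacle here; the only point requiring care is checking that the hypothesis $\bar{d}_{\min} = \omega(\log n)$ is exactly what makes the deviation at scale $\sqrt{\bar{d}_v \log n}$ simultaneously (i) of smaller order than $\bar{d}_v$, so that Bernstein's bound sits in its sub-Gaussian regime, and (ii) small enough to survive the union bound over $n$ vertices. I would also remark that the constant $C$ hidden in $\lesssim$ is determined by the target exponent $c$ in the definition of ``with high probability'' via the choice of $C_0$, and is in particular independent of $n$, $k$, and $v$, so the statement holds as written.
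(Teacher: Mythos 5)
Your proposal is correct and follows essentially the same route as the paper's proof: Bernstein's inequality applied to $d_v-\bar{d}_v=\sum_u(A_{uv}-p_{uv})$ with deviation scale $\sqrt{\bar{d}_v\log n}$, a union bound over the $n$ vertices, and then the $b=1/2$ case deduced from the $b=1$ case (your identity $\sqrt{d_v}-\sqrt{\bar{d}_v}=(d_v-\bar{d}_v)/(\sqrt{d_v}+\sqrt{\bar{d}_v})$ is just a cleaner way of writing the paper's sandwich bound on $d_v^{1/2}-\bar{d}_v^{1/2}$). No gaps.
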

\vspace{-0.2cm}
\begin{proof}
Based on the Bernstein's inequality, we have 
\begin{align*}
\mathbb{P}\left(\sum_{u\in V}(A_{vu} - p_{vu}) > c\sqrt{\sum_{u\in V} p_{vu} \log n }\right) \leq  n^{-c/4}.
\end{align*}
Note that $\sum_{u\in V}(A_{vu} - p_{vu}) = \sum_{u\in V}A_{vu} - \sum_{u\in V} p_{vu}) = d_v -\bar{d}_v$. By dividing both sides by $\bar{d}_v$ and using the union bound, we have 
\begin{align*}
\mathbb{P}\left(\max_{v\in V} |\frac{d_v}{\bar{d}_v} - 1| > \max_{v\in V}  c\sqrt{\frac{\log n}{\bar{d}_{v}}}\right) \leq  2n^{-(c/4-1)}.
\end{align*}
Moreover, choosing $c>3$ and observing that for all $v\in V$, $\bar{d}_v - c\sqrt{\bar{d}_v\log n} \leq d_v \leq \bar{d}_v +c\sqrt{\bar{d}_v\log n }\,\, w.h.p.$, we have 
$- \frac{c}{2}\sqrt{\log n} \lesssim d_v^{\frac{1}{2}} - \bar{d}_v^{\frac{1}{2}} \lesssim  \frac{c}{2}\sqrt{\log n }\,\, w.h.p.$
This proves the claimed result. 
\end{proof}
\begin{lemma} \label{lem:statub}  For any edge-independent random graph model, if $\bar{d}_{\min} = \omega(\log n)$, then 
\begin{align*}
\|d_{N} - \bar{d}_{N}\|_2 \lesssim  \sqrt{\frac{\log n}{n \bar{d}_{\min}}} \, w.h.p.
\end{align*}
\end{lemma}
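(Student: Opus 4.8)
\textbf{Proof proposal for Lemma~\ref{lem:statub}.} The plan is to express the difference $d_{N,v} - \bar{d}_{N,v}$ in terms of the unnormalized degrees and the total edge mass, then control each source of fluctuation using the concentration bound already established in Lemma~\ref{lem:degreeconv}. Write $S = \sum_{u\in V} d_u$ and $\bar{S} = \sum_{u\in V} \bar{d}_u$, so that $d_{N,v} = d_v/S$ and $\bar{d}_{N,v} = \bar{d}_v/\bar{S}$. First I would decompose
\begin{align*}
d_{N,v} - \bar{d}_{N,v} = \frac{d_v - \bar{d}_v}{S} + \bar{d}_v\left(\frac{1}{S} - \frac{1}{\bar{S}}\right) = \frac{d_v - \bar{d}_v}{S} - \frac{\bar{d}_v}{S\bar{S}}(S - \bar{S}).
\end{align*}
The term $S - \bar{S} = \sum_{u\in V}(d_u - \bar{d}_u)$ is a sum over all (ordered or unordered) vertex pairs of centered Bernoulli variables with total variance at most $\bar{S} \le n\bar{d}_{\max}$, so Bernstein gives $|S - \bar{S}| \lesssim \sqrt{n\bar{d}_{\max}\log n}$ with high probability; combined with $\bar S \geq n \bar d_{\min}$ this yields $S = (1+o(1))\bar S$ and in particular $S \gtrsim n\bar{d}_{\min}$ w.h.p.

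Next I would bound the two pieces in $\ell_2$. For the first piece, $\|d - \bar d\|_2^2 = \sum_v (d_v - \bar d_v)^2$; by Lemma~\ref{lem:degreeconv} with $b=1$ we have $|d_v - \bar d_v| \lesssim \bar d_v \sqrt{\log n/\bar d_{\min}} \le \bar d_{\max}\sqrt{\log n/\bar d_{\min}}$ simultaneously for all $v$ w.h.p., hence $\|d-\bar d\|_2 \lesssim \sqrt{n}\,\bar d_{\max}\sqrt{\log n/\bar d_{\min}}$; dividing by $S \gtrsim n\bar d_{\min}$ gives a contribution $\lesssim \bar d_{\max}\sqrt{\log n}/(\sqrt n\,\bar d_{\min}^{3/2})$. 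For the second piece, $\big\|\frac{\bar d_v}{S\bar S}(S-\bar S)\big\|_2 = \frac{|S-\bar S|}{S\bar S}\|\bar d\|_2 \le \frac{|S - \bar S|}{S\bar S}\sqrt n\,\bar d_{\max} \lesssim \frac{\sqrt{n\bar d_{\max}\log n}}{n^2\bar d_{\min}^2}\sqrt n\,\bar d_{\max}$, which simplifies to $\lesssim \bar d_{\max}^{3/2}\sqrt{\log n}/(n\,\bar d_{\min}^2)$. Both contributions are dominated (up to constants, using $\bar d_{\max}\le n$ and $\bar d_{\min} \le \bar d_{\max}$) by $\sqrt{\log n/(n\bar d_{\min})}$ — the first piece already gives this as long as $\bar d_{\max}\lesssim \bar d_{\min}^2$, which holds under $\bar d_{\min} = \omega(\log n)$ since $\bar d_{\max}\le n$ is not automatically $\lesssim \bar d_{\min}^2$, so I would need to be careful here and instead bound more crudely: $\|d - \bar d\|_2 \le \|d-\bar d\|_1^{1/2}\|d-\bar d\|_\infty^{1/2}$ or directly use that $\sum_v(d_v-\bar d_v)^2$ has expectation $\sum_v \mathrm{Var}(d_v) \le \sum_v \bar d_v = \bar S$, so a Bernstein/Chebyshev argument gives $\|d - \bar d\|_2 \lesssim \sqrt{\bar S\log n} \lesssim \sqrt{n\bar d_{\max}\log n}$, and dividing by $S\gtrsim n\bar d_{\min}$ recovers exactly $\sqrt{\bar d_{\max}\log n}/(\sqrt n\,\bar d_{\min})$; since $\bar d_{\max}\le n$ this is at most... still not quite $\sqrt{\log n/(n\bar d_{\min})}$ unless $\bar d_{\max}\lesssim \bar d_{\min}$.

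\textbf{Main obstacle.} The delicate point — and the step I expect to require the most care — is getting the $\ell_2$ bound on $d - \bar d$ to come out as $\sqrt{\log n/(n\bar d_{\min})}$ rather than a weaker bound scaling with $\bar d_{\max}$: the naive per-coordinate bound from Lemma~\ref{lem:degreeconv} loses a factor of roughly $\bar d_{\max}/\bar d_{\min}$. The right approach is \emph{not} to use the uniform per-vertex bound but to control $\|d - \bar d\|_2^2 = \sum_v (d_v - \bar d_v)^2$ directly as a quadratic functional of the independent edge variables: its mean is $\sum_v \mathrm{Var}(d_v) = \sum_{u,v} p_{uv}(1-p_{uv}) \le \bar S \le n\bar d_{\max}$, and concentration of this quadratic (via a Hanson–Wright-type inequality, or simply Markov since we only need a w.h.p.\ upper bound after inflating by $\mathrm{polylog}$) gives $\|d-\bar d\|_2 \lesssim \sqrt{\bar S \log n}$. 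Then $\|d_N - \bar d_N\|_2 \lesssim \|d-\bar d\|_2/S + |S-\bar S|\|\bar d\|_2/(S\bar S)$, and since $\|\bar d\|_2 \le \sqrt{\bar d_{\max}}\sqrt{\bar S}$ both terms are $\lesssim \sqrt{\bar S\log n}/\bar S = \sqrt{\log n/\bar S} \le \sqrt{\log n/(n\bar d_{\min})}$, which is the claim. Thus the key realization is to replace the coordinate-wise estimate by a second-moment estimate of the whole vector, exploiting that the degree fluctuations, while individually of size up to $\sqrt{\bar d_{\max}}$, sum in $\ell_2^2$ to only $O(\bar S)$.
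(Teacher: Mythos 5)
Your final argument is essentially the paper's proof: the same decomposition of $d_N-\bar d_N$ into a total-edge-mass fluctuation piece and a per-vertex degree fluctuation piece, with $|S-\bar S|\lesssim\sqrt{\bar S\log n}$ by Bernstein and $\sum_v(d_v-\bar d_v)^2\lesssim \bar S\log n$ — the paper gets the latter from the per-vertex Bernstein bound $|d_v-\bar d_v|\lesssim\sqrt{\bar d_v\log n}$ (the intermediate estimate inside the proof of Lemma~\ref{lem:degreeconv}, not its weaker stated form) plus a union bound, which is exactly the second-moment fix you correctly identify in your last paragraph. The only caveat is that plain Markov on the quadratic does not yield the paper's notion of ``with high probability'' (probability $1-n^{-c}$), so you should rely on the union-bounded Bernstein estimate or a Hanson--Wright-type bound rather than Markov for that step.
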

\begin{proof}
Let $d_N' = (\frac{d_v}{\sum_{u\in V}\bar{d}_u})_{v\in V}$. Then,
\begin{align*}
\|d_{N} - \bar{d}_{N}\|_2 \leq \|d_N -d_N'\|_2 + \|d_N' - \bar{d}_{N}\|_2.
\end{align*}
We separately establish bounds on the two terms of the sum. First,
\begin{align*}
&\|d_N -d_N'\|_2  
= (\sum_{v\in V} d_v^2)^{\frac{1}{2}} \frac{|\sum_{u,u'\in V} (A_{uu'} - p_{uu'})|}{\sum_{u\in V}d_{u}\sum_{u\in V}\bar{d}_{u}} \\
&\stackrel{a)}{\lesssim} (\sum_{v\in V} d_v^2 )^{\frac{1}{2}} \frac{\sqrt{\sum_{u\in V}\bar{d}_{u}\log n }}{\sum_{u\in V}d_{u}\sum_{u\in V}\bar{d}_{u}} \stackrel{b)}{\leq} \sqrt{\frac{\log n}{\sum_{u\in V}\bar{d}_{u}}} \leq \sqrt{\frac{\log n}{n \bar{d}_{\min}}}\, w.h.p.,
\end{align*}
where $a)$ follows from Bernstein's inequality and $b)$ from Cauchy's inequality. 
Second,
\begin{align*}
\|d_N' - \bar{d}_{N}\|_2 = \frac{[\sum_{v\in V} (d_v-\bar{d}_v)^2]^{\frac{1}{2}}}{\sum_{u\in V}\bar{d}_{u}} \stackrel{a)}{\lesssim} \frac{(\sum_{v\in V}\bar{d}_v \log n)^{\frac{1}{2}} }{\sum_{u\in V}\bar{d}_{u}}\leq \sqrt{\frac{\log n}{n \bar{d}_{\min}}}
\end{align*}
$w.h.p.$, where $a)$ is a consequence of Lemma~\ref{lem:degreeconv}. Combining the two above results establishes the claim. 
\end{proof}
\vspace{-0.2cm}
\begin{lemma}\label{lem:degreemat}
Let $d_N^{\frac{1}{2}}$ be the vector obtained by taking the square root of the elements in the vector $d_N$. If $\bar{d}_{\min} = \omega(\log n)$, then 
\begin{align*}
\|d_N^{\frac{1}{2}}(d_N^{\frac{1}{2}})^T - \bar{d}_N^{\frac{1}{2}}(\bar{d}_N^{\frac{1}{2}})^T\|_2 \lesssim \sqrt{\frac{\log(n)}{\bar{d}_{\min}}}\, w.h.p. 
\end{align*}
\end{lemma}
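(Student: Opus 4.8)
\textbf{Proof proposal for Lemma~\ref{lem:degreemat}.}

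The plan is to bound the spectral norm of the rank-at-most-two difference $d_N^{\frac{1}{2}}(d_N^{\frac{1}{2}})^T - \bar{d}_N^{\frac{1}{2}}(\bar{d}_N^{\frac{1}{2}})^T$ by a standard telescoping (add-and-subtract) argument, reducing everything to quantities already controlled by Lemma~\ref{lem:degreeconv} and Lemma~\ref{lem:statub}. Writing $u = d_N^{\frac{1}{2}}$ and $\bar{u} = \bar{d}_N^{\frac{1}{2}}$ for brevity, I would decompose
\begin{align*}
uu^T - \bar{u}\bar{u}^T = (u - \bar{u})u^T + \bar{u}(u-\bar{u})^T,
\end{align*}
so that by the triangle inequality and submultiplicativity of the spectral norm (recall that for vectors $\|ab^T\|_2 = \|a\|_2\|b\|_2$),
\begin{align*}
\|uu^T - \bar{u}\bar{u}^T\|_2 \leq \|u - \bar{u}\|_2\,\|u\|_2 + \|\bar{u}\|_2\,\|u - \bar{u}\|_2.
\end{align*}

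The remaining task is to control the three factors. First, $\|\bar{u}\|_2^2 = \sum_{v} \bar{d}_{N,v} = 1$ since $\bar{d}_N$ is a probability vector, and similarly $\|u\|_2^2 = \sum_v d_{N,v} = 1$; hence $\|u\|_2 = \|\bar{u}\|_2 = 1$ exactly. So the whole bound collapses to $\|uu^T - \bar{u}\bar{u}^T\|_2 \leq 2\|d_N^{\frac{1}{2}} - \bar{d}_N^{\frac{1}{2}}\|_2$, and it suffices to show $\|d_N^{\frac{1}{2}} - \bar{d}_N^{\frac{1}{2}}\|_2 \lesssim \sqrt{\log n / \bar{d}_{\min}}$ w.h.p. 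For this I would use the elementary inequality $|\sqrt{a} - \sqrt{b}| = |a-b|/(\sqrt{a}+\sqrt{b}) \leq |a-b|/\sqrt{b}$ componentwise, giving
\begin{align*}
\|d_N^{\frac{1}{2}} - \bar{d}_N^{\frac{1}{2}}\|_2^2 = \sum_{v\in V} \left(\sqrt{d_{N,v}} - \sqrt{\bar{d}_{N,v}}\right)^2 \leq \sum_{v\in V} \frac{(d_{N,v} - \bar{d}_{N,v})^2}{\bar{d}_{N,v}}.
\end{align*}
Since $\bar{d}_{N,v} = \bar{d}_v/\sum_u \bar{d}_u \geq \bar{d}_{\min}/\sum_u \bar{d}_u$, the denominator is bounded below uniformly, so the right-hand side is at most $\frac{\sum_u \bar{d}_u}{\bar{d}_{\min}}\|d_N - \bar{d}_N\|_2^2$. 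Applying Lemma~\ref{lem:statub}, which gives $\|d_N - \bar{d}_N\|_2^2 \lesssim \frac{\log n}{n\bar{d}_{\min}}$ w.h.p., and using $\sum_u \bar{d}_u \leq n\bar{d}_{\max}$, yields $\|d_N^{\frac{1}{2}} - \bar{d}_N^{\frac{1}{2}}\|_2^2 \lesssim \frac{\bar{d}_{\max}\log n}{\bar{d}_{\min}^2}$, which is slightly weaker than claimed. To get the sharper bound I would instead invoke Lemma~\ref{lem:degreeconv} with $b=\tfrac12$ directly: it gives $\max_v |d_v^{1/2}/\bar{d}_v^{1/2} - 1| \lesssim \sqrt{\log n/\bar{d}_{\min}}$ w.h.p., hence $|\sqrt{d_{N,v}} - \sqrt{\bar{d}_{N,v}}| = \sqrt{\bar{d}_{N,v}}\,|d_v^{1/2}/\bar{d}_v^{1/2} - 1| \lesssim \sqrt{\bar{d}_{N,v}}\sqrt{\log n/\bar{d}_{\min}}$, and squaring and summing over $v$ (using $\sum_v \bar{d}_{N,v} = 1$) gives exactly $\|d_N^{\frac{1}{2}} - \bar{d}_N^{\frac{1}{2}}\|_2 \lesssim \sqrt{\log n/\bar{d}_{\min}}$ w.h.p.

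I expect the only subtle point to be making sure the normalization by $\sum_u d_u$ versus $\sum_u \bar{d}_u$ does not introduce an uncontrolled factor when passing between $d_N^{1/2}$ and the ratio $d_v^{1/2}/\bar{d}_v^{1/2}$; this is why the cleanest route is the second one above, which keeps $\bar{d}_{N,v}$ (a deterministic probability weight) as the multiplier throughout and never needs to compare the two normalizing sums except through the already-established Lemma~\ref{lem:statub}/Lemma~\ref{lem:degreeconv}. Everything else is a one-line application of $\|ab^T\|_2 = \|a\|_2\|b\|_2$ and the fact that both $d_N$ and $\bar{d}_N$ have unit $\ell_1$-norm, hence unit-norm square roots in $\ell_2$.
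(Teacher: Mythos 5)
Your proposal is correct and follows the same overall architecture as the paper's proof — reduce the rank-two matrix difference to $2\|d_N^{1/2}-\bar{d}_N^{1/2}\|_2$, then bound that vector difference using degree concentration — but both steps are executed differently. For the reduction, you use the decomposition $uu^T-\bar{u}\bar{u}^T=(u-\bar{u})u^T+\bar{u}(u-\bar{u})^T$ together with $\|ab^T\|_2=\|a\|_2\|b\|_2$ and $\|u\|_2=\|\bar{u}\|_2=1$; the paper instead bounds the spectral norm by the Frobenius norm and computes $\mathrm{trace}[(uu^T-\bar{u}\bar{u}^T)^2]=2-2(u^T\bar{u})^2$ before converting to $\|u-\bar{u}\|_2$. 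Your route is more elementary and arguably cleaner; both give the same constant up to $\lesssim$. For the second step, you correctly discard the route through Lemma~\ref{lem:statub} (which, as you note, loses a factor $\sqrt{\bar{d}_{\max}}$) and instead lean on Lemma~\ref{lem:degreeconv} with $b=\tfrac12$, whereas the paper reruns Bernstein's inequality from scratch via the intermediate vector $d_N'^{1/2}=\bigl(d_v^{1/2}/\sqrt{\sum_u\bar{d}_u}\bigr)_{v}$.

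One point needs tightening: the displayed identity $|\sqrt{d_{N,v}}-\sqrt{\bar{d}_{N,v}}|=\sqrt{\bar{d}_{N,v}}\,|d_v^{1/2}/\bar{d}_v^{1/2}-1|$ is not an identity, because $d_{N,v}$ is normalized by the random sum $\sum_u d_u$ while $\bar{d}_{N,v}$ is normalized by $\sum_u\bar{d}_u$; the correct ratio is $\frac{\sqrt{d_v}}{\sqrt{\bar{d}_v}}\cdot\sqrt{\frac{\sum_u\bar{d}_u}{\sum_u d_u}}$. You flag this at the end but then assert the second route ``never needs to compare the two normalizing sums,'' which is not quite so. The fix is exactly the paper's device: insert the intermediate vector $d_N'^{1/2}$ and bound $\|d_N^{1/2}-d_N'^{1/2}\|_2$ separately. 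Bernstein gives $\bigl|\sum_u d_u/\sum_u\bar{d}_u-1\bigr|\lesssim\sqrt{\log n/(n\bar{d}_{\min})}$ w.h.p., so this extra term is $\lesssim\sqrt{\log n/(n\bar{d}_{\min})}$ and is dominated by the main term $\sqrt{\log n/\bar{d}_{\min}}$. With that one-line patch your argument is complete.
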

\vspace{-0.2cm}
\begin{proof}
First, we have 
\begin{align}
&\|d_N^{\frac{1}{2}}(d_N^{\frac{1}{2}})^T - \bar{d}_N^{\frac{1}{2}}(\bar{d}_N^{\frac{1}{2}})^T\|_2 \leq \sqrt{trace[(d_N^{\frac{1}{2}}(d_N^{\frac{1}{2}})^T - \bar{d}_N^{\frac{1}{2}}(\bar{d}_N^{\frac{1}{2}})^T)^2]} \nonumber \\
= &\sqrt{2 - 2[(d_N^{\frac{1}{2}})^T\bar{d}_N^{\frac{1}{2}}]^2} \stackrel{a)}{\leq}  \sqrt{4(1- (d_N^{\frac{1}{2}})^T\bar{d}_N^{\frac{1}{2}})}= 2 \|d_N^{\frac{1}{2}} - \bar{d}_N^{\frac{1}{2}}\|_2, \nonumber 
\end{align}
where $a)$ follows from $1+(d_N^{\frac{1}{2}})^T\bar{d}_N^{\frac{1}{2}} \leq 2$. To bound $\|d_N^{\frac{1}{2}} - \bar{d}_N^{\frac{1}{2}}\|_2$, let $d_N^{'\frac{1}{2}} = (\frac{d_v^{\frac{1}{2}}}{\sqrt{\sum_{u\in V}\bar{d}_u}})_{v\in V}$. Then,
\begin{align*}
\|d_{N}^{\frac{1}{2}} - \bar{d}_{N}^{\frac{1}{2}}\|_2 \leq \|d_N^{\frac{1}{2}} -d_N^{'\frac{1}{2}}\|_2 + \|d_N^{'\frac{1}{2}} - \bar{d}_{N}^{\frac{1}{2}}\|_2.
\end{align*}
We establish bounds for the two terms as:
\begin{align*}
&\|d_N^{\frac{1}{2}} -d_N^{'\frac{1}{2}}\|_2 = (\sum_{v\in V} d_v)^{\frac{1}{2}} |\frac{1}{\sqrt{\sum_{u\in V}d_{u}}} - \frac{1}{\sqrt{\sum_{u\in V}\bar{d}_{u}}}| \\
&\stackrel{a)}{\lesssim} (\sum_{v\in V} d_v)^{\frac{1}{2}} \frac{|\sqrt{\sum_{u\in V}\bar{d}_{u}}- \sqrt{\sum_{u\in V}\bar{d}_{u}\pm c\sqrt{\sum_{u\in V}\bar{d}_{u}\log n}}|}{\sqrt{\sum_{u\in V}d_{u}\sum_{u\in V}\bar{d}_{u}}} \\
&\stackrel{b)}{\lesssim} (\sum_{v\in V} d_v )^{\frac{1}{2}} \frac{\sqrt{\log n }}{\sqrt{\sum_{u\in V}d_{u}\sum_{u\in V}\bar{d}_{u}}} = \sqrt{\frac{\log n}{\sum_{u\in V}\bar{d}_{u}}}\leq \sqrt{\frac{\log n}{n \bar{d}_{\min}}},
\end{align*}
where $a)$ is a consequence of Bernstein's inequality while $b)$ only includes the dominant term, and 
\begin{align*}
&\|d_N^{'\frac{1}{2}} - \bar{d}_{N}^{\frac{1}{2}}\|_2 \leq \frac{[\sum_{v\in V} (\sqrt{d_v}-\sqrt{\bar{d}_v})^2]^{\frac{1}{2}}}{\sqrt{\sum_{u\in V}\bar{d}_{u}}} \\
&\stackrel{a)}{\lesssim} \frac{[\sum_{v\in V} (\sqrt{\bar{d}_v \pm \sqrt{\bar{d}_v \log n}}-\sqrt{\bar{d}_v})^2]^{\frac{1}{2}} }{\sqrt{\sum_{u\in V}\bar{d}_{u}}}\\
&\stackrel{b)}{\lesssim} \frac{(\sum_{v\in V} \log n )^{\frac{1}{2}} }{\sqrt{\sum_{u\in V}\bar{d}_{u}}} \leq \sqrt{\frac{\log n}{\bar{d}_{\min}}},
\end{align*}
where $a)$ is again a consequence of Bernstein's inequality while $b)$ only includes the dominant term. Hence, 
we have $\|d_N^{\frac{1}{2}} - \bar{d}_N^{\frac{1}{2}}\|_2 \lesssim \sqrt{\frac{\log n}{\bar{d}_{\min}}} \,  w.h.p.$, as claimed.
\end{proof}
The next result can be obtained by invoking Theorem 2 of~\cite{chung2011spectra}. 
\begin{lemma}\label{lem:lapub}
If $\bar{d}_{\min} = \omega(\log n)$, then there exists some constant $C_4$ such that 
\begin{align*}
\|R - \bar{R}\|_2 \leq C_4\sqrt{\frac{\log n}{\bar{d}_{\min}}}\,\, w.h.p. 
\end{align*}
Moreover, recall that $\bar{\lambda} =  \max\{|\bar{\lambda}_2|, |\bar{\lambda}_n|\}$ and let $\lambda = \max\{|\lambda_2|, |\lambda_n|,$\\$|\bar{\lambda}_2|, |\bar{\lambda}_n|\}$. Using Weyl's Theorem~\cite{weyl1912asymptotische}, one has $\lambda \leq \bar{\lambda} + C_4\sqrt{\frac{\log n}{\bar{d}_{\min}}}\,  w.h.p.$
\end{lemma}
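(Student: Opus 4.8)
The plan is to reduce the first claim to a concentration bound for the normalized (Randi\'{c}) adjacency matrix of an edge-independent random graph, which is essentially Theorem~2 of~\cite{chung2011spectra}, to patch the (at most) remaining gap between the expected-degree normalization used there and the empirical-degree normalization in $R$ using Lemma~\ref{lem:degreeconv}, and then to deduce the eigenvalue statement from Weyl's inequality together with the observation that $W$ and $R$ (resp.\ $\bar W$ and $\bar R$) are similar.

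\textbf{The concentration estimate.} I would split $R - \bar R = \bar D^{-1/2}(A-\bar A)\bar D^{-1/2} + \big(D^{-1/2}AD^{-1/2} - \bar D^{-1/2}A\bar D^{-1/2}\big)$. The first summand is a sum of independent, mean-zero, symmetric matrices, one per unordered pair $uv$ (self-loops included), of the form $\bar d_u^{-1/2}\bar d_v^{-1/2}(A_{uv}-p_{uv})(e_ue_v^T + e_ve_u^T)$. Each such matrix has spectral norm at most $1/\bar d_{\min}$, and the matrix variance is diagonal with entries $\sum_u p_{uv}(1-p_{uv})/(\bar d_u\bar d_v) \le 1/\bar d_{\min}$, hence operator norm at most $1/\bar d_{\min}$. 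Matrix Bernstein — which is the content of Theorem~2 of~\cite{chung2011spectra} — then gives $\|\bar D^{-1/2}(A-\bar A)\bar D^{-1/2}\|_2 \lesssim \sqrt{\log n/\bar d_{\min}} + \log n/\bar d_{\min}$ w.h.p., and the first term dominates because $\bar d_{\min} = \omega(\log n)$.

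\textbf{Correcting the normalization.} For the second summand I would write $D^{-1/2} = \bar D^{-1/2}(I+E)$ with $E$ diagonal, $E_{vv} = \sqrt{\bar d_v/d_v}-1$; since $d_v/\bar d_v \to 1$ uniformly w.h.p., Lemma~\ref{lem:degreeconv} with $b = 1/2$ (after a harmless rearrangement) gives $\|E\|_2 = \max_v|E_{vv}| \lesssim \sqrt{\log n/\bar d_{\min}}$ w.h.p. As $E$ is diagonal, $D^{-1/2}AD^{-1/2} - \bar D^{-1/2}A\bar D^{-1/2} = E\tilde R + \tilde R E + E\tilde R E$ with $\tilde R := \bar D^{-1/2}A\bar D^{-1/2}$, so its norm is at most $(2\|E\|_2 + \|E\|_2^2)\|\tilde R\|_2$; and on the event of the previous step $\|\tilde R\|_2 \le \|\bar R\|_2 + \|\tilde R - \bar R\|_2 = 1 + o(1)$, using that $\bar R$ is similar to the column-stochastic matrix $\bar W$, hence $\|\bar R\|_2 = 1$. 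Intersecting the two high-probability events and collecting constants yields $\|R - \bar R\|_2 \le C_4\sqrt{\log n/\bar d_{\min}}$. This normalization-correction step is the one delicate point, because $E$ is correlated with $A$ and one cannot re-use independence; the remedy is simply to bound norms deterministically on the intersection of the two good events rather than in expectation.

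\textbf{Eigenvalue bound.} Finally, since $W = D^{1/2}RD^{-1/2}$ and $\bar W = \bar D^{1/2}\bar R\bar D^{-1/2}$ are similarity transformations, $W$ and $R$ (resp.\ $\bar W$ and $\bar R$) have identical eigenvalues, so Weyl's inequality applied to the symmetric matrices $R,\bar R$ gives $|\lambda_i| = |\lambda_i(R)| \le |\lambda_i(\bar R)| + \|R - \bar R\|_2 = |\bar\lambda_i| + \|R - \bar R\|_2$ for every $i$. Specializing to $i=2$ and $i=n$ and combining with the trivial bound $\bar\lambda \le \bar\lambda + \|R - \bar R\|_2$ yields $\lambda \le \bar\lambda + C_4\sqrt{\log n/\bar d_{\min}}$ w.h.p. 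The substantive work — the matrix concentration estimate — is imported from~\cite{chung2011spectra}; the rest is bookkeeping, chiefly in the normalization-correction step.
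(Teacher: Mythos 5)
Your proof is correct and follows essentially the same route as the paper, which simply invokes Theorem~2 of~\cite{chung2011spectra} for the bound on $\|R-\bar{R}\|_2$ and Weyl's theorem for the eigenvalue statement; your decomposition into an expected-degree-normalized matrix-Bernstein term plus a diagonal degree-correction term handled on the intersection of high-probability events is precisely the argument behind that cited theorem. One minor attribution point: Theorem~2 of~\cite{chung2011spectra} is the full normalized-Laplacian concentration statement itself (equivalent, up to sign, to the claimed bound on $\|R-\bar{R}\|_2$), not the matrix Bernstein inequality, which appears as a separate preliminary result there.
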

Now, let us turn our attention to proving Lemma~\ref{lem:ub}. First, let $W_N = W - d_N \mathbf{1}^{T}$ and $\bar{W}_N = \bar{W} - \bar{d}_N \mathbf{1}^{T}$. It is easy to check that $W_N(x - d_N) = W_N x  = Wx - d_N$. Moreover, as $W_Nd_N = d_N$, we have 
\begin{align}\label{ineq:complexub1}
W_N^kx^{(0)} - d_N= W_N^k(x^{(0)} - d_N) = W^kx^{(0)} - d_N = x^{(k)}- d_N. 
\end{align}
Let $d_N^{\frac{1}{2}}$ be the vector obtained by taking the square root of the elements in the vector $d_N$ and define $R_N = R - d_N^{\frac{1}{2}}(d_N^{\frac{1}{2}})^T$. Then, $W_N = D^{\frac{1}{2}}R_ND^{-\frac{1}{2}}$. Note that $R_N$ essentially equals the Randi\'{c} matrix  with its first principle component removed. Then, 
\begin{align}\label{ineq:complexub2}
\|R_N^i\|_2 \leq \max\{|\lambda_2|, |\lambda_n|\}^i.
\end{align}
Furthermore, 
\begin{align}
&\|W_N^kx^{(0)} - \bar{W}_N^kx^{(0)} \|_2/\|x^{(0)}\|_2  \nonumber \\
\leq& \|W_N^k - \bar{W}_N^k\|_2 = \|D^{\frac{1}{2}}R_N^{k}D^{-\frac{1}{2}} - \bar{D}^{\frac{1}{2}}\bar{R}_N^{k}\bar{D}^{-\frac{1}{2}}\|_2  \nonumber\\
\stackrel{a)}{\leq} &\|(D^{\frac{1}{2}} - \bar{D}^{\frac{1}{2}})R_N^kD^{-\frac{1}{2}}\|_2+ \|D^{\frac{1}{2}}R_N^k(D^{-\frac{1}{2}}-\bar{D}^{-\frac{1}{2}})\|_2 + \sum_{i=0}^{k-1}\|\bar{D}^{\frac{1}{2}}R_N^{k-1-i}(R_N - \bar{R}_N)\bar{R}_N^{i}\bar{D}^{-\frac{1}{2}} \|_2 \nonumber \\
\stackrel{b)}{\lesssim}& \sqrt{\frac{\bar{d}_{\max}}{\bar{d}_{\min}}}\lambda^{k-1}(2\|I - D^{\frac{1}{2}}\bar{D}^{-\frac{1}{2}}\|_2 + k\|R_N - \bar{R}_N\|_2) \nonumber \\
\leq& \sqrt{\frac{\bar{d}_{\max}}{\bar{d}_{\min}}}\lambda^{k-1}(2\|I - D^{\frac{1}{2}}\bar{D}^{-\frac{1}{2}}\|_2 + k \|d_N^{\frac{1}{2}}(d_N^{\frac{1}{2}})^T - \bar{d}_N^{\frac{1}{2}}(\bar{d}_N^{\frac{1}{2}})^T\|_2 + k\|R - \bar{R}\|_2)  \nonumber \\
\stackrel{c)}{\lesssim} & k \lambda^{k-1} \sqrt{\frac{\bar{d}_{\max}\log n}{\bar{d}^2_{\min}}} \leq k \left(\bar{\lambda} + C_4\sqrt{\frac{\log n}{\bar{d}_{\min}}}\right)^{k-1} \sqrt{\frac{\bar{d}_{\max}\log n}{\bar{d}^2_{\min}}} \label{ineq:complexub3}
\end{align}
where $a)$ is a consequence of the triangle inequality, $b)$ is based on inequality~\eqref{ineq:complexub2} and Lemma~\ref{lem:degreeconv} that guarantees $d_{\max} \lesssim \bar{d}_{\max}$ and $\bar{d}_{\min} \lesssim d_{\min}$, and $c)$ follows from Lemma~\ref{lem:degreeconv}, Lemma~\ref{lem:degreemat} and Lemma~\ref{lem:lapub}. 
To prove the first inequality, we observe that 
\begin{align*}
&\|x^{(k)} - \bar{x}^{(k)}\|_2/\|x^{(0)}\|_2 \\
\leq &  \|d_{N} - \bar{d}_{N}\|_2/\|x^{(0)}\|_2 + \|(x^{(k)} - d_N) - (\bar{x}^{(k)} - \bar{d}_N)\|_2/\|x^{(0)}\|_2\\
 \stackrel{a)}{=}&  \|d_{N} - \bar{d}_{N}\|_2/\|x^{(0)}\|_2  +  \|W_N^kx^{(0)} - d_N - (\bar{W}_N^kx^{(0)}-  \bar{d}_N)\|_2/\|x^{(0)}\|_2 \\
 \leq & 2\|d_{N} - \bar{d}_{N}\|_2/\|x^{(0)}\|_2 +  \|W_N^kx^{(0)} - \bar{W}_N^kx^{(0)} \|_2/\|x^{(0)}\|_2\\
\stackrel{b)}{\lesssim} & \sqrt{\frac{\log n}{n\bar{d}_{\min}}}\frac{1}{\|x^{(0)}\|_2} +  k\left(\bar{\lambda} + C_4\sqrt{\frac{\log n}{\bar{d}_{\min}}}\right)^{k-1} \sqrt{\frac{\bar{d}_{\max}\log n}{\bar{d}^2_{\min}}},
\end{align*}
where $a)$ is a consequence of~\eqref{ineq:complexub1} and $b)$ follows based on Lemma~\ref{lem:statub} and the inequality~\eqref{ineq:complexub3}. This proves the first inequality in Lemma~\ref{lem:ub}. Since we have
$\|pr(\gamma, x^{(0)}) - \bar{pr}(\gamma, x^{(0)})\|_2 \leq \sum_{k=0}^{\infty} \gamma_k \|x^{(k)} - \bar{x}^{(k)}\|_2,$
the bound for $\|pr(\gamma, x^{(0)}) - \bar{pr}(\gamma, x^{(0)})\|_2$ may be derived by using an analysis similar to the one applied to $\|x^{(k)} - \bar{x}^{(k)}\|_2$. 

Lemma~\ref{lem:var} may be established in a similar manner. However, due to degree normalization, one can remove the dependence on $\|d_{N} - \bar{d}_{N}\|_2$. Recall once again the definition of the DNLPs $z^{(k)}$ and the normalized degree matrix $D_N$, which allow us to write $z^{(k)} = D_N^{-1}x^{(k)}$. 
The LHS of the result in Lemma~\ref{lem:var} may be rewritten as 
\begin{align*}
&\|D_N^{-1}x^{(k)} - \bar{D}_N^{-1}\bar{x}^{(k)}\|_2/\|\bar{D}_N^{-1}x^{(0)}\|_2\\
= & \|D_N^{-1}(x^{(k)} - d_N)- \bar{D}_N^{-1}(\bar{x}^{(k)}- \bar{d}_N)\|_2/\|\bar{D}_N^{-1}x^{(0)}\|_2 \\
 = &\|D_N^{-1}(W_N^kx^{(0)} - d_N)- \bar{D}_N^{-1}(\bar{W}_N^k\bar{x}^{(0)}- \bar{d}_N)\|_2/\|\bar{D}_N^{-1}x^{(0)}\|_2 \\
  \stackrel{a)}{=} &\|D_N^{-1}W_N^kx^{(0)} - \bar{D}_N^{-1}\bar{W}_N^k\bar{x}^{(0)}\|_2/\|\bar{D}_N^{-1}x^{(0)}\|_2 \\
 = &\|\frac{\sum_{v\in V} d_v}{\sum_{v\in V} \bar{d}_v}D^{-\frac{1}{2}}R_N^kD^{-\frac{1}{2}}\bar{D} - \bar{D}^{-\frac{1}{2}}\bar{R}_N^k\bar{D}^{\frac{1}{2}}\|_2 \\
 \leq &  \left|\frac{\sum_{v\in V} d_v}{\sum_{v\in V} \bar{d}_v}-1\right|\|D^{-\frac{1}{2}}R_N^kD^{-\frac{1}{2}}\bar{D} \|_2+ \|D^{-\frac{1}{2}} - \bar{D}^{-\frac{1}{2}}\|_2\|R_N^k\|_2\|D^{-\frac{1}{2}}\bar{D}\|_2\\
 &+\|\bar{D}^{-\frac{1}{2}}\|_2\|R_N^k\|_2\|D^{-\frac{1}{2}}\bar{D}-\bar{D}^{\frac{1}{2}}\|_2  +  \|\bar{D}^{-\frac{1}{2}}(R_N^k-\bar{R}_N^k)\bar{D}^{\frac{1}{2}}\|_2  \\
  \stackrel{b)}{\leq} &  \sqrt{\frac{\bar{d}_{\max}}{\bar{d}_{\min}}}\lambda^{k-1} 3\|I - \bar{D}^{-\frac{1}{2}}D^{\frac{1}{2}}\|_2 +  \sum_{i=0}^{k-1}\|\bar{D}^{\frac{1}{2}}R_N^{k-1-i}(R_N - \bar{R}_N)\bar{R}_N^{i}\bar{D}^{-\frac{1}{2}} \|_2 \\ 
   \stackrel{c)}{\lesssim} & k\left(\bar{\lambda} + C_4\sqrt{\frac{\log n}{\bar{d}_{\min}}}\right)^{k-1}{\frac{ \sqrt{\bar{d}_{\max}\log n}}{\bar{d}_{\min}}},
\end{align*}
where $a)$ follows from~\eqref{ineq:complexub1}, $b)$ is a consequence of Lemma~\ref{lem:degreeconv} and c) is based on the same arguments used to establish~\eqref{ineq:complexub3}.

\subsection{Proof of Theorem~\ref{thm:rwconv}}
First, we note that $\|x^{(k)}- \bar{x}^{(k)}\|_1 \leq \sqrt{n}\|x^{(k)}- \bar{x}^{(k)}\|_2$. Based on Lemma~\ref{lem:ub}, 
it is easy to see that if $\|x^{(0)}\|_2 = O(\frac{1}{\sqrt{n}})$, $\frac{\bar{d}_{\max} \log n}{\bar{d}_{\min}^2} =o(1)$, one has $\|x^{(k^{(n)})} - \bar{x}^{(k^{(n)})}\|_1  = o(1) \; w.h.p.$

If $\bar{\lambda} < 1 - c$, $\frac{c}{3} > C_4\sqrt{\frac{\log n }{\bar{d}_{\min}}}$ and $k^{(n)}\geq \frac{\log n + \log \frac{\bar{d}_{\max}}{\bar{d}_{\min}}}{c}$, then for large enough $n$,
\begin{align*}
k^{(n)}\left(\bar{\lambda} + C_4\sqrt{\frac{\log n}{\bar{d}_{\min}}} \right)^{k^{(n)}-1} \leq k^{(n)}\left(1 - \frac{2c}{3}\right)^{k^{(n)}-1}\leq \frac{1}{c} \left(\frac{n \bar{d}_{\max}}{\bar{d}_{\min}}\right)^{-\frac{1}{2}}.
\end{align*}
\normalsize{In this case, we also have $\|x^{(k^{(n)})} - \bar{x}^{(k^{(n)})}\|_1  = o(1) \; w.h.p.$}

\subsection{Proof of Theorem~\ref{thm:prub}}
The result in 1) is a consequence of Lemma~\ref{lem:ub},
\begin{align*}
\|pr(\gamma^{(n)}, x^{(0)}) - \bar{pr}(\gamma^{(n)}, x^{(0)})\|_1 \leq \sqrt{n}\|pr(\gamma^{(n)}, x^{(0)}) - \bar{pr}(\gamma^{(n)}, x^{(0)})\|_2.
\end{align*}
\normalsize{Suppose that $\sum_{k\geq 1}\gamma_k^{(n)}\leq B$ for large enough $n$. Then,}
\begin{align*}
\sum_{k\geq 1}\gamma_k k \left(\bar{\lambda} + C_4 \sqrt{\frac{\log n}{\bar{d}_{\min }}}\right)^{k-1} \leq B\max_{k\geq 1} k \left(1 - \frac{c}{2}\right)^{k-1} \leq \frac{B}{\frac{2- c}{2} \ln  \frac{2}{2-c}} = O(1).
\end{align*}
\normalsize{As} $\gamma_0^{(n)} \geq C_5 \sum_{k}\gamma_k^{(n)} $, \normalsize{we have} $\|\bar{pr}(\gamma^{(n)}, x^{(0)})\|_2 \geq C_5 \|x^{(0)}\|_2$. \normalsize{Hence,}
\begin{align*}
\frac{\|pr(\gamma^{(n)}, x^{(0)}) - \bar{pr}(\gamma^{(n)}, x^{(0)})\|_2}{\|\bar{pr}(\gamma^{(n)}, x^{(0)})\|_2}  \leq C_5\frac{\|pr(\gamma^{(n)}, x^{(0)}) - \bar{pr}(\gamma^{(n)}, x^{(0)})\|_2}{\|x^{(0)}\|_2}
\end{align*}
\normalsize{Lemma~\ref{lem:ub} ensures that the result in 2) is met. The result in 3) is again a consequence of Lemma~\ref{lem:ub} because}
\begin{align*}
\|pr(\gamma^{(n)}, x^{(0)}) - \bar{pr}(\gamma^{(n)}, x^{(0)})\|_1 \leq \sqrt{n}\|pr(\gamma^{(n)}, x^{(0)}) - \bar{pr}(\gamma^{(n)}, x^{(0)})\|_2,
\end{align*}
\normalsize{and for large enough $n$,  $\bar{\lambda} + C_4\sqrt{\log n/\bar{d}_{\min}} \leq\bar{\lambda} + C_6.$}

\section{Derivation of the Means}\label{app:alg}
For notational simplicity, we let $\beta_1 = \frac{n_1p_1}{n_1p_1 + n_0 q} $ and $\beta_0 =  \frac{n_0p_0 }{n_1q + n_0 p_0}$. Furthermore, we use $P_i^{(k)} = \sum_{v\in C_i} \bar{x}_v^{(k)}$, $i\in \{0,1\}$ to denote the sum of $k$-step LPs within the block $C_i$. 
Due to the symmetry, $\{P_i^{(k)}\}_{i\in\{0,1\}}$ may be obtained from the following recursion, with initial conditions $[P_1^{(0)},\,P_0^{(0)}] = [1,\,0]$: 
\begin{align*}
\left[ \begin{array}{c}
P_1^{(k)} \\
P_0^{(k)} 
\end{array}
\right] = W'\left[ \begin{array}{c}
P_1^{(k-1)} \\
P_0^{(k-1)}
\end{array}
\right], \; \text{where}\; W'=\left[ \begin{array}{cc}
\beta_1 &  1- \beta_0\\
1-\beta_1 & \beta_0
\end{array}
\right].
\end{align*}
Consequently, $\mu_1^{(k)} = \bar{z}_v^{(k)}= \frac{\bar{x}_v^{(k)}}{\bar{d}_v} = \frac{P_1^{(k)}/n_1}{n_1p_1 + n_0 q}$ and $\mu_0^{(k)} = \frac{P_0^{(k)}/n_0}{n_0p_0 + n_1 q}$.
It is straightforward to show that the matrix $W'$ has eigenvalues $1$ and $\beta_1+ \beta_0 - 1,$ and that $\beta_1+ \beta_0 - 1$ equals $\bar{\lambda}_2$ of the mean-field random walk matrix $\bar{W}$. Combining $\mu_1^{(k)}$, $\mu_0^{(k)}$ and $\bar{\lambda}_2 = \beta_1+ \beta_0 - 1$, we arrive at the result of equation~\eqref{eq:meandiff}.


\begin{thebibliography}{10}
\providecommand{\url}[1]{#1}
\csname url@samestyle\endcsname
\providecommand{\newblock}{\relax}
\providecommand{\bibinfo}[2]{#2}
\providecommand{\BIBentrySTDinterwordspacing}{\spaceskip=0pt\relax}
\providecommand{\BIBentryALTinterwordstretchfactor}{4}
\providecommand{\BIBentryALTinterwordspacing}{\spaceskip=\fontdimen2\font plus
\BIBentryALTinterwordstretchfactor\fontdimen3\font minus
  \fontdimen4\font\relax}
\providecommand{\BIBforeignlanguage}[2]{{%
\expandafter\ifx\csname l@#1\endcsname\relax
\typeout{** WARNING: IEEEtran.bst: No hyphenation pattern has been}%
\typeout{** loaded for the language `#1'. Using the pattern for}%
\typeout{** the default language instead.}%
\else
\language=\csname l@#1\endcsname
\fi
#2}}
\providecommand{\BIBdecl}{\relax}
\BIBdecl

\bibitem{page1999pagerank}
L.~Page, S.~Brin, R.~Motwani, and T.~Winograd, ``The pagerank citation ranking:
  Bringing order to the web.'' Stanford InfoLab, Tech. Rep., 1999.

\bibitem{fortunato2010community}
S.~Fortunato, ``Community detection in graphs,'' \emph{Physics reports}, vol.
  486, no. 3-5, pp. 75--174, 2010.

\bibitem{whang2013overlapping}
J.~J. Whang, D.~F. Gleich, and I.~S. Dhillon, ``Overlapping community detection
  using seed set expansion,'' in \emph{Proceedings of the 22nd ACM
  international conference on Conference on information \& knowledge
  management}.\hskip 1em plus 0.5em minus 0.4em\relax ACM, 2013, pp.
  2099--2108.

\bibitem{liben2007link}
D.~Liben-Nowell and J.~Kleinberg, ``The link-prediction problem for social
  networks,'' \emph{Journal of the American society for information science and
  technology}, vol.~58, no.~7, pp. 1019--1031, 2007.

\bibitem{massa2007trust}
P.~Massa and P.~Avesani, ``Trust-aware recommender systems,'' in
  \emph{Proceedings of the 2007 ACM conference on Recommender systems}.\hskip
  1em plus 0.5em minus 0.4em\relax ACM, 2007, pp. 17--24.

\bibitem{abbassi2007recommender}
Z.~Abbassi and V.~S. Mirrokni, ``A recommender system based on local random
  walks and spectral methods,'' in \emph{Proceedings of the 9th WebKDD and 1st
  SNA-KDD workshop}.\hskip 1em plus 0.5em minus 0.4em\relax ACM, 2007, pp.
  102--108.

\bibitem{gleich2015pagerank}
D.~F. Gleich, ``Pagerank beyond the web,'' \emph{SIAM Review}, vol.~57, no.~3,
  pp. 321--363, 2015.

\bibitem{baeza2006generalizing}
R.~Baeza-Yates, P.~Boldi, and C.~Castillo, ``Generalizing pagerank: Damping
  functions for link-based ranking algorithms,'' in \emph{Proceedings of the
  29th annual international ACM SIGIR conference on Research and development in
  information retrieval}.\hskip 1em plus 0.5em minus 0.4em\relax ACM, 2006, pp.
  308--315.

\bibitem{kleinberg2018link}
J.~Kleinberg, ``Link prediction with combinatorial structure: Block models and
  simplicial complexes,'' in \emph{Companion of the The Web Conference 2018
  (BigNet).}, 2018.

\bibitem{jeh2003scaling}
G.~Jeh and J.~Widom, ``Scaling personalized web search,'' in \emph{Proceedings
  of the 12th international conference on World Wide Web}.\hskip 1em plus 0.5em
  minus 0.4em\relax Acm, 2003, pp. 271--279.

\bibitem{chung2007heat}
F.~Chung, ``The heat kernel as the pagerank of a graph,'' \emph{Proceedings of
  the National Academy of Sciences}, vol. 104, no.~50, pp. 19\,735--19\,740,
  2007.

\bibitem{chung2009local}
------, ``A local graph partitioning algorithm using heat kernel pagerank,''
  \emph{Internet Mathematics}, vol.~6, no.~3, pp. 315--330, 2009.

\bibitem{chung1997spectral}
F.~R. Chung and F.~C. Graham, \emph{Spectral graph theory}.\hskip 1em plus
  0.5em minus 0.4em\relax American Mathematical Soc., 1997, no.~92.

\bibitem{kloster2014heat}
K.~Kloster and D.~F. Gleich, ``Heat kernel based community detection,'' in
  \emph{Proceedings of the 20th ACM SIGKDD international conference on
  Knowledge discovery and data mining}.\hskip 1em plus 0.5em minus 0.4em\relax
  ACM, 2014, pp. 1386--1395.

\bibitem{avron2015community}
H.~Avron and L.~Horesh, ``Community detection using time-dependent personalized
  pagerank,'' in \emph{Proceedings of International Conference on Machine
  Learning}, 2015, pp. 1795--1803.

\bibitem{gleich2014dynamical}
D.~F. Gleich and R.~A. Rossi, ``A dynamical system for pagerank with
  time-dependent teleportation,'' \emph{Internet Mathematics}, vol.~10, no.
  1-2, pp. 188--217, 2014.

\bibitem{jiang2017aptrank}
B.~Jiang, K.~Kloster, D.~F. Gleich, and M.~Gribskov, ``Aptrank: an adaptive
  pagerank model for protein function prediction on bi-relational graphs,''
  \emph{Bioinformatics}, vol.~33, no.~12, pp. 1829--1836, 2017.

\bibitem{berberidis2018adaptive}
D.~Berberidis, A.~Nikolakopoulos, and G.~Giannakis,
  ``\BIBforeignlanguage{English (US)}{Adaptive diffusions for scalable learning
  over graphs},'' in \emph{\BIBforeignlanguage{English (US)}{Mining and
  Learning with Graphs Workshop @ ACM KDD 2018}}, 8 2018, p.~1.

\bibitem{yin2010unified}
Z.~Yin, M.~Gupta, T.~Weninger, and J.~Han, ``A unified framework for link
  recommendation using random walks,'' in \emph{Advances in Social Networks
  Analysis and Mining (ASONAM), 2010 International Conference on}.\hskip 1em
  plus 0.5em minus 0.4em\relax IEEE, 2010, pp. 152--159.

\bibitem{kloumann2017block}
I.~M. Kloumann, J.~Ugander, and J.~Kleinberg, ``Block models and personalized
  pagerank,'' \emph{Proceedings of the National Academy of Sciences}, vol. 114,
  no.~1, pp. 33--38, 2017.

\bibitem{holland1983stochastic}
P.~W. Holland, K.~B. Laskey, and S.~Leinhardt, ``Stochastic blockmodels: First
  steps,'' \emph{Social networks}, vol.~5, no.~2, pp. 109--137, 1983.

\bibitem{bruna2013spectral}
J.~Bruna, W.~Zaremba, A.~Szlam, and Y.~LeCun, ``Spectral networks and locally
  connected networks on graphs,'' \emph{arXiv preprint arXiv:1312.6203}, 2013.

\bibitem{defferrard2016convolutional}
M.~Defferrard, X.~Bresson, and P.~Vandergheynst, ``Convolutional neural
  networks on graphs with fast localized spectral filtering,'' in
  \emph{Advances in neural information processing systems}, 2016, pp.
  3844--3852.

\bibitem{kipf2016semi}
T.~N. Kipf and M.~Welling, ``Semi-supervised classification with graph
  convolutional networks,'' \emph{arXiv preprint arXiv:1609.02907}, 2016.

\bibitem{hamilton2017inductive}
W.~Hamilton, Z.~Ying, and J.~Leskovec, ``Inductive representation learning on
  large graphs,'' in \emph{Advances in Neural Information Processing Systems},
  2017, pp. 1024--1034.

\bibitem{velivckovic2017graph}
P.~Veli{\v{c}}kovi{\'c}, G.~Cucurull, A.~Casanova, A.~Romero, P.~Lio, and
  Y.~Bengio, ``Graph attention networks,'' \emph{arXiv preprint
  arXiv:1710.10903}, 2017.

\bibitem{klicpera2018predict}
J.~Klicpera, A.~Bojchevski, and S.~G{\"u}nnemann, ``Predict then propagate:
  Graph neural networks meet personalized pagerank,'' in \emph{International
  Conference on Learning Representations}, 2019.

\bibitem{bojchevski2019pagerank}
A.~Bojchevski, J.~Klicpera, B.~Perozzi, M.~Blais, A.~Kapoor, M.~Lukasik, and
  S.~G{\"u}nnemann, ``Is pagerank all you need for scalable graph neural
  networks?'' in \emph{ACM KDD, MLG Workshop}, 2019.

\bibitem{klicpera2019diffusion}
J.~Klicpera, S.~Wei{\ss}enberger, and S.~G{\"u}nnemann, ``Diffusion improves
  graph learning,'' in \emph{Advances in Neural Information Processing
  Systems}, 2019, pp. 13\,333--13\,345.

\bibitem{leskovec2009community}
J.~Leskovec, K.~J. Lang, A.~Dasgupta, and M.~W. Mahoney, ``Community structure
  in large networks: Natural cluster sizes and the absence of large
  well-defined clusters,'' \emph{Internet Mathematics}, vol.~6, no.~1, pp.
  29--123, 2009.

\bibitem{avrachenkov2015pagerank}
K.~Avrachenkov, A.~Kadavankandy, L.~O. Prokhorenkova, and A.~Raigorodskii,
  ``Pagerank in undirected random graphs,'' in \emph{International Workshop on
  Algorithms and Models for the Web-Graph}.\hskip 1em plus 0.5em minus
  0.4em\relax Springer, 2015, pp. 151--163.

\bibitem{avrachenkov2018mean}
K.~Avrachenkov, A.~Kadavankandy, and N.~Litvak, ``Mean field analysis of
  personalized pagerank with implications for local graph clustering,''
  \emph{arXiv preprint arXiv:1806.07640}, 2018.

\bibitem{andersen2006local}
R.~Andersen, F.~Chung, and K.~Lang, ``Local graph partitioning using pagerank
  vectors,'' in \emph{Proceedings of the 47th Annual IEEE Symposium on
  Foundations of Computer Science}.\hskip 1em plus 0.5em minus 0.4em\relax IEEE
  Computer Society, 2006, pp. 475--486.

\bibitem{erdds1959random}
P.~Erd\H{o}s and A.~R\'enyi, ``On random graphs i,'' \emph{Publ. Math.
  Debrecen}, vol.~6, pp. 290--297, 1959.

\bibitem{aiello2001random}
W.~Aiello, F.~Chung, and L.~Lu, ``A random graph model for power law graphs,''
  \emph{Experimental Mathematics}, vol.~10, no.~1, pp. 53--66, 2001.

\bibitem{karrer2011stochastic}
B.~Karrer and M.~E. Newman, ``Stochastic blockmodels and community structure in
  networks,'' \emph{Physical review E}, vol.~83, no.~1, p. 016107, 2011.

\bibitem{chen2017generalized}
N.~Chen, N.~Litvak, and M.~Olvera-Cravioto, ``Generalized pagerank on directed
  configuration networks,'' \emph{Random Structures \& Algorithms}, vol.~51,
  no.~2, pp. 237--274, 2017.

\bibitem{fisher1936use}
R.~A. Fisher, ``The use of multiple measurements in taxonomic problems,''
  \emph{Annals of eugenics}, vol.~7, no.~2, pp. 179--188, 1936.

\bibitem{mossel2014belief}
E.~Mossel, J.~Neeman, and A.~Sly, ``Belief propagation, robust reconstruction
  and optimal recovery of block models,'' in \emph{Conference on Learning
  Theory}, 2014, pp. 356--370.

\bibitem{zhang2014scalable}
P.~Zhang and C.~Moore, ``Scalable detection of statistically significant
  communities and hierarchies, using message passing for modularity,''
  \emph{Proceedings of the National Academy of Sciences}, vol. 111, no.~51, pp.
  18\,144--18\,149, 2014.

\bibitem{abbe2015detection}
E.~Abbe and C.~Sandon, ``Detection in the stochastic block model with multiple
  clusters: proof of the achievability conjectures, acyclic bp, and the
  information-computation gap,'' \emph{arXiv preprint arXiv:1512.09080}, 2015.

\bibitem{kelley2012effect}
K.~Kelley and K.~J. Preacher, ``On effect size.'' \emph{Psychological methods},
  vol.~17, no.~2, p. 137, 2012.

\bibitem{tao2012topics}
T.~Tao, \emph{Topics in random matrix theory}.\hskip 1em plus 0.5em minus
  0.4em\relax American Mathematical Soc., 2012, vol. 132.

\bibitem{giles1998citeseer}
C.~L. Giles, K.~D. Bollacker, and S.~Lawrence, ``Citeseer: An automatic
  citation indexing system,'' in \emph{Proceedings of the third ACM conference
  on Digital libraries}.\hskip 1em plus 0.5em minus 0.4em\relax ACM, 1998, pp.
  89--98.

\bibitem{mccallum2000automating}
A.~K. McCallum, K.~Nigam, J.~Rennie, and K.~Seymore, ``Automating the
  construction of internet portals with machine learning,'' \emph{Information
  Retrieval}, vol.~3, no.~2, pp. 127--163, 2000.

\bibitem{namata2012query}
G.~Namata, B.~London, L.~Getoor, B.~Huang, and U.~EDU, ``Query-driven active
  surveying for collective classification,'' in \emph{10th International
  Workshop on Mining and Learning with Graphs}, 2012.

\bibitem{yang2009combining}
T.~Yang, R.~Jin, Y.~Chi, and S.~Zhu, ``Combining link and content for community
  detection: a discriminative approach,'' in \emph{Proceedings of the 15th ACM
  SIGKDD international conference on Knowledge discovery and data
  mining}.\hskip 1em plus 0.5em minus 0.4em\relax ACM, 2009, pp. 927--936.

\bibitem{sen2008collective}
P.~Sen, G.~Namata, M.~Bilgic, L.~Getoor, B.~Galligher, and T.~Eliassi-Rad,
  ``Collective classification in network data,'' \emph{AI magazine}, vol.~29,
  no.~3, p.~93, 2008.

\bibitem{leskovec2016snap}
J.~Leskovec and R.~Sosi{\v{c}}, ``Snap: A general-purpose network analysis and
  graph-mining library,'' \emph{ACM Transactions on Intelligent Systems and
  Technology (TIST)}, vol.~8, no.~1, p.~1, 2016.

\bibitem{kloumann2014community}
I.~M. Kloumann and J.~M. Kleinberg, ``Community membership identification from
  small seed sets,'' in \emph{Proceedings of the 20th ACM SIGKDD international
  conference on Knowledge discovery and data mining}.\hskip 1em plus 0.5em
  minus 0.4em\relax ACM, 2014, pp. 1366--1375.

\bibitem{li2015uncovering}
Y.~Li, K.~He, D.~Bindel, and J.~E. Hopcroft, ``Uncovering the small community
  structure in large networks: A local spectral approach,'' in
  \emph{Proceedings of the 24th international conference on world wide
  web}.\hskip 1em plus 0.5em minus 0.4em\relax International World Wide Web
  Conferences Steering Committee, 2015, pp. 658--668.

\bibitem{li2018quadratic}
P.~Li, N.~He, and O.~Milenkovic, ``Quadratic decomposable submodular function
  minimization,'' in \emph{Advances in Neural Information Processing Systems},
  2018, pp. 1062--1072.

\bibitem{ikeda2018finding}
M.~Ikeda, A.~Miyauchi, Y.~Takai, and Y.~Yoshida, ``Finding cheeger cuts in
  hypergraphs via heat equation,'' \emph{arXiv:1809.04396}, 2018.

\bibitem{li2019quadratic}
P.~Li, N.~He, and O.~Milenkovic, ``Quadratic decomposable submodular function
  minimization: Theory and practice,'' \emph{arXiv preprint arXiv:1902.10132},
  2019.

\bibitem{yin2017local}
H.~Yin, A.~R. Benson, J.~Leskovec, and D.~F. Gleich, ``Local higher-order graph
  clustering,'' in \emph{Proceedings of the 23rd ACM SIGKDD International
  Conference on Knowledge Discovery and Data Mining}.\hskip 1em plus 0.5em
  minus 0.4em\relax ACM, 2017, pp. 555--564.

\bibitem{chien2019landing}
E.~Chien, P.~Li, and O.~Milenkovic, ``Landing probabilities of random walks for
  seed-set expansion in hypergraphs,'' 2019.

\bibitem{he2015detecting}
K.~He, Y.~Sun, D.~Bindel, J.~Hopcroft, and Y.~Li, ``Detecting overlapping
  communities from local spectral subspaces,'' in \emph{Data Mining (ICDM),
  2015 IEEE International Conference on}.\hskip 1em plus 0.5em minus
  0.4em\relax IEEE, 2015, pp. 769--774.

\bibitem{chung2011spectra}
F.~Chung and M.~Radcliffe, ``On the spectra of general random graphs,''
  \emph{the electronic journal of combinatorics}, vol.~18, no.~1, p. 215, 2011.

\bibitem{weyl1912asymptotische}
H.~Weyl, ``Das asymptotische verteilungsgesetz der eigenwerte linearer
  partieller differentialgleichungen (mit einer anwendung auf die theorie der
  hohlraumstrahlung),'' \emph{Mathematische Annalen}, vol.~71, no.~4, pp.
  441--479, 1912.

\end{thebibliography}
\end{document}